\newtheorem{definition}{Definition}[section]
\newtheorem{theorem}{Theorem}[section]
\newtheorem{corollary}{Corollary}[theorem]
\newtheorem{lemma}[theorem]{Lemma}
\newtheorem{proposition}[theorem]{Proposition}
\definecolor{gray}{rgb}{92.,92.,92.}
\newcommand {\Cdot}{\vcenter{\hbox{\scalebox{1.6}{.}}}}
\newcommand {\Max}{\text{a.s.}-\max}
\title{Properties of the full replica symmetry breaking free energy functional of the Ising spin glass on random regular graph}
\date{}
\author{Francesco Concetti}
\affil{\textit{Dipartimento di Fisica, Sapienza Università di Roma, Piazzale Aldo Moro 2, I-00185 Rome, Italy}}
\begin{document}

\maketitle
\begin{abstract}
The full replica symmetry breaking free energy of the Ising spin glass on random regular graphs is given by the solutions of two auxiliary variational problems inside a global (physical) variational problem on the order parameter. In this paper, we provide a detailed study of the auxiliary variational problems. We get the self-consistency equation for the auxiliary order parameter and obtain the existence and uniqueness results. We also show that the full replica symmetry breaking free energy functional recovers the discrete replica symmetry breaking solutions, by imposing some proper conditions on the physical order parameter.
\end{abstract}
\section{Introduction}
The solution of the Sherrington and Kirkpatrick model (SK)\cite{SK1,SK2}, obtained by Parisi through the so-called replica symmetry breaking ansatz (RSB)\cite{Par1_0,Par1_1,VPM}f, represents one of the most remarkable milestones in the study of disordered and glassy systems. In the last thirty years, great efforts have been accomplished to extend the results of the fully connected models to a wider and more refined class of spin glasses.

Spin glasses on random sparse graphs (diluted spin glasses) are a class of mean-field models including the notion of nearest neighbors, which is absent in the infinite range case. Unfortunately, diluted spin glass theory turns to be a remarkably complicated challenge; in the last thirty years, little progress has been achieved.

By the \emph{cavity method}, Parisi, Mézard, and Zecchina obtained an extension of the so-called discrete RSB theory for such class of models \cite{ParMezRRG1} and proposed an algorithm that successfully provides the 1-RSB approximation of the solution \cite{ParMezRRG2}.
The manuscript \cite{MyPaper} derives the full-RSB free energy functional, through a martingale representation of the free energy and the order parameter (martingale approach). The method is based on the hierarchical random overlap structure, proposed by Aizenman, Sims and Starr \cite{ASS} and on the representation of hierarchical exchangeable random variables, introduced by Austin and Panchenko \cite{Austin,AustinPanchenko}. Indeed, mean-field spin glass models seem to be characterized by the hierarchical exchangeability of pure states \cite{PanchenkoExchange}.

In \cite{MyPaper}, the free energy is the sum of two variational problems, the so-called \emph{auxiliary variational problems}, inside a global variational problem, the so-called \emph{physical variational problem}. This approach takes inspiration from the variational representation of the Parisi functional, proposed by Chen and Auffinger \cite{ChenAuf}.

The equation of the full-RSB free energy leads to several mathematical issues. Moreover, it is worth noting that the discrete RSB solutions cannot be obtained by the free energy functional proposed in \cite{MyPaper}.

In this paper we propose a modification of the free energy functional, proposed in \cite{MyPaper}, in order to take into account also the discrete RSB solutions, and provides a detailed mathematical study of the auxiliary variational problems.

The next section is a brief review of the main results of \cite{MyPaper}. Section \ref{sec:3} provides a proper formulation of the auxiliary variational problem and fixes some notations. In Section \ref{sec:4}, we derive the self-consistency equation of the order parameter of the auxiliary variational problem. We prove that the solution of such equation provides the solution of the auxiliary variational problem and obtain the uniqueness result. The existence result is the main topic of Section \ref{sec:5}. The results of this last section prove that for a proper choice of the parameters of the free energy, the free energy presented in this manuscript encodes the discrete-RSB theories.

\section{The Full-RSB formula}
\label{fullRSBsec}
In this section, we recall the free energy functional presented in \cite{MyPaper}. We consider a particular time-reparametrization of the order parameters. In such a way, we obtain a complete theory, that takes into account all the discrete RSB solutions. Moreover, we write the free energy in terms of the cavity magnetizations $m$, instead of the cavity fields $h$:
\begin{equation}
m=\tanh(\beta h).
\end{equation}
The full-RSB free energy functional is obtained by considering the cavity magnetizations as Wiener functionals.

Let $\Omega:=C([0,1],\mathbb{R}^{2k})$ be the space of continuous functions $\bm{\omega}:[0,1]\to \mathbb{R}^{2k}$ with the pointwise convergence topology and the Borel $\sigma-$algebra $\mathcal{F}$. Let $\mathbb{W}_{\nu}$ be the probability measure such as the coordinate map process
\begin{equation}
\bm{W}(\bm{\omega}):=\{\bm{W}(q,\bm{\omega})=\bm{\omega}(q);\,q\in[0,1]\}\,,
\end{equation}
together with its natural filtration, is a vector process where the $2\times k$ components are independent Brownian motions, and the starting point $\bm{W}(0,\bm{\omega})=\bm{\omega}(0)$ is a normal distributed point in $\mathbb{R}^{2k}$. In the following, such Brownian motion will be simply indicated by $\bm{\omega}$. We denote by $\nu$ the multivariate normal distribution of a $2k-$dimensional real random vector. The probability space is equipped with the usual augmentation of the natural filtration of the Brownian motion is denoted by $\{\mathcal{F}_q\}_{q\in[0,1]}$.

Let us introduce the random variables $\Psi^{\text{(e)}}:\Omega\to \mathbb{R}$ and $\Psi^{\text{(v)}}:\Omega\to \mathbb{R}$, defined as follows
\begin{equation}
\begin{gathered}
\label{claim_start}
\Psi^{\text{(e)}}(1,\bm{\omega})=\sum^k_{i=1}\log\left(1+J_{1,2}\,Y\,m(1,\omega_i)m(1,\omega_{i+k})\right)\,,\\
\Psi^{\text{(v)}}(1,\bm{\omega})=\sum^1_{i=0}\log\left(\sum_{\sigma\in\{-1,1\}}\prod^k_{j=1}\left(1+J_{0,j}\,Y\,\sigma\,m(1,\omega_{j+k\times i})\right)\,\right)\,.\\
\end{gathered}
\end{equation}
The quantities $J_{1,2}$ and $J_{0,j}$ are ${0,1}-$valued random variables with probability measure $\mathbb{P}$, such as:
\begin{equation}
\mathbb{P}(\{-1\})=\mathbb{P}(\{1\})=\frac{1}{2}
\end{equation}
The symbol $Y$ stands for
\begin{equation}
Y=\tanh(\beta)
\end{equation}
where $\beta$ is the inverse of the temperature. The cavity magnetization $m$ is a real valued functional $m:C([0,1],\mathbb{R})\to \mathbb{R}$, measurable with respect to $\mathcal{F}_1$.
\begin{equation}
\label{adapted_notation}
\begin{gathered}
m(1,\omega_i)=m(1,\{\omega_i(q');0\leq q'\leq 1\}),\\
\Psi^{\text{(e/v)}}(1,\bm{\omega})=\Psi^{\text{(e/v)}}(1,\{\bm{\omega}(q');0\leq q'\leq 1\}),
\end{gathered}
\end{equation}
where the symbol $\Psi^{\text{(e/v)}}$ refers both to $\Psi^{\text{(e)}}$ and $\Psi^{\text{(v)}}$.
We denote by $\mathcal{E}(x \bm{r})$ the following Doléand-Dade exponential (DDE):
\begin{equation}
\label{DDE_xr}
\mathcal{E}(x \bm{r};q,\bm{\omega})\\=\exp\left(\int^{q}_0 x(q) \,\bm{r}(q', \bm{\omega}\,)\cdot \text{d}\bm{\omega}(q')-\frac{1}{2}\int^{q}_0 dq x^2(q) \,\|\bm{r}(q', \bm{\omega}\,)\|^2\right)\,.
\end{equation}
The variable $\bm{r}$ is a vector stochastic process, adapted to the filtration $\{\mathcal{F}_q\}_{q\in[0,1]}$, with $2 c$ components. We use the shorthand notation $(q,\bm{\omega})$ ( or $(q,\omega)$), after a symbol indicating a stochastic process or a random variable, to denote that such quantity depends on $q\in[0,1]$ and on the realization of the Brownian motion $\bm{\omega}(q')$ (or $\omega(q')$ ) at each time $0\leq q' \leq q$:
\begin{equation}
\label{adapted_notation2}
\bm{r}(q,\bm{\omega})=\bm{r}(q,\{\bm{\omega}(q');0\leq q'\leq q\})\,.
\end{equation}

Let us define
\begin{multline}
\label{funcAux}
\Gamma\big(\,\Psi^{\text{(e/v)}},\,x,\,\bm{r};\,0,\,\bm{\omega}(0)\,\big)=\mathbb{E}\left[\mathcal{E}(x \bm{r};1,\bm{\omega})\Psi^{\text{(e/v)}} ( \,1,\bm{\omega}\,)\Big| \,\{\bm{\omega}(0)\}\, \right]\\
-\frac{1}{2}\mathbb{E}\left[\int^1_0 \text{d}q\,x(q)\mathcal{E}(x \bm{r};q,\bm{\omega})\,\|\,\bm{r}(\,q, \,\bm{\omega}\,) \,\|^2\,\,\bigg| \,\{\bm{\omega}(0)\}\,\right]\,,
\end{multline}
where $\mathbb{E}\left[\cdot|\{\bm{\omega}(0)\}\right]$ is the expectation value with respect to the vectorial Brownian motion $\bm{\omega}$, conditionally to a fixed realization of the starting point $\bm{\omega}(0)$.

The auxiliary variational representation is given by
\begin{equation}
\label{GibbsTimeFunctional2}
\phi^{\text{(e/v)}}(\,0,\bm{\omega}{(0)}\,)=
\max_{\bm{r}} \Gamma\left(\,\Psi^{\text{(e/v)}},\,x,\,\bm{r};\,0,\,\bm{\omega}(0)\,\right)\,.
\end{equation}
The maximum in \eqref{GibbsTimeFunctional2} must be obtained by taking the cavity magnetizaton functional $m$ fixed.

The RSB free energy per spin of the Ising spin glass on a random regular graph, presented in \cite{MyPaper}, is given by:
\begin{equation}
f=-\beta \inf_m \left\{\int\text{d}\nu(\,\bm{\omega}(0)\,)\phi^{\text{(v)}}(0,\bm{\omega}(0))-\int\text{d}\nu(\,\bm{\omega}(0)\,)\phi^{\text{(e)}}(0,\bm{\omega}(0))\right\}\,.
\end{equation}
We may guess that the auxiliary variational problem \eqref{GibbsTimeFunctional2} can be solve by imposing a proper stationary condition:
\begin{equation}
\label{stationary0}
\frac{\delta \Gamma\left(\,\Psi^{\text{(e/v)}},\,x,\,\bm{r};\,0,\,\bm{\omega}(0)\,\right)}{\delta r_i(\,q,\bm{\omega}\,)} =0\,.
\end{equation}
In the next section we will provide a detailed mathematical analysis of the auxiliary variational problem.

\section{General formulation of the problem}
\label{sec:3}
In this section we define a generalization of the auxiliary variational problem \eqref{GibbsTimeFunctional2}. In particular, we define a functional with the same form as \eqref{GibbsTimeFunctional2}, depending on a random variables $\Psi$ and on a function $x:[0,1]\to[0,1]$.

Throughout this and the next chapter, the symbol $\bm{\omega}$ denotes a $n-$dimensional Brownian motion ($n\in\mathbb{N}$), starting from a random point $\bm{\omega}(0)$ at $q=0$, defined on a given probability space $(\Omega,\mathcal{F},\nu\times \mathbb{W})$ (it is not the same probability space of the previous chapter). The symbol $\nu$ denotes the probability measure associated to $\bm{\omega}(0)$, while $\mathbb{W}$ is the probability measure associated to $\bm{\omega}-\bm{\omega}(0)$. Let $\{\mathcal{F}_q\}_{q\in[0,1]}$ be the usual augmented natural filtration of $\bm{\omega}$.

We denote by $\mathbb{E}_{\nu}[\Cdot]$ the expectation value with respect the probability measure $\nu\times \mathbb{W}$. We use the short-hand notation $\mathbb{E}[\Cdot]$ for the expectation value with respect the probability measure $\mathbb{W}$, conditionally to a given realization of $\bm{\omega}(0)$:
\begin{equation}
\mathbb{E}[\Cdot]=\mathbb{E}_{\nu}[\Cdot|\mathcal{F}_0]=\mathbb{E}_{\nu}[\Cdot|\{\bm{\omega}(0)\}]\,.
\end{equation}
We denote by $\int \text{d}\nu(\bm{\omega}(0))\,\Cdot$ the average with respect the starting point:
\begin{equation}
\mathbb{E}_{\nu}[\Cdot]=\int \text{d}\nu(\bm{\omega}(0))\mathbb{E}[\Cdot]\,.
\end{equation}
Let us define define
\begin{itemize}
\item $L_1^{\infty}(\Omega)$,the space of $\mathcal{F}_1-$measurable bounded\footnote{A $\mathcal{F}_1-$measurable random variable $X: \Omega\to\mathbb{R}$ is bounded if there exist a constant $M\leq \infty$ such as $|X(\bm{\omega})|\leq M$ with probability $1$} random variables $X: \Omega\to\mathbb{R}$.
\item $L_1^{p}(\Omega)$,the space of $\mathcal{F}_1-$measurable bounded random variables $X: \Omega\to\mathbb{R}$, such as $\mathbb{E}_{\nu}\left[|X(1,\bm{\omega})|^p\right]<\infty$.
\item $H_{[0,1]}^p(\Omega)$, the space of $n-$dimensional adapted processes $\bm{r}: [0,1]\times\Omega\to\mathbb{R}^n$ satisfying $\mathbb{E}_{\nu}\left[\,\left(\int^1_0\text{d}q\,\|\bm{r}(q,\bm{\omega})\|^2\right)^{\frac{p}{2}}\,\right]<\infty$, with $p\geq 1$.
\item $S_{[0,1]}^p(\Omega)$, the space of $n-$dimensional adapted processes $\phi:[0,1]\times \Omega\to\mathbb{R}$ satisfying $\mathbb{E}_{\nu}\Big[\,\underset{q\in[0,1]}{\sup}|\phi(q,\bm{\omega})|^{p}\,\Big]<\infty$.
\end{itemize}
For convenience, we introduce the following notation
\begin{equation}
\label{norms}
\begin{gathered}
X\in L_1^{\infty}(\Omega) \,:\quad \underset{\bm{\omega}\in\Omega}{\Max} |X(1,\bm{\omega})|=\inf \,\left\{M\in \mathbb{R}; |X(1,\bm{\omega})|\leq M\,\,a.s.\right\}\,,\\
\bm{r}\in H_{[0,1]}^p(\Omega) \,:\quad \|\bm{r}\|_{2,p}=\mathbb{E}_{\nu}\left[\,\left(\int^1_0\text{d}q\,\|\bm{r}(q,\bm{\omega})\|^2\right)^{\frac{p}{2}}\,\right]^{\frac{1}{p}}\,,\\
\phi\in S_{[0,1]}^p(\Omega) \,:\quad \|\phi\|_{\infty,p}=\mathbb{E}_{\nu}\Big[\,\underset{q\in[0,1]}{\sup}|\phi(q,\bm{\omega})|^{p}\,\Big]^{\frac{1}{p}}\,.
\end{gathered}
\end{equation}
We denote by $\chi$ the set of increasing deterministic function $x:[0,1]\to[0,1]$:
\begin{equation}
\chi :=\left\{x:[0,1]\to[0,1];\,\, ,x(0)=0\,,\,\,\,0<x(q)\leq x(q')\,\forall\,0<q\leq q' \leq 1\,\right\}.
\end{equation}
Let us endow the set $\chi$ with the uniform norm $\|x\|_{\infty}=\sup_{q\in[0,1]}\,x(q)$.

Given a process $\bm{r}\in H_{[0,1]}^p(\Omega)$, with $p\geq 2$, a function $x\in\chi$ and a number $q'\in[0,1]$, let $\zeta(\bm{r},x|q')$ be the process defined by:
\begin{multline}
\label{zeta}
\zeta(\bm{r},x; q,\bm{\omega}|q')\\=
\int^q_{q'}x(q') \bm{r}(q'\,, \,\bm{\omega}) \cdot \text{d}\bm{\omega}(q') -\frac{1}{2}\int^q_{q'}dq'\,x^2(q')\,\|\bm{r}(q'\,, \,\bm{\omega})\|^2\,,\quad\text{if}\,\quad q\geq q'
\end{multline}
and
\begin{equation}
\zeta(\bm{r},x; q,\bm{\omega}|q')=0\,,\quad \text{if}\,\quad q<q'\,,
\end{equation}
As in \eqref{DDE_xr}, let $\mathcal{E}(x \bm{r})$ and $\mathcal{E}(x \bm{r}|q')$ be the DDE defined as
\begin{equation}
\begin{gathered}
\mathcal{E}(x \bm{r})=e^{\zeta(\bm{r},x)}\to \mathcal{E}(x \bm{r};q,\bm{\omega})=e^{\zeta(\bm{r},x;q,\bm{\omega})}\,,\\
\mathcal{E}(x \bm{r}|q')=e^{\zeta(\bm{r},x|q')}\to \mathcal{E}(x \bm{r};q,\bm{\omega}|q')=e^{\zeta(\bm{r},x;q,\bm{\omega}|q')}\,.
\end{gathered}
\end{equation}
The symbol $x\bm{r}\in$ denotes the process taking values $x(q)\bm{r}(q,\bm{\omega})$.

In the following, given any symbols $K$ and $\alpha$ and a number $q'\in[0,1]$, the notation $K(\alpha|q')$ refers to an adapted process with a functional dependency on a parameter $\alpha$ and the number $q'$, while $K(\alpha;q,\bm{\omega}|q')=K(\alpha;q,\{\bm{\omega}(q'');0\leq q''\leq q\}|q')$ is the a value of the process at a given "time" $q\in [0,1]$ and a given realization of the Brownian motion $\bm{\omega}$; we also set $K(\alpha)=K(\alpha|0)$.

Now, let us define the set $\widehat{D}_{[0,1]}(\Omega)\subset H_{[0,1]}^p(\Omega)$, with $p\geq 1$, where each element $\bm{r}\in \widehat{D}_{[0,1]}(\Omega)$ identifies a constant $C_{\bm{r}}\geq 0$ such as
\begin{equation}
\label{boundAux}
\left| \zeta(\bm{r},x; 1,\bm{\omega}) \right|\leq C_{\bm{r}}\quad a.s.\,.
\end{equation}
As far as we know, the set $\widehat{D}_{[0,1]}(\Omega)$ is not a vector space.

Note that, if $\bm{r}\in \widehat{D}_{[0,1]}(\Omega)$, then the DDE $\mathcal{E}(x \bm{r})$ is a true martingale. This property assures that:
\begin{equation}
\label{martingality}
\mathbb{E}[\mathcal{E}(x \bm{r};q,\bm{\omega}|q')|\mathcal{F}_{q''}]=1,\quad \forall \,\,\, 0\leq q''\leq q'\leq q\leq 1\,\,\,\text{and}\,\,\,\bm{r}\in \widehat{D}_{[0,1]}(\mathbb{R}^d)\,.
\end{equation}
and
\begin{equation}
\label{positivity}
\mathcal{E}(x \bm{r};q,\bm{\omega}|q')\geq e^{-C_{\bm{r}}}\quad a.s.\,.
\end{equation}
For this reason, the martingale $\mathcal{E}(x \bm{r})$ can be considered as a probability density function of a probability measure $\widetilde{\mathbb{W}}_{x \bm{r}}$ equivalent\footnote{Two probability measures $\widetilde{\mathbb{W}}$ and $\mathbb{W}$, defined on the same measurable space $(\Omega,\mathcal{F})$, are equivalent if, given any set $A\in\mathcal{F}$, then $\mathbb{W}[A]=0$ if and only if $\widetilde{\mathbb{W}}[A]=0$} to $\mathbb{W}$.

Given two processes $\bm{r}$ and $\bm{v}$ in $\widehat{D}_{[0,1]}(\Omega)$, let us introduce the binary functional $\mathbb{D}_{KL}(\, \cdot \, \|\,\cdot \,):\widehat{D}_{[0,1]}(\Omega)\times \widehat{D}_{[0,1]}(\Omega)\to [0,\infty)$ defined by:
\begin{multline}
\mathbb{D}_{KL}(\bm{r} \, \|\,\bm{v} \,)=\mathbb{E}_{\nu}\left[\mathcal{E}(x \bm{r};q,\bm{\omega})\log\left(\frac{\mathcal{E}(x \bm{r};q,\bm{\omega})}{\mathcal{E}(x \bm{v};q,\bm{\omega})}\right)\right]\\=\mathbb{E}_{\nu}\left[\mathcal{E}(x \bm{r};q,\bm{\omega})\int^1_0 dq\,x^2(q)\|\bm{r}(q,\bm{\omega})-\bm{v}(q,\bm{\omega})\|^2\right]
\end{multline}
Because of the property \eqref{boundAux}, the above quantity is defined for all pair of processes in $\widehat{D}_{[0,1]}(\Omega)\times \widehat{D}_{[0,1]}(\Omega)$. Let also define
\begin{equation}
\label{metric}
\mathbb{D}^{\text{sym}}_{KL}(\bm{r} \, \|\,\bm{v} \,)=\max\big\{\mathbb{D}_{KL}(\bm{r} \, \|\,\bm{v} \,),\mathbb{D}_{KL}(\bm{v} \, \|\,\bm{r} \,)\big\}
\end{equation}
Note that the quantity $\mathbb{D}_{KL}(\bm{r} \, \|\,\bm{v} \,)$ and $\mathbb{D}_{KL}(\bm{v} \, \|\,\bm{r} \,)$ are actually the \emph{relative entropies} (or Kullback–Leibler divergences) between the two probability-densities/DDEs $\mathcal{E}(x \bm{v})$ and $\mathcal{E}(x \bm{r})$. Indeed, for any $\bm{r} \in \widehat{D}_{[0,1]}(\Omega)$, one finds $\mathbb{D}_{K,L}(\bm{r} \, \|\,\bm{r} \,)=0$.

Moreover, if two processes $\bm{r}$ and $\bm{v}$ in $\widehat{D}_{[0,1]}(\Omega)$ verify the relation
\begin{equation}
\label{eqRel}
\mathbb{D}^{\text{sym}}_{KL}(\bm{r} \, \|\,\bm{v} \,)=0\,,
\end{equation}
then the two corresponding DDEs are two statistical equivalent densities, i.e for each $\mathcal{F}-$measurable bounded random variable $A:\Omega\to \mathbb{R}$, they verify the equivalence:
\begin{equation}
\label{MyEquality}
\mathbb{E}_{\nu}\left[\,\mathcal{E}(x \bm{r})A(1,\bm{\omega})\right]=\mathbb{E}_{\nu}\left[\,\mathcal{E}(x \bm{v})A(1,\bm{\omega})\right].
\end{equation}
The relations \eqref{eqRel} and \eqref{MyEquality} are equivalence relations.

By the relation \eqref{MyEquality}, we may argue that if \eqref{eqRel} holds, then the processes $\bm{r}$ and $\bm{v}$ are "similar", in some sense. This observation justifies the introduction of the quotient set $D_{[0,1]}(\Omega)$.
\begin{definition}
\label{define_D}
Given a process $\bm{r}\in\widehat{D}_{[0,1]}(\Omega)$, let $[\bm{r}]$ be the equivalence class
\begin{equation}
[\bm{r}]:=\{\bm{v}\in\widehat{D}_{[0,1]}(\Omega);\,\,\mathbb{D}^{\text{sym}}_{KL}(\bm{r} \, \|\,\bm{v} \,)=0\}\,.
\end{equation}
We denote by $D_{[0,1]}(\Omega)$ the set of the equivalence classes:
\begin{equation}
D_{[0,1]}(\Omega):=\{[\bm{r}];\,\,\bm{r}\in\widehat{D}_{[0,1]}(\Omega)\}\,.
\end{equation}
\end{definition}
By abuse of notation, henceforth we will omit the square bracket around the elements of the space $D_{[0,1]}(\Omega)$.

Now, we formulate the variational problems that we aim to study in this paper.
\begin{definition}
\label{definitions}
For a given $n\in\mathbb{N}$, the call RSB value process the functional $\Gamma^{(n)}: L_1^{\infty}(\Omega)\times \chi \times D_{[0,1]}(\Omega)\to S_{[0,1]}^p(\mathbb{R})$ defined as follows:
\begin{multline}
\label{value_function}
\Gamma^{(n)}(\Psi,x,\bm{r};q,\bm{\omega})=\mathbb{E}\left[\mathcal{E}(x \bm{r};1,\bm{\omega}|q)\Psi ( \,1,\bm{\omega}\,)\big|\mathcal{F}_q \right]\\
-\frac{1}{2}\mathbb{E}\left[\int^1_q dq'\,x(q')\mathcal{E}(x \bm{r};q',\bm{\omega}|q)\,\|\bm{r}(q', \,\bm{\omega}) \,\|^2\Bigg|\mathcal{F}_q\,\right]\,,
\end{multline}
where
\begin{itemize}
\item the random variable $\Psi\in L_1^{\infty}(\Omega)$ is the claim;
\item the function $x\in\chi$ is the POP (Parisi Order Parameter);
\item the process $\bm{r}\in D_{[0,1]}(\Omega)$ is the control parameter.
\end{itemize}
We say that a pair $(\Psi,x)\in L_1^{\infty}(\Omega)\times \chi$ allows the RSB expectation if there exists a solution pair $(\phi^{(n)}(\Psi,x),\bm{r}(\Psi,x))\in S_{[0,1]}^p(\Omega)\times D_{[0,1]}(\Omega)$ such that:
\begin{equation}
\phi^{(n)}(\Psi,x;q,\bm{\omega})=\Gamma^{(n)}(\Psi,x,\bm{r}(\Psi,x);q,\bm{\omega})=\sup_{\bm{r}\in D_{[0,1]}(\Omega)}\Gamma^{(n)}(\Psi,x,\bm{r};q,\bm{\omega})\,.
\end{equation}
and the following quantity
\begin{equation}
\label{auxiliary_variational_problem}
\Sigma^{(n)}(\Psi,x)=\int \text{d}\nu(\bm{\omega}(0))\phi^{(n)}(\Psi,x;0,\bm{\omega}(0))=\int \text{d}\nu(\bm{\omega}(0))\sup_{\bm{r}\in D_{[0,1]}(\Omega)}\Gamma^{(n)}(\Psi,x,\bm{r};0,\bm{\omega}(0) )
\end{equation}
is the RSB expectation of $\Psi$, driven by $x$.
\end{definition}
For the rest of the chapter we will omit the superscript $\cdot^{(n)}$ and we consider a generic dimension $n$. We consider only bounded claims because of the fact that, at non-zero temperature, the random variable $\Psi^{\text{(e)}}$ and $\Psi^{\text{(v)}}$ , defined in \eqref{claim_start}, are bounded.

Throughout the chapter, we consider a real constant $c< \infty$ and assume that the claim $\Psi$ is bounded by:
\begin{equation}
\label{boundedness}
|\Psi(1,\bm{\omega})|\leq c\,,\quad a.s. \,.
\end{equation}

The aim of the next section is obtaining a self-consistency equation for the solution pair.

\section{Backward Stochastic Differential equations}
\label{sec:4}
In this section we compute the variation of RSB$-$value process with respect the control parameter.

In the first subsection, we remind some properties of the Doléan-Dade exponential. In the second subsection, we provide a proper definition of the stationary condition \eqref{stationary0}, and we get an equation for the control parameter.
\subsection{Properties of the Doléans-Dade exponential (DDE)}
\label{subsec6.2.1}
In this subsection, we remind some fundamental facts about the DDEs and fix some notations. The DDEs play a crucial role in stochastic theory, and a vast literature has been produced about (see for example Chapter VIII of \cite{YoRev}).

The Doéans-Dade exponential (DDE) is defined as the unique strong solution of the following stochastic differential equation \cite{YoRev,Oksendal}:
\begin{equation}
\label{DDE_derivative}
\mathcal{E}(x\bm{r};q',\bm{\omega}|q)=1+\int^{q'}_{q}\mathcal{E}(x\bm{r};q'',\bm{\omega}|q') x(q)\bm{r}(q'',\bm{\omega})\cdot d\bm{\omega}(q)\,,\quad 0\leq q\leq q'\leq 1\,.
\end{equation}
For any given process $\bm{r}\in D_{[0,1]}(\Omega)$, the DDE $\mathcal{E}(x\bm{r})$ defined in \eqref{DDE_xr} is a true martingale; the martingale condition implies that DDE $\mathcal{E}(x\bm{r})$ is a positive process with
\begin{equation}
\label{MartiProp}
\mathbb{E}[\mathcal{E}(x\bm{r};q,\bm{\omega})]=1\quad \forall q\in[0,1].
\end{equation}
Then, we can define a probability measure $\widetilde{\mathbb{W}}_{x \bm{r}}$ on the measurable space $(\Omega,\mathcal{F})$, equivalent to the Wiener measure $\mathbb{W}$ and such as the DDE $\mathcal{E}(x\bm{r})$ is the Radon-Nikodym derivative of $\widetilde{\mathbb{W}}_{x \bm{r}}$ with respect to $\mathbb{W}$ \cite{YoRev,Billingsley}:
\begin{equation}
\frac{d\widetilde{\mathbb{W}}_{x \bm{r}} }{d\mathbb{W}}(\bm{\omega})=\mathcal{E}\left(\,x\bm{r};1,\bm{\omega}\,\right)\,.
\end{equation}

\begin{definition}
\label{DDE_expectation}
Let $\bm{r}\in D_{[0,1]}(\Omega)$ and $x\in \chi$. For any $\mathcal{F}_1-$measurable random variable $A$, the expectation of $A$ with respect the probability measure $\widetilde{\mathbb{W}}_{x \bm{r}}$ is the linear functional $A\mapsto \widetilde{\mathbb{E}}_{x\bm{r}}[A]\in \mathbb{R}$, defined as follows:
\begin{equation}
\widetilde{\mathbb{E}}_{x\bm{r}}\left[\,A(\,\bm{\omega}\,)\right]=\mathbb{E}\left[\mathcal{E}\left(\,x\bm{r};1,\bm{\omega}\,\right)\,A(\,\bm{\omega}\,)\,\right].
\end{equation}
Moreover, by Bayes Theorem, the conditional expectation value is given by
\begin{equation}
\label{conditionalNot}
\widetilde{\mathbb{E}}_{x\bm{r}}\left[\,A(\,\bm{\omega}\,)\big|\mathcal{F}_q\right]=\mathbb{E}\left[\mathcal{E}(x\bm{r};1,\bm{\omega}|q)\,A(\,\bm{\omega}\,)\,\big|\mathcal{F}_q\right],\quad \forall q\in[0,1]\,.
\end{equation}
\end{definition}
For any process $\bm{r}\in D_{[0,1]}(\Omega)$, we define the vector semimartingale $\bm{W}_{x\bm{r}}$ such as:
\begin{equation}
\label{semimart}
\bm{W}_{x\bm{r}}\left(\,q,\bm{\omega}\right)=\bm{\omega}(q)-\int^q_0 dq'\, x(q')\, \bm{r}(\bm{\omega},\,q'\,).
\end{equation}
or equivalently:
\begin{equation}
\label{AuxProb}
d\bm{W}_{x\bm{r}}\left(\,q,\bm{\omega}\right)=\text{d}\bm{\omega}(q)- \text{d}q\, x\, \bm{r}\left(\bm{\omega},\,x\,\right).
\end{equation}
By Cameron-Martin-Girsanov Theorem (CMG)\cite{CameronMartin,Girsanov}, the vector semimartingale $\bm{W}_{x\bm{r}}$ is a vector Brownian motion with respect the probability measure $\widetilde{\mathbb{W}}_{x \bm{r}}$ and the filtration $\{\mathcal{F}_q\}_{q\in[0,1]}$.

As a consequence, the stochastic integral of any vector processes $\bm{u} \in H^p_{[0,1]}(\Omega)$ (for any $p\geq 1$) with respect the process $\bm{W}_{x\bm{r}}$, is a $\{\mathcal{F}_q\}_{q\in[0,1]}-$martingale with respect $\widetilde{\mathbb{W}}_{x \bm{r}}$:
\begin{equation}
\label{MartiPropMod}
\widetilde{\mathbb{E}}_{x\bm{r}}\left[\,\int^{q_1}_{0} \,\bm{u}(\,q\,, \,\bm{\omega}\,) \cdot d\bm{W}_{x\bm{r}}\left(\,q,\bm{\omega}\right)\Bigg|\mathcal{F}_{q_2}\right]=
\begin{cases}
\int^{q_1}_{0} \,\bm{u}(\,q\,, \,\bm{\omega}\,) \cdot d\bm{W}_{x\bm{r}}\left(\,q,\bm{\omega}\right),\, \text{if}\, q_1\leq q_2,\\
\int^{q_2}_{0} \,\bm{u}(\,q\,, \,\bm{\omega}\,) \cdot d\bm{W}_{x\bm{r}}\left(\,q,\bm{\omega}\right),\, \text{if}\,q_1>q_2,
\end{cases}
\end{equation}
and
\begin{equation}
\label{MartiPropMod1}
\widetilde{\mathbb{E}}_{x\bm{r}}\left[\,\int^{1}_0 \,\bm{u}(\,q\,, \,\bm{\omega}\,) \cdot d\bm{W}_{x\bm{r}}\left(\,q,\bm{\omega}\right)\,\}\right]=0.
\end{equation}
Moreover, the expectation value of the product between two stochastic integrals verify the It\^o isometry
\begin{multline}
\label{MartiPropMod2}
\widetilde{\mathbb{E}}_{x\bm{r}}\left[\,\int^{1}_0 \,\bm{u}(q,\bm{\omega}) \cdot d\bm{W}_{x\bm{r}}\left(\,q,\bm{\omega}\right)\int^{1}_0 \,\bm{v}(q,\bm{\omega}) \cdot d\bm{W}_{x\bm{r}}\left(\,q,\bm{\omega}\right)\,\}\right]=\\
\widetilde{\mathbb{E}}_{x\bm{r}}\left[\,\int^{1}_0 \text{d}q\,\bm{u}(q,\bm{\omega}) \cdot \bm{v}(q,\bm{\omega})\,\}\right],
\end{multline}
for any pair of processes $\bm{u}$ and $\bm{v}$ in $H_{[0,1]}^p(\Omega)$, with $p\geq 2$.

By combining \eqref{MartiPropMod1} with the definition \eqref{semimart}, we also have:
\begin{equation}
\label{integration_stoc}
\widetilde{\mathbb{E}}_{x\bm{r}}\left[\,\int^{1}_0 \,\bm{u}(\,q\,, \,\bm{\omega}\,) \cdot d\bm{\omega}(q)\right]=\widetilde{\mathbb{E}}_{x\bm{r}}\left[\,\int^{1}_0 \text{d}q\,x(q)\bm{u}(q,\bm{\omega}) \cdot \bm{r}(q,\bm{\omega})\right]
\end{equation}
We end the subsection by providing the following notation. Let $\bm{r}$ and $\bm{v}$ be two processes in $D_{[0,1]}(\Omega)$, we define
\begin{equation}
\mathcal{E}(x \bm{v};q',\bm{W}_{x \bm{r}}|q)=\exp\left(\int^{q'}_q \,\bm{v}(q,\bm{\omega}) \cdot d\bm{W}_{x\bm{r}}(q,\bm{\omega})-\frac{1}{2}\int^{q'}_q \text{d}q\,x(q)\|\bm{v}(q,\bm{\omega}) \|^2\right)\,.
\end{equation}
The DDE $\mathcal{E}(x \bm{v})$ is a martingale with respect the probability measure $\widetilde{\mathbb{W}}_{x \bm{r}} $ and verify all the above relation, by substituting $\mathbb{E}\to\widetilde{\mathbb{E}}_{x\bm{r}}$ and $\bm{\omega}(q)\to \bm{W}_{x\bm{r}}(q,\bm{\omega})$.

Now, we have all the necessary tools to address the study of the RSB-expectation.
\subsection{The stationary condition}
Now, we provide a proper definition of "stationary condition". Throughout the subsection, the claim $\Psi$ and the POP $x$ are kept fixed.

Given two vector processes $\bm{r}\in D_{[0,1]}(\Omega)$ and $\bm{u}\in D_{[0,1]}(\Omega)$ and a number $\epsilon \in [0,1]$, let
\begin{equation}
\Theta^{\epsilon}\left(\,\Psi,\,x,\,\bm{r},\bm{u}; q,\bm{\omega}\,\right)=\epsilon \Gamma\left(\,\Psi,x,\bm{r};\,q,\bm{\omega}\,\right)+(1-\epsilon)\Gamma\left(\,\Psi,x,\bm{u};\,q,\bm{\omega}\,\right)
\end{equation}
and
\begin{equation}
\rho^{\epsilon}(\Psi,\,x,\,\bm{r},\bm{u}; q,\bm{\omega})=(1-\epsilon)\, \mathcal{E}(\,x\bm{r};q,\bm{\omega})+\epsilon\mathcal{E}(\,x\bm{u};q,\bm{\omega})\,.
\end{equation}
Note that, by \eqref{martingality} and \eqref{positivity}, since $0\leq \epsilon\leq 1$, then $\rho(\Psi,\,x,\,\bm{r},\bm{u})$ is a strictly positive and bounded martingale of mean $1$. As a consequence, there exists a process $\bm{v}^{\epsilon}(\bm{r},\bm{u})\in D_{[0,1]}(\Omega)$ such as:
\begin{equation}
\rho^{\epsilon}(\Psi,\,x,\,\bm{r},\bm{u}; q,\bm{\omega})=\mathcal{E}(\,x\bm{v}^{\epsilon}(\bm{r},\bm{u});q,\bm{\omega})\,,
\end{equation}
and
\begin{equation}
\Theta^{\epsilon}\left(\,\Psi,\,x,\,\bm{r},\bm{u}; q,\bm{\omega}\,\right)= \Gamma\left(\,\Psi,x,\bm{v}^{\epsilon}(\bm{r},\bm{u});\,q,\bm{\omega}\,\right)\,.
\end{equation}
A straightforward computation yields:
\begin{equation}
\bm{v}^{\epsilon}(\bm{r},\bm{u};q,\bm{\omega})=\frac{(1-\epsilon)\,\bm{r}(q,\bm{\omega}) \mathcal{E}(\,x\bm{r};q,\bm{\omega})+\epsilon\,\bm{u}(q,\bm{\omega})\mathcal{E}(\,x\bm{u};q,\bm{\omega})}{(1-\epsilon)\, \mathcal{E}(\,x\bm{r};q,\bm{\omega})+\epsilon\,\mathcal{E}(\,x\bm{u};q,\bm{\omega})}\,.
\end{equation}
Let
\begin{equation}
\label{direction}
\delta \bm{u}(q,\bm{\omega})=\partial_{\epsilon}\bm{v}^{\epsilon}(\bm{r},\bm{u};q,\bm{\omega})\big|_{\epsilon=0}\,,
\end{equation}
where the symbol $\partial_{\epsilon}$ denote the derivative over $\epsilon$ by taking all the other parameters fixed.

We define the directional derivative of the functional $\bm{r}\mapsto \Gamma(\Psi,x,\bm{r})$ along the path $\{\bm{r},\bm{u}\}$ by
\begin{multline}
\label{GatDer}
\begin{aligned}
&\Pi\left(\,\Psi,\,x,\,\bm{r},\delta \bm{u}; q,\bm{\omega}\,\right)=\partial_{\epsilon}\Gamma\left(\,\Psi,x,\bm{v}^{\epsilon}(\bm{r},\bm{u});\,q,\bm{\omega}\,\right)\big|_{\epsilon=0}\\
&=\widetilde{\mathbb{E}}_{x\bm{r}}\left[\,\Psi ( \,1,\bm{\omega}\,)\int^1_q x(q')\,\delta \bm{u}(q',\bm{\omega}) \cdot d\bm{W}_{x\bm{r}}(q',\bm{\omega})\,\,\Bigg|\mathcal{F}_q \right]
\end{aligned}\\
-\frac{1}{2}\,\,\widetilde{\mathbb{E}}_{x\bm{r}}\left[\int^1_q \text{d}q'\,x(q')\,\int^{q'}_q x(q'')\,\delta \bm{u}(q'',\bm{\omega}) \cdot d\bm{W}_{x\bm{r}}(q'',\bm{\omega})\,\,\left \|\,\bm{r}(q', \bm{\omega}) \,\right\|^2\,\Bigg|\mathcal{F}_q\right]\\
-\,\widetilde{\mathbb{E}}_{x\bm{r}}\left[\int^1_q \text{d}q'\,x(q') \,\delta \bm{u}(q', \,\bm{\omega})\cdot\bm{r}(q', \bm{\omega}) \Bigg|\mathcal{F}_q\right].
\end{multline}
In the following, the process $x\delta \bm{u}$ will be called \emph{direction}.

We guess that the set of all the possible directions $x\delta \bm{u}$, defined as in \eqref{direction}, is actually is actually dense on $H_{[0,1]}^p(\Omega)$. For this reason, we give the following definition of \emph{stationary point}.
\begin{definition}[Stationary point]
A stationary point $\bm{r}^{*}$ of the functional $\Gamma(\Psi,x,\Cdot)$ is a vector stochastic process in $D_{[0,1]}(\Omega)$, such as the directional derivative \eqref{GatDer} vanishes for any direction $x\,\delta\bm{u}\in H_{[0,1]}^p(\Omega)$:
\begin{equation}
\label{statCondPi}
\Pi\left(\,\Psi,\,x,\,\bm{r}^*,\delta \bm{u}; q,\bm{\omega}\,\right)=0\,,\quad \forall\,x\,\delta\bm{u}\in H^p_{[0,1]}(\Omega)\,.
\end{equation}
The above equation is the stationary condition.
\end{definition}
Such definition will be justified a posteriori. We want to obtain an equation for the auxiliary order parameter that is equivalent to the above stationary condition and does not depend on the derivative direction $\delta\bm{u}$.

By relation \eqref{MartiPropMod}, the second expectation value in \eqref{GatDer} can be rewritten in such a way
\begin{multline}
\frac{1}{2}\,\,\widetilde{\mathbb{E}}_{x\bm{r}}\Bigg[\,\int^1_q \text{d}q'\,x(q')\int^{q'}_q x(q'')\,\bm{u}(q'',\bm{\omega}) \cdot d\bm{W}_{x\bm{r}}(q'',\bm{\omega})\|\,\bm{r}(q', \,\bm{\omega}) \,\|^2\Bigg|\mathcal{F}_q\Bigg]=\\
\frac{1}{2}\widetilde{\mathbb{E}}_{x\bm{r}}\left[\,\int^1_q x(q')\,\bm{u}(q',\bm{\omega}) \cdot d\bm{W}_{x\bm{r}}\left(\,q,\bm{\omega}\right)\,\,\int^1_q \text{d}q'\,x(q')\, \|\,\bm{r}(q', \,\bm{\omega}) \|^2\Bigg|\mathcal{F}_q\right]
\end{multline}
and, by formula \eqref{MartiPropMod2}, the third expectation value leads to
\begin{multline}
\widetilde{\mathbb{E}}_{x\bm{r}}\left[\int^1_q \text{d}q'\,x(q') \,\delta \bm{u}(q', \,\bm{\omega})\cdot\bm{r}(q', \,\bm{\omega}) \Bigg|\mathcal{F}_q\right]\\=\widetilde{\mathbb{E}}_{x\bm{r}}\left[\int^1_q \,x(q') \,\delta \bm{u}(q', \,\bm{\omega})\cdot d\bm{W}_{x\bm{r}}(q',\bm{\omega})\int^1_q \bm{r}(q', \,\bm{\omega}) \cdot d\bm{W}_{x\bm{r}}(q',\bm{\omega})\Bigg|\mathcal{F}_q\right]\,.
\end{multline}
Combining the above formulas in \eqref{GatDer}, we can rewrite the directional derivative \eqref{GatDer} in such a way:
\begin{equation}
\label{NewGat}
\Pi\left(\,\Psi,\,x,\,\bm{r},\delta\bm{u}; q,\bm{\omega}\,\right)
=\widetilde{\mathbb{E}}_{x\bm{r}}\left[\pi\left(\,\Psi,\,x,\,\bm{r}; 1,\bm{\omega}\,|q\right)\int^1_q x(q')\,\bm{u}(q',\bm{\omega}) \cdot d\bm{W}_{x\bm{r}}(q',\bm{\omega})\Bigg|\mathcal{F}_q\right]\,,
\end{equation}
with
\begin{multline}
\label{piii}
\pi\left(\,\Psi,\,x,\,\bm{r}; 1,\bm{\omega}\,|q\right)\\=\Psi ( \,1,\bm{\omega}\,)-\int^{1}_q \,\bm{r}(\,q'\,, \bm{\omega}\,) \cdot d\bm{W}_{x\bm{r}}(q',\bm{\omega})-\frac{1}{2}\int^1_q \text{d}q'\,x(q')\left \|\,\bm{r}(q', \,\bm{\omega}) \,\right\|^2.
\end{multline}
In the following we will refer to this quantity as \emph{random RSB}.

A process $\bm{r}^{*}$ is a stationary point, according to the definition \eqref{statCondPi}, if and only if the random RSB $\pi\left(\,\Psi,\,x,\,\bm{r}|q\right)$ is uncorrelated, under the measure $\widetilde{\mathbb{W}}_{x \bm{r}}$, to all the random variables of the form
\begin{equation}
\label{A_direction}
A(1,\bm{\omega}|q\,)=\int^1_q x(q')\,\delta\bm{u}(q',\bm{\omega}) \cdot d\bm{W}_{x\bm{r}}(q',\bm{\omega}),\quad\text{with}\quad x\delta\bm{u}\in H^p_{[0,1]}(\Omega).
\end{equation}
This condition provides the stationary equation for the control parametr $\bm{r}$. The following theorem is one of the most important results of the paper.

\begin{theorem}
\label{stationariTheo}
Let us consider a claim $\Psi \in L_1^{\infty}(\Omega)$ and a POP $x\in \chi$. A control parameter $\bm{r}^{*}\in D_{[0,1]}(\Omega)$ verifies the stationary condition \eqref{statCondPi} if and only if, at any time $q\in[0,1]$, the random RSB $\pi\left(\,\Psi,\,x,\,\bm{r}^*|q\right)$ is $\mathcal{F}_q-$measurable.

In particular, this implies that there exists a process $\phi:[0,1]\times \Omega \to \mathbb{R}$, adapted to the filtration $\{\mathcal{F}_{q'}\}_{q'\in[0,1]}$, such as
\begin{equation}
\pi\left(\,\Psi,\,x,\,\bm{r}; 1,\bm{\omega}\,|q\right)=\phi(q,\bm{\omega})\,,
\end{equation}
so we find the equation:
\begin{equation}
\label{selfEq1Aux}
\phi(q,\bm{\omega})=\Psi ( \,1,\bm{\omega}\,)-\int^{1}_q \,\bm{r}^{*}(q'\,, \bm{\omega}\,) \cdot d\bm{\omega}(q')+\frac{1}{2}\int^1_q \text{d}q'\,x(q') \|\,\bm{r}^{*}(q', \,\bm{\omega}) \|^2.
\end{equation}
The above equation is the stationary equation that generate the RSB expectation.
\end{theorem}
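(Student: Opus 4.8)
The plan is to recognise the stationary condition \eqref{statCondPi}, in the form \eqref{NewGat}, as a conditional orthogonality statement under the tilted measure $\widetilde{\mathbb{W}}_{x\bm{r}^*}$, and then to exploit the martingale representation property of the filtration $\{\mathcal{F}_q\}$ with respect to the Brownian motion $\bm{W}_{x\bm{r}^*}$. Writing $\bm{g}:=x\,\delta\bm{u}$, equation \eqref{NewGat} says that $\bm{r}^*$ is stationary if and only if, for every $q\in[0,1]$,
\[
\widetilde{\mathbb{E}}_{x\bm{r}^*}\!\left[\pi(\Psi,x,\bm{r}^*;1,\bm{\omega}|q)\,\int_q^1 \bm{g}(q',\bm{\omega})\cdot d\bm{W}_{x\bm{r}^*}(q',\bm{\omega})\;\Big|\;\mathcal{F}_q\right]=0
\]
for every $\bm{g}$ with $\bm{g}\,\mathbf{1}_{[q,1]}\in H^{p}_{[0,1]}(\Omega)$. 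Since $\bm{r}^*\in D_{[0,1]}(\Omega)$, the bounds \eqref{boundAux}--\eqref{positivity} make the Doléans-Dade exponential $\mathcal{E}(x\bm{r}^*;1,\bm{\omega})$ bounded above and below by $e^{\pm C_{\bm{r}^*}}$, so $\widetilde{\mathbb{W}}_{x\bm{r}^*}$ and $\mathbb{W}$ share the same $L^p$ and $H^p$ classes; together with the boundedness \eqref{boundedness} of $\Psi$ this is what I use to secure the integrability required below (inserting, if necessary, a standard localization) and to guarantee that the representing integrand is an admissible test direction.

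For the forward implication, by the Cameron-Martin-Girsanov theorem $\bm{W}_{x\bm{r}^*}$ is an $\{\mathcal{F}_q\}$-Brownian motion under $\widetilde{\mathbb{W}}_{x\bm{r}^*}$, and the predictable representation property of $\{\mathcal{F}_q\}$ (valid under $\mathbb{W}$ for $\bm{\omega}$) is preserved under the equivalent change of measure, with $\bm{W}_{x\bm{r}^*}$ as the representing martingale. Applying this on $[q,1]$ to $Z:=\pi(\Psi,x,\bm{r}^*;1,\bm{\omega}|q)$ produces an adapted $\bm{\psi}$ with $\widetilde{\mathbb{E}}_{x\bm{r}^*}[\int_q^1\|\bm{\psi}\|^2\,dq']<\infty$ and
\[
\pi(\Psi,x,\bm{r}^*;1,\bm{\omega}|q)=\widetilde{\mathbb{E}}_{x\bm{r}^*}\!\left[\pi(\Psi,x,\bm{r}^*|q)\,\big|\,\mathcal{F}_q\right]+\int_q^1\bm{\psi}(q',\bm{\omega})\cdot d\bm{W}_{x\bm{r}^*}(q',\bm{\omega}).
\]
Choosing the test direction $\bm{g}:=\bm{\psi}\,\mathbf{1}_{[q,1]}$ in the orthogonality relation and substituting this representation gives $\widetilde{\mathbb{E}}_{x\bm{r}^*}\!\big[(\pi-\widetilde{\mathbb{E}}_{x\bm{r}^*}[\pi|\mathcal{F}_q])^2\,\big|\,\mathcal{F}_q\big]=0$; thus the conditional variance of $\pi(\Psi,x,\bm{r}^*|q)$ vanishes, so this random variable coincides a.s.\ with its conditional expectation and is $\mathcal{F}_q$-measurable.

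The converse is immediate: if $\pi(\Psi,x,\bm{r}^*;1,\bm{\omega}|q)$ is $\mathcal{F}_q$-measurable it factors out of the conditional expectation in \eqref{NewGat}, leaving $\pi(\Psi,x,\bm{r}^*|q)\cdot\widetilde{\mathbb{E}}_{x\bm{r}^*}[\int_q^1 x\,\delta\bm{u}\cdot d\bm{W}_{x\bm{r}^*}|\mathcal{F}_q]$, which is $0$ by the martingale property \eqref{MartiPropMod} of stochastic integrals against $\bm{W}_{x\bm{r}^*}$; since this holds for every direction, \eqref{statCondPi} follows. To reach \eqref{selfEq1Aux} I then set $\phi(q,\bm{\omega}):=\pi(\Psi,x,\bm{r}^*;1,\bm{\omega}|q)$: the first part makes each slice $\phi(q,\cdot)$ $\mathcal{F}_q$-measurable, joint measurability in $(q,\bm{\omega})$ follows from the a.s.\ continuity of $q\mapsto\pi(\Psi,x,\bm{r}^*|q)$ (the stochastic and Lebesgue integrals in \eqref{piii} are continuous in their lower limit), and $\phi\in S^{p}_{[0,1]}(\Omega)$ follows from \eqref{boundedness} together with Doob's and Burkholder-Davis-Gundy's inequalities applied to the term $\int_q^1\bm{r}^*\cdot d\bm{\omega}$; finally, substituting $\int_q^1\bm{r}^*\cdot d\bm{W}_{x\bm{r}^*}=\int_q^1\bm{r}^*\cdot d\bm{\omega}-\int_q^1 x\,\|\bm{r}^*\|^2\,dq'$ into \eqref{piii} turns the coefficient $-\tfrac12$ of $\int_q^1 x\|\bm{r}^*\|^2$ into $+\tfrac12$ and produces exactly \eqref{selfEq1Aux}.

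I expect the main obstacle to be the careful handling of the martingale representation step under the change of measure and the attendant integrability bookkeeping: verifying that the predictable representation property genuinely transfers (so that $\bm{W}_{x\bm{r}^*}$ has the representation property for $\{\mathcal{F}_q\}$ under $\widetilde{\mathbb{W}}_{x\bm{r}^*}$), that $\pi(\Psi,x,\bm{r}^*|q)$ and its representing integrand $\bm{\psi}$ land in the admissible classes, and --- if one only assumes $\bm{r}^*\in H^{1}_{[0,1]}(\Omega)$ rather than $H^{2}_{[0,1]}(\Omega)$ --- inserting a localization/approximation argument. The two-sided bound on the Doléans-Dade exponential coming from $\bm{r}^*\in D_{[0,1]}(\Omega)$ is precisely what makes this bookkeeping go through.
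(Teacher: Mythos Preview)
Your argument is correct, and the converse implication is handled exactly as in the paper. The forward implication, however, is obtained by a genuinely different device than the one the paper uses.

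The paper does not invoke the martingale representation theorem. Instead, it tests the stationarity condition against a specific one-parameter family of directions: for each deterministic $\bm{f}\in L^p([0,1],\mathbb{R}^n)$ it takes
\[
x(q')\,\delta\bm{u}(q',\bm{\omega})=\bm{v}(q',\bm{\omega})\,\mathcal{E}\big(\bm{v};q',\bm{W}_{x\bm{r}^*}\big|q\big),\qquad \bm{v}=\bm{f}-x\,\bm{r}^*,
\]
so that the resulting stochastic integral against $\bm{W}_{x\bm{r}^*}$ collapses (by \eqref{DDE_derivative}) to $\mathcal{E}(\bm{f};\cdot,\bm{\omega}|q)/\mathcal{E}(x\bm{r}^*;\cdot,\bm{\omega}|q)-1$. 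The stationarity condition then reads
\[
\mathbb{E}\big[\mathcal{E}(\bm{f};1,\bm{\omega}|q)\,\pi(\Psi,x,\bm{r}^*;1,\bm{\omega}|q)\,\big|\,\mathcal{F}_q\big]=\widetilde{\mathbb{E}}_{x\bm{r}^*}\big[\pi(\Psi,x,\bm{r}^*;1,\bm{\omega}|q)\,\big|\,\mathcal{F}_q\big]
\]
for every such $\bm{f}$, and the density of the linear span of $\{\mathcal{E}(\bm{f};1,\bm{\omega}|q)\}$ in $L_1^p(\Omega)$ (Lemma~4.3.2 in \cite{Oksendal}) forces $\pi$ to coincide with its $\mathcal{F}_q$-conditional expectation.

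Your route---write $\pi=\widetilde{\mathbb{E}}_{x\bm{r}^*}[\pi|\mathcal{F}_q]+\int_q^1\bm{\psi}\cdot d\bm{W}_{x\bm{r}^*}$ and test against $\bm{g}=\bm{\psi}$ to kill the conditional variance---is shorter and more conceptual, and it sidesteps both the explicit construction of exponential test functions and the appeal to the density lemma. The price you pay is exactly the bookkeeping you flag: you must know that the predictable representation property transfers to $\bm{W}_{x\bm{r}^*}$ under $\widetilde{\mathbb{W}}_{x\bm{r}^*}$, and that the representing integrand $\bm{\psi}$ lies in the class of admissible directions (a priori one only gets $\bm{\psi}\in H^2$ under $\widetilde{\mathbb{W}}_{x\bm{r}^*}$; the two-sided bound on $\mathcal{E}(x\bm{r}^*)$ carries this back to $H^2_{[0,1]}(\Omega)$, and a truncation handles higher $p$). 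The paper's approach trades this for an explicit family of test directions that are manifestly admissible, at the cost of importing an external density result.
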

We remind that the random RSB is $\mathcal{F}_q-$measurable if, given a realization of the vector Brownian motion $\bm{\omega}$, $\pi\left(\,\Psi,\,x,\,\bm{r};1,\bm{\omega}|q\right)$ depends only on $\{\bm{\omega}(q'), \,0\leq q' \leq q\,\}$.
The equation \eqref{selfEq1Aux} is the stationary equation of the control parameter.

Note that all the components of $\bm{r}^{*}$ and the process $\phi$ are unknowns of the equation. However, such class of equations may have a unique solution, since the condition that both the process $\phi$ and $\bm{r}$ are adapted provides a further constraint. The right-hand member of the equation, indeed, is a sum of random quantities that are $\mathcal{F}_1$ measurable. We must look for a control parameter $\bm{r}^{*}$, depending only on the past, such as to "delete the dependence of the future".

The stationary equation can be rewritten in stochastic differential notation as
\begin{equation}
d\phi(q,\bm{\omega})=d\bm{\omega}(q)\cdot\bm{r}^{*}(q\,, \bm{\omega}\,) -\frac{1}{2}\text{d}q\,x(q) \|\,\bm{r}^{*}(q, \,\bm{\omega}) \|^2
\end{equation}
together with the end point condition
\begin{equation}
\phi(1,\bm{\omega})=\Psi(1,\bm{\omega}).
\end{equation}
This kind of equation are called \emph{backward stochastic differential equation} (BSDE) \cite{PaPeng}.

BSDEs arise in many optimization and control problems, where the aim to fulfill a given "claim" (the claim $\Psi(1,\bm{\omega})$) and the control parameters depend only on the past(so we consider only adapted process).

\begin{proof}[Proof of Theorem \ref{stationariTheo}]
If the control parameter $\bm{r}^{*}\in D_{[0,1]}(\Omega)$ verifies the condition of \eqref{selfEq1Aux}, then there exists an adapted process $\phi$ such as
\begin{multline}
\Pi(\Psi,\,x,\,\bm{r}^{*},\delta\bm{u};q,\bm{\omega})\\
=\widetilde{\mathbb{E}}_{x\bm{r}^{*}}\left[\,\pi(\Psi,\,x,\,\bm{r}^{*};1,\bm{\omega}|q)\,\int^1_0 x(q)\delta \bm{u}(q,\bm{\omega}) \cdot d\bm{W}_{x\bm{r}^{*}}\left(\,q,\bm{\omega}\right)\Bigg|\mathcal{F}_q\right]\\
=\phi(q,\bm{\omega})\widetilde{\mathbb{E}}_{x\bm{r}^{*}}\left[\,\int^1_0 x(q)\delta \bm{u}(q,\bm{\omega}) \cdot d\bm{W}_{x\bm{r}^{*}}\left(\,q,\bm{\omega}\right)\Bigg|\mathcal{F}_q\right].
\end{multline}
The process $\phi$ can be pulled outside the conditional expectation value, since it is $\mathcal{F}_q-$measurable. The stationarity \eqref{statCondPi} follows from the martingale property \eqref{MartiPropMod1}.

Conversely, suppose that the process $\bm{r}^{*}$ is a stationary point, according to the definition \eqref{statCondPi}. We have to show that the set of the random variables of the form \eqref{A_direction} is dense in $L_1^p(\Omega)$.

Consider the process $\bm{v}\in D_{[0,1]}(\Omega)$, defined as
\begin{equation}
\bm{v}(q,\bm{\omega})=\bm{f}(q) -x(q)\bm{r}^{*}(q,\bm{\omega})\,,
\end{equation}
where $\bm{f}:[0,1]\to \mathbb{R}^m$ is any deterministic and function such as $\int^1_0\text{d}q \,\|\bm{f}(q)\|^p=1$, with $p\geq 2$ . Using standard notation \cite{Berry}, we wright that $\bm{f}\in L^p([0,1],\mathbb{R}^n)$.

Since $x\, \delta \bm{u}\in H_{[0,1]}^p(\Omega)$, and $x(q)>0$ for all $q>0$, then we can consider a class of directions of the form:
\begin{equation}
\label{protoTypeder}
x(q') \delta \bm{u}(q',\bm{\omega})=\bm{v}(q',\bm{\omega})\mathcal{E}\big( \bm{v};q',\bm{W}_{x \bm{r}^{*}}\big|q\big)\,,\quad \text{with}\quad q'\geq q
\end{equation}
Then by property \eqref{DDE_derivative} of DDEs, one get
\begin{equation}
\int^1_q x(q')\delta \bm{u}(q',\bm{\omega}) \cdot d\bm{W}_{x\bm{r}^{*}}(q',\bm{\omega})\\
=\frac{\mathcal{E}(\bm{f};q',\bm{\omega}|q)}{\mathcal{E}(x\bm{r}^{*};q',\bm{\omega}|q)}-1\,.
\end{equation}
Replacing the above direction in \eqref{NewGat}, the stationary condition \eqref{statCondPi} yields
\begin{multline}
\mathbb{E}\left[\mathcal{E}(\bm{f};q',\bm{\omega}|q)\pi(\Psi,\,x,\,\bm{r}^{*};1,\bm{\omega}|q)\,\big|\mathcal{F}_q\right]\\
=\widetilde{\mathbb{E}}_{x\bm{r}^{*}}\left[\,\pi(\Psi,\,x,\,\bm{r}^{*};1,\bm{\omega}|q)\,\big|\mathcal{F}_q\right]\,,\,\forall\,\bm{f}\in L^p([0,1],\mathbb{R}^n).
\end{multline}
Note that the expectation value on the left-hand side member of the equation is with respect the probability measure $\mathbb{W}$ and on the right-hand the expectation is with respect $\widetilde{\mathbb{W}}_{x \bm{r}^{*}}$.

The linear span of the set $\left\{\mathcal{E}(\bm{f};q',\bm{\omega}|q),\bm{f} \in L^p([0,1],\mathbb{R}^n) \right\}$ is dense in $L_1^p(\Omega)$ (Lemma 4.3.2. in \cite{Oksendal}), so the above equation implies:
\begin{equation}
\label{in_the_proof}
\pi(\Psi,\,x,\,\bm{r}^{*};1,\bm{\omega}|q)
=\widetilde{\mathbb{E}}_{x\bm{r}^{*}}\left[\,\pi(\Psi,\,x,\,\bm{r}^{*};1,\bm{\omega}|q)\,|\mathcal{F}_q\right]\,\quad a.s..
\end{equation}
By the definition of conditional expectation, the right member in the above equation is an $\mathcal{F}_q-$measurable random variable, so we may consider an adapted process $\phi$, such as
\begin{equation}
\label{MeaningOfC}
\phi(q,\bm{\omega} )=\pi(\Psi,\,x,\,\bm{r}^{*};1,\bm{\omega}|q)\,.
\end{equation}
that conclude the proof.
\end{proof}
The meaning of the process $\phi$ is stated in the following
\begin{corollary}
\label{GlobalMinimumC}
Given a claim $\Psi \in L_1^{\infty}(\Omega)$ and a POP $x\in \chi$, if the pair of processes $(\phi,\bm{r}^*)\in S_{[0,1]}^p(\Omega)\times D_{[0,1]}(\Omega)$ is a solution of the BSDE \eqref{selfEq1Aux}, then
\begin{equation}
\label{value_process_solution}
\phi(q,\bm{\omega})=\Gamma(\Psi,x,\bm{r}^*;q,\bm{\omega})\,.
\end{equation}
\end{corollary}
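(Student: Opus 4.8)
The plan is to express the value process $\Gamma(\Psi,x,\bm r^{*};q,\bm\omega)$ as a conditional expectation, under the tilted Wiener measure $\widetilde{\mathbb W}_{x\bm r^{*}}$, of the random RSB $\pi(\Psi,x,\bm r^{*};1,\bm\omega|q)$ of \eqref{piii}, and then to use the hypothesis that $(\phi,\bm r^{*})$ solves the BSDE \eqref{selfEq1Aux} to recognise that random RSB as the $\mathcal F_q$-measurable process $\phi$, so that the conditional expectation collapses onto $\phi$ itself.

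The first and main step is to establish the identity
\[
\Gamma(\Psi,x,\bm r;q,\bm\omega)=\widetilde{\mathbb E}_{x\bm r}\big[\,\pi(\Psi,x,\bm r;1,\bm\omega|q)\,\big|\,\mathcal F_q\big],\qquad \bm r\in D_{[0,1]}(\Omega).
\]
Starting from \eqref{value_function}, the first term is $\widetilde{\mathbb E}_{x\bm r}[\Psi(1,\bm\omega)\,|\,\mathcal F_q]$ by the Bayes/Girsanov formula \eqref{conditionalNot}. For the quadratic term I would apply Fubini (licit because $\bm r\in D_{[0,1]}(\Omega)\subset H^p_{[0,1]}(\Omega)$, $\Psi$ is bounded by \eqref{boundedness}, and $\mathcal E(x\bm r)$ is a true martingale bounded below by \eqref{positivity}), then the tower property together with the martingale identity $\mathbb E[\mathcal E(x\bm r;1,\bm\omega|q)\,|\,\mathcal F_{q'}]=\mathcal E(x\bm r;q',\bm\omega|q)$ from \eqref{martingality} and the $\mathcal F_{q'}$-measurability of $\|\bm r(q',\bm\omega)\|^2$, to obtain $\mathbb E[\mathcal E(x\bm r;q',\bm\omega|q)\|\bm r(q',\bm\omega)\|^2\,|\,\mathcal F_q]=\widetilde{\mathbb E}_{x\bm r}[\|\bm r(q',\bm\omega)\|^2\,|\,\mathcal F_q]$; integrating in $q'$ shows the second term of \eqref{value_function} equals $\tfrac12\widetilde{\mathbb E}_{x\bm r}[\int_q^1 dq'\,x(q')\|\bm r(q',\bm\omega)\|^2\,|\,\mathcal F_q]$. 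On the other side, inserting $d\bm W_{x\bm r}=d\bm\omega-x\bm r\,dq$ from \eqref{semimart} into \eqref{piii} gives $\pi(\Psi,x,\bm r;1,\bm\omega|q)=\Psi(1,\bm\omega)-\int_q^1\bm r\cdot d\bm W_{x\bm r}+\tfrac12\int_q^1 dq'\,x(q')\|\bm r\|^2$; taking $\widetilde{\mathbb E}_{x\bm r}[\,\cdot\,|\,\mathcal F_q]$ annihilates the $\widetilde{\mathbb W}_{x\bm r}$-martingale increment $\int_q^1\bm r\cdot d\bm W_{x\bm r}$ by \eqref{MartiPropMod}, leaving exactly the rewriting of $\Gamma$ just obtained. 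This proves the identity.

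The conclusion is then immediate. By hypothesis $(\phi,\bm r^{*})$ solves \eqref{selfEq1Aux}; comparing \eqref{selfEq1Aux} with the expression for $\pi$ derived above (after the same substitution $d\bm\omega\leftrightarrow d\bm W_{x\bm r^{*}}$) shows $\pi(\Psi,x,\bm r^{*};1,\bm\omega|q)=\phi(q,\bm\omega)$ — this is precisely the adaptedness assertion in the first part of Theorem \ref{stationariTheo}. Since $\phi\in S^p_{[0,1]}(\Omega)$ is adapted, $\phi(q,\bm\omega)$ is $\mathcal F_q$-measurable, so
\[
\Gamma(\Psi,x,\bm r^{*};q,\bm\omega)=\widetilde{\mathbb E}_{x\bm r^{*}}\big[\phi(q,\bm\omega)\,\big|\,\mathcal F_q\big]=\phi(q,\bm\omega),
\]
the last equality because $\widetilde{\mathbb W}_{x\bm r^{*}}$ is a genuine probability measure ($\mathcal E(x\bm r^{*})$ a true martingale, whence $\widetilde{\mathbb E}_{x\bm r^{*}}[1\,|\,\mathcal F_q]=1$). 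The only delicate point in the whole argument is the measure-change manipulation of the quadratic-variation term in the first step — the interchange of the $dq'$-integral with the conditional expectation and the careful bookkeeping of the CMG change of variable \eqref{semimart}–\eqref{integration_stoc} — for which one must track integrability via $\bm r^{*}\in D_{[0,1]}(\Omega)$, the two-sided DDE bound \eqref{positivity}, the boundedness \eqref{boundedness} of $\Psi$, and $\phi\in S^p_{[0,1]}(\Omega)$; the remaining manipulations are routine.
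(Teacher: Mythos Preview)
Your proof is correct and follows essentially the same route as the paper: both take the $\widetilde{\mathbb E}_{x\bm r^{*}}[\,\cdot\,|\mathcal F_q]$ expectation of the BSDE identity, use the $\widetilde{\mathbb W}_{x\bm r^{*}}$-martingale property (equivalently \eqref{integration_stoc}) to kill the stochastic integral, and identify the result with $\Gamma$; you simply make the intermediate identity $\Gamma=\widetilde{\mathbb E}_{x\bm r}[\pi\,|\,\mathcal F_q]$ explicit, whereas the paper compresses this into one line. One harmless slip: in your displayed formula for $\pi$ after ``inserting $d\bm W_{x\bm r}=d\bm\omega-x\bm r\,dq$'' the differential should read $d\bm\omega$ (with the $+\tfrac12$ you wrote), or else keep $d\bm W_{x\bm r}$ but with $-\tfrac12$ as in \eqref{piii}; your subsequent use of the martingale property is consistent with the latter, so the logic is intact.
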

\begin{proof}
Since $\phi(q,\Cdot)$ is $\mathcal{F}_q-$measurable, then $\phi(q,\bm{\omega})=\widetilde{\mathbb{E}}_{x\bm{r}^{*}}[\phi(q,\bm{\omega})|\mathcal{F}_q]$; then the proof is given by replacing $\phi$ with \eqref{selfEq1Aux} and using the relation \eqref{integration_stoc}.
\end{proof}
If the solution of the stationary condition has a unique solution and provides the global maximum of the RSB value process, then the BSDE \eqref{selfEq1Aux} determines completely the RSB expectation. We will discuss this matter in the next subsection.

\subsection{Global maximum condition}
In this subsection, we prove that the solution of the stationary condition provides the global maximum of the RSB value function. We also discuss some property of the so-called RSB expectation, that we defined in \eqref{auxiliary_variational_problem}.

First of all, we need to state the following result.
\begin{theorem}
\label{existence_uniqueness}
For any give claim and POP $(\Psi,x)\in L_1^{\infty}(\Omega)\times \chi $, there exist a unique pair of processes $(\,\phi(\Psi,x),\bm{r}(\Psi,x)\,)\in S_{[0,1]}^p(\Omega)\times D_{[0,1]}(\Omega)$ that is a soluion of the BSDE \eqref{selfEq1Aux}.
\end{theorem}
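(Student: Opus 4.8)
The plan is to recognize the BSDE \eqref{selfEq1Aux} as a quadratic‑growth BSDE with bounded terminal condition, and to invoke the now‑standard existence/uniqueness theory for such equations (Kobylanski, and the refinements of Briand–Hu), adapted to our multi‑dimensional driver $\bm{r}$ and to the weighted integration by $x(q)$. Concretely, set $Z(q,\bm{\omega}):=\bm{r}^{*}(q,\bm{\omega})$ and rewrite \eqref{selfEq1Aux} in differential form
\begin{equation*}
d\phi(q,\bm{\omega})=Z(q,\bm{\omega})\cdot d\bm{\omega}(q)-\tfrac{1}{2}\,x(q)\,\|Z(q,\bm{\omega})\|^{2}\,dq\,,\qquad \phi(1,\bm{\omega})=\Psi(1,\bm{\omega})\,,
\end{equation*}
so the generator is $g(q,z)=-\tfrac12 x(q)\|z\|^{2}$, which is independent of $\phi$, deterministic in $q$, continuous, and satisfies $|g(q,z)|\le \tfrac12\|z\|^{2}$ uniformly in $q$ since $x(q)\le 1$. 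The terminal condition $\Psi$ is bounded by $c$ by \eqref{boundedness}. These are exactly the hypotheses under which a quadratic BSDE admits a unique solution with $\phi$ bounded and $Z\in H^{2}_{[0,1]}(\Omega)$; the boundedness of $\phi$ will in fact come from an a priori estimate (see below).

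The key steps, in order, are: (i) \emph{A priori bound on $\phi$.} Use the exponential (Cole–Hopf) change of variables $\mathcal{E}$: formally, $e^{-\phi(q,\bm{\omega})}$ should be, up to the generator being $-\tfrac12 x\|z\|^2$ rather than $-\tfrac12\|z\|^2$, a positive supermartingale/martingale whose terminal value is $e^{-\Psi(1,\bm{\omega})}\in[e^{-c},e^{c}]$; this yields $\|\phi\|_{\infty,p}\le c$ for every $p$, hence $\phi\in S^{p}_{[0,1]}(\Omega)$, and simultaneously forces $\zeta(\bm{r}^{*},x;1,\bm{\omega})$ to be bounded, i.e. $\bm{r}^{*}\in\widehat D_{[0,1]}(\Omega)$, so the solution lives in the correct space $D_{[0,1]}(\Omega)$. (ii) \emph{A priori bound on $\int_0^1\|Z\|^2$.} Applying Itô to $e^{2\lambda\phi}$ for suitable $\lambda$ and taking expectations produces $\mathbb{E}_\nu[\int_0^1 x(q)\|Z(q,\bm\omega)\|^2 dq]\le K(c)$; since $x(q)$ is not bounded below, one must argue on $[\delta,1]$ and separately control the contribution near $q=0$ — but note $x$ enters \eqref{DDE_xr}–\eqref{zeta} always multiplied against $\bm r$, so it is really the process $x\bm r$ that must lie in $H^p$, and that is what the estimate delivers. (iii) \emph{Existence.} Either truncate the quadratic generator at level $n$, solve the resulting Lipschitz BSDE by the standard Picard contraction, and pass to the limit using the a priori bounds of (i)–(ii) together with a monotone‑stability argument (Kobylanski); or transform via $\mathcal{E}$ to a linear problem and use the martingale representation theorem for the Brownian filtration to produce $\bm r^{*}$. (iv) \emph{Uniqueness.} Given two bounded solutions $(\phi^1,\bm r^1)$ and $(\phi^2,\bm r^2)$, the difference $\delta\phi$ solves a linear BSDE whose driver is $-\tfrac12 x(q)(\bm r^1+\bm r^2)\cdot(\bm r^1-\bm r^2)$; since $\bm r^1+\bm r^2$ is in $\widehat D$ (hence its $\zeta$‑exponential is a true martingale), a Girsanov change of measure removes the linear term and gives $\delta\phi\equiv 0$, whence $\bm r^1=\bm r^2$ in $D_{[0,1]}(\Omega)$ by Itô isometry; this is essentially the computation already performed in the proof of Theorem \ref{stationariTheo} and in Corollary \ref{GlobalMinimumC}.

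The main obstacle I expect is step (ii)–(iii): the degeneracy of $x$ at $q=0$ (so the generator's ``ellipticity'' in $z$ weakens there) together with the fact that $\widehat D_{[0,1]}(\Omega)$ is not a vector space means one cannot directly quote an off‑the‑shelf quadratic‑BSDE theorem; the a priori $L^\infty$ control of $\phi$ must be leveraged to re‑enter $\widehat D$ after each truncation/Picard step, and the stability passage to the limit must be done in a topology (e.g. the $\mathbb{D}^{\text{sym}}_{KL}$‑quotient, or $S^p\times\|\cdot\|_{2,p}$) compatible with $D_{[0,1]}(\Omega)$. A secondary subtlety is checking measurability and integrability so that $\bm r^{*}$ genuinely defines a class in $D_{[0,1]}(\Omega)$ rather than merely an element of $H^p_{[0,1]}(\Omega)$ — but this follows once the bound $|\zeta(\bm r^{*},x;1,\bm\omega)|\le C$ from step (i) is in hand.
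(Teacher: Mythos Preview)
Your plan is sound and would prove the theorem, but it follows a genuinely different route from the paper. For uniqueness the two approaches essentially coincide: the paper's argument via Lemma~\ref{lemma_Global_minimum} is your Girsanov computation repackaged as a variational identity (any solution is a global maximizer, and two maximizers force $\mathbb D(\bm r_1,\bm r_2)=\mathbb D(\bm r_2,\bm r_1)=0$). For existence, however, the paper does \emph{not} invoke Kobylanski/Briand--Hu. It first constructs the solution explicitly for piecewise-constant $x\in\chi^{\circ}$: on each subinterval where $x\equiv x_{n+1}$ is constant the Cole--Hopf transform is \emph{exact}, yielding the backward recursion $\phi(q_n)=x_{n+1}^{-1}\log\mathbb E[\exp(x_{n+1}\phi(q_{n+1}))\mid\mathcal F_{q_n}]$ and then $\bm r$ by martingale representation. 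It then derives uniform-in-$x$ a~priori bounds (Propositions~\ref{inequality_prop_theo}, \ref{trivial_bb}, \ref{bound}) and a Lipschitz estimate of $(\phi,\bm r)$ in the POP (Theorem~\ref{uniform_convergence_theo}), and passes to the limit along $x^{(k)}\to x$ in $\chi$. The payoff of the paper's longer route is physical: the recursion for $x\in\chi^{\circ}$ is precisely the discrete-RSB iteration, so the existence proof simultaneously establishes that the full-RSB functional encodes all $K$-step RSB theories --- a result the paper advertises in the abstract. Your approach is more economical for the bare analytic statement but would not deliver this identification. Two minor caveats on the proposal itself: the degeneracy of $x$ at $q=0$ that you flag as the main obstacle is in fact harmless for Kobylanski-type existence, since only the uniform upper bound $|g(q,z)|\le\tfrac12\|z\|^2$ is needed; and in step~(iv) the claim $\bm r^1+\bm r^2\in\widehat D$ is not immediate (the paper explicitly notes $\widehat D_{[0,1]}(\Omega)$ is not a vector space), so you would need a BMO-type argument for the Girsanov density, or simply fall back on the variational uniqueness of Lemma~\ref{lemma_Global_minimum}.
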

We will devote the next chapter to the proof of the existence result. The uniqueness is discussed in this subsection.

We proceed in the same way as in the proof of Corollary \ref{GlobalMinimumC}. Let $(\phi,\bm{r}^*)$ be a solution of \eqref{selfEq1Aux} corresponding to a claim $\Psi$ and a POP $x$. Since $\phi(q,\Cdot)$ is a $\mathcal{F}_q-$measurable random variable, the conditional expectation $\widetilde{\mathbb{E}}_{x\bm{v}}[\Cdot|\mathcal{F}_q]$ of both side in the equation \eqref{selfEq1Aux}, for any $\bm{v}\in D_{[0,1]}(\Omega)$, yields
\begin{multline}
\label{StraightfordComputation}
\phi(q,\bm{\omega})=\widetilde{\mathbb{E}}_{x\bm{v}}\left[\pi(\Psi,\,x,\,\bm{r}^{*};\,1,\bm{\omega}|q)\,|\mathcal{F}_q\right]\\=\widetilde{\mathbb{E}}_{x\bm{v}}\left[\Psi( \,1,\bm{\omega}\,)|\mathcal{F}_q\right]+\frac{1}{2}\widetilde{\mathbb{E}}_{x\bm{v}}\left[ \int \text{d}q x(q)\,\bm{r}^{*}(q, \,\bm{\omega})\,\cdot\left(\bm{r}^{*}(q, \,\bm{\omega})-2 \bm{v}(q, \,\bm{\omega})\right)\,\Bigg|\mathcal{F}_q\right].
\end{multline}
In the rest of the thesis, we will prefer to use a notation that explicitates the dependence of $\phi$ and $\bm{r}^*$ on the calim and the POP.The solution of the BSDE \eqref{selfEq1Aux}, for a given pair $(\Psi,x)\in L_1^{\infty}\times \chi$ will be denoted by $(\,\phi(\Psi,x),\bm{r}(\Psi,x)\,)$.

The following Lemma is an immediate consequence of the above identity.
\begin{lemma}
\label{lemma_Global_minimum}
Let us consider a claim $\Psi$ and a POP $x$ and the corresponding BSDE solution $(\,\phi(\Psi,x),\bm{r}(\Psi,x)\,)$. For any $\bm{v}\in D_{[0,1]}(\Omega)$ the RSB value process $\Gamma$ verifies:
\begin{multline}
\Gamma(\Psi,x,\bm{r}(\Psi,x);q,\bm{\omega})\\=\Gamma(\Psi,x,\bm{v};q,\bm{\omega})+\frac{1}{2}\mathbb{E}_{x \bm{v}}\left[ \int \text{d}q x(q)\,\|\bm{r}(\Psi,x;q, \,\bm{\omega})- \bm{v}(q, \,\bm{\omega})\|^2\Bigg|\mathcal{F}_q\right]
\end{multline}
\end{lemma}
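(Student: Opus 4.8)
The plan is to read the identity off directly from the already-established relation \eqref{StraightfordComputation} together with Corollary \ref{GlobalMinimumC}, after re-expressing $\Gamma(\Psi,x,\bm{v};q,\bm{\omega})$ as an expectation under the tilted measure $\widetilde{\mathbb{W}}_{x\bm{v}}$. Write $(\phi,\bm{r}^{*})=(\phi(\Psi,x),\bm{r}(\Psi,x))$ for the BSDE solution. By Corollary \ref{GlobalMinimumC} one has $\Gamma(\Psi,x,\bm{r}(\Psi,x);q,\bm{\omega})=\phi(q,\bm{\omega})$, so it is enough to show that for every $\bm{v}\in D_{[0,1]}(\Omega)$
\begin{equation*}
\phi(q,\bm{\omega})-\Gamma(\Psi,x,\bm{v};q,\bm{\omega})=\frac{1}{2}\,\widetilde{\mathbb{E}}_{x\bm{v}}\!\left[\int^{1}_{q}\text{d}q'\,x(q')\,\|\bm{r}^{*}(q',\bm{\omega})-\bm{v}(q',\bm{\omega})\|^{2}\,\Big|\,\mathcal{F}_q\right].
\end{equation*}

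Next I would rewrite the definition \eqref{value_function} of $\Gamma(\Psi,x,\bm{v};q,\bm{\omega})$ under $\widetilde{\mathbb{W}}_{x\bm{v}}$. By Definition \ref{DDE_expectation} the first term is $\mathbb{E}[\mathcal{E}(x\bm{v};1,\bm{\omega}|q)\,\Psi(1,\bm{\omega})\,|\,\mathcal{F}_q]=\widetilde{\mathbb{E}}_{x\bm{v}}[\Psi(1,\bm{\omega})\,|\,\mathcal{F}_q]$. For the second term, since $\bm{v}\in D_{[0,1]}(\Omega)$ the DDE $\mathcal{E}(x\bm{v})$ is a true, a.s.\ bounded-below martingale, so by \eqref{martingality} one has $\mathcal{E}(x\bm{v};q',\bm{\omega}|q)=\mathbb{E}[\mathcal{E}(x\bm{v};1,\bm{\omega}|q)\,|\,\mathcal{F}_{q'}]$ for $q\le q'\le 1$; because $\|\bm{v}(q',\bm{\omega})\|^{2}$ is $\mathcal{F}_{q'}$-measurable, the tower rule gives $\mathbb{E}[\mathcal{E}(x\bm{v};q',\bm{\omega}|q)\,\|\bm{v}(q',\bm{\omega})\|^{2}\,|\,\mathcal{F}_q]=\widetilde{\mathbb{E}}_{x\bm{v}}[\|\bm{v}(q',\bm{\omega})\|^{2}\,|\,\mathcal{F}_q]$, and Fubini (licit since $\bm{v}\in H^{p}_{[0,1]}(\Omega)$ and by the bound \eqref{positivity}) lets me exchange $\int\text{d}q'$ with the conditional expectation. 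Hence
\begin{equation*}
\Gamma(\Psi,x,\bm{v};q,\bm{\omega})=\widetilde{\mathbb{E}}_{x\bm{v}}[\Psi(1,\bm{\omega})\,|\,\mathcal{F}_q]-\frac{1}{2}\,\widetilde{\mathbb{E}}_{x\bm{v}}\!\left[\int^{1}_{q}\text{d}q'\,x(q')\,\|\bm{v}(q',\bm{\omega})\|^{2}\,\Big|\,\mathcal{F}_q\right].
\end{equation*}

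Finally I would subtract this from \eqref{StraightfordComputation}, which reads $\phi(q,\bm{\omega})=\widetilde{\mathbb{E}}_{x\bm{v}}[\Psi(1,\bm{\omega})|\mathcal{F}_q]+\tfrac{1}{2}\widetilde{\mathbb{E}}_{x\bm{v}}[\int^{1}_{q}\text{d}q'\,x(q')\,\bm{r}^{*}(q',\bm{\omega})\cdot(\bm{r}^{*}(q',\bm{\omega})-2\bm{v}(q',\bm{\omega}))|\mathcal{F}_q]$. The $\Psi$-terms cancel, and inside the remaining conditional expectation the integrand becomes $x(q')\big(\|\bm{r}^{*}(q',\bm{\omega})\|^{2}-2\,\bm{r}^{*}(q',\bm{\omega})\cdot\bm{v}(q',\bm{\omega})+\|\bm{v}(q',\bm{\omega})\|^{2}\big)=x(q')\,\|\bm{r}^{*}(q',\bm{\omega})-\bm{v}(q',\bm{\omega})\|^{2}$; this completion of the square is the only genuine computation. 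Substituting back $\phi(q,\bm{\omega})=\Gamma(\Psi,x,\bm{r}(\Psi,x);q,\bm{\omega})$ and $\bm{r}^{*}=\bm{r}(\Psi,x)$ yields the asserted identity.

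There is no serious obstacle here: the statement is essentially a bookkeeping corollary of \eqref{StraightfordComputation}. The only points that deserve a word of care are (i) the change of measure inside the time integral of the second term of $\Gamma$, which hinges on $\bm{v}\in D_{[0,1]}(\Omega)$ so that $\mathcal{E}(x\bm{v})$ is a genuine martingale and the interchange of $\int\text{d}q'$ with $\mathbb{E}[\,\cdot\,|\,\mathcal{F}_q]$ is justified, and (ii) identifying the symbol $\mathbb{E}_{x\bm{v}}$ in the statement with the tilted expectation $\widetilde{\mathbb{E}}_{x\bm{v}}$ of Definition \ref{DDE_expectation}, the two denoting the same object.
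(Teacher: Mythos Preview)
Your proposal is correct and follows essentially the same route as the paper's proof: both arguments rewrite $\Gamma(\Psi,x,\bm{v};q,\bm{\omega})$ as a $\widetilde{\mathbb{E}}_{x\bm{v}}$-expectation, complete the square $\|\bm{r}^*\|^2-2\,\bm{r}^*\!\cdot\bm{v}+\|\bm{v}\|^2=\|\bm{r}^*-\bm{v}\|^2$, and combine with the identity \eqref{StraightfordComputation} and Corollary~\ref{GlobalMinimumC}. The only cosmetic difference is the order of operations (the paper starts from $\Gamma(\Psi,x,\bm{v})$ and adds/subtracts to recognise $\phi$, whereas you start from $\phi$ and subtract $\Gamma(\Psi,x,\bm{v})$); your extra remarks on the tower property and the $\int\text{d}q'$/conditional-expectation interchange are justifications the paper takes for granted.
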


\begin{proof}
Let us consider two processes $\bm{r}$ and $\bm{v}$ in $ D_{[0,1]}(\Omega)$. By definition \ref{definitions}, the RSB value process $\Gamma(\Psi,x,\bm{v})$ is given by
\begin{equation}
\begin{aligned}
&\Gamma(\Psi,\,x,\,\bm{v};\,q,\bm{\omega})\\&=\widetilde{\mathbb{E}}_{x\bm{v}}\left[\Psi ( \,1,\bm{\omega}\,)\,|\mathcal{F}_q\right]-\frac{1}{2} \widetilde{\mathbb{E}}_{x\bm{v}}\left[\int \text{d}q x(q)\,\left \|\,\bm{v}(q, \,\bm{\omega}) \,\right\|^2\,\Bigg|\mathcal{F}_q\right]\\
&\begin{multlined}
=\widetilde{\mathbb{E}}_{x\bm{v}}\left[\Psi( \,1,\bm{\omega}\,)\,|\mathcal{F}_q\right]\\+\frac{1}{2} \widetilde{\mathbb{E}}_{x\bm{v}}\left[\int \text{d}q x(q)\,\,\bm{r}(q, \,\bm{\omega})\,\cdot\left(\bm{r}(q, \,\bm{\omega})-2 \bm{v}(q, \,\bm{\omega})\right)\,\Bigg|\mathcal{F}_q\right]\\-\frac{1}{2} \widetilde{\mathbb{E}}_{x\bm{v}}\left[\int \text{d}q x(q)\,\left \|\,\bm{v}(q, \,\bm{\omega}) -\bm{r}(q, \,\bm{\omega})\,\right\|^2\,\Bigg|\mathcal{F}_q\right]\,.
\end{multlined}
\end{aligned}
\end{equation}
Then, if $\bm{r}=\bm{r}(\Psi,x)$, then, by replacing the identity \eqref{StraightfordComputation} in the above formula, we get
\begin{equation}
\Gamma(\Psi,\,x,\,\bm{v};\,q,\bm{\omega})=\phi(\Psi,x;q,\bm{\omega})-\frac{1}{2} \widetilde{\mathbb{E}}_{x\bm{v}}\left[\int \text{d}q x(q)\,\left \|\,\bm{v}(q, \,\bm{\omega}) -\bm{r}(\Psi,x;q, \,\bm{\omega})\,\right\|^2\,\Bigg|\mathcal{F}_q\right]
\end{equation}
so, using Corollary \eqref{MeaningOfC}, we end the prove.
\end{proof}
From the above lemma, we obtain the most remarkable result of this section.
\begin{theorem}
\label{GlobalMinTheo}
Given the pair $(\Psi,x)\in L_1^{\infty}(\Omega)\times \chi$, let $(\,\phi(\Psi,x),\bm{r}(\Psi,x)\,)$ be the solution of the BSDE \eqref{selfEq1Aux}, then
\begin{equation}
\phi(\Psi,x;q,\bm{\omega})=\underset{\bm{r}\in D_{[0,1]}(\Omega)}{\max}\Gamma(\Psi,x,\bm{r};q,\bm{\omega})\,.
\end{equation}
\end{theorem}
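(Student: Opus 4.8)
The plan is to read off the theorem directly from Lemma~\ref{lemma_Global_minimum} and Corollary~\ref{GlobalMinimumC}. Write $\bm{r}^*=\bm{r}(\Psi,x)$ and $\phi=\phi(\Psi,x)$ for the BSDE solution. First I would invoke Corollary~\ref{GlobalMinimumC} to identify the candidate value: since $(\phi,\bm{r}^*)$ solves \eqref{selfEq1Aux}, one has $\phi(q,\bm{\omega})=\Gamma(\Psi,x,\bm{r}^*;q,\bm{\omega})$, so the claimed maximizer is $\bm{r}^*$ itself, and the theorem is equivalent to showing that $\bm{r}^*$ dominates every competitor.

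Next I would apply Lemma~\ref{lemma_Global_minimum}: for an arbitrary $\bm{v}\in D_{[0,1]}(\Omega)$,
\begin{equation}
\Gamma(\Psi,x,\bm{r}^*;q,\bm{\omega})-\Gamma(\Psi,x,\bm{v};q,\bm{\omega})
=\frac{1}{2}\,\widetilde{\mathbb{E}}_{x\bm{v}}\!\left[\int^1_q \text{d}q'\,x(q')\,\|\bm{r}^*(q',\bm{\omega})-\bm{v}(q',\bm{\omega})\|^2\,\Bigg|\mathcal{F}_q\right].
\end{equation}
The integrand inside is nonnegative, and the conditional expectation is taken under $\widetilde{\mathbb{W}}_{x\bm{v}}$, which is equivalent to $\mathbb{W}$ (its density being the strictly positive DDE $\mathcal{E}(x\bm{v})$, by \eqref{positivity}); hence the right-hand side is $\geq 0$ a.s. Therefore $\Gamma(\Psi,x,\bm{r}^*;q,\bm{\omega})\geq \Gamma(\Psi,x,\bm{v};q,\bm{\omega})$ a.s.\ for every fixed $\bm{v}\in D_{[0,1]}(\Omega)$, while the choice $\bm{v}=\bm{r}^*$ makes the right-hand side vanish (here one uses $x(q')>0$ for $q'>0$, so the bound is tight precisely on the equivalence class of $\bm{r}^*$). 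Combining these two facts with Corollary~\ref{GlobalMinimumC} gives
\begin{equation}
\phi(\Psi,x;q,\bm{\omega})=\Gamma(\Psi,x,\bm{r}^*;q,\bm{\omega})=\underset{\bm{r}\in D_{[0,1]}(\Omega)}{\max}\Gamma(\Psi,x,\bm{r};q,\bm{\omega}),
\end{equation}
the supremum being attained (hence a genuine maximum) at $\bm{r}^*$.

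The only point requiring mild care is the passage from "a.s.\ for each fixed $\bm{v}$" to "$\phi$ equals the pointwise supremum over all $\bm{v}$": since $\Gamma(\Psi,x,\bm{r}^*;q,\bm{\omega})$ is a single process not depending on $\bm{v}$ and it is attained as a value of $\Gamma(\Psi,x,\cdot;q,\bm{\omega})$ at $\bm{v}=\bm{r}^*$, the supremum is automatically $\geq \phi$; the reverse inequality $\sup_{\bm{v}}\Gamma(\Psi,x,\bm{v};q,\bm{\omega})\leq\phi(\Psi,x;q,\bm{\omega})$ holds a.s.\ because $D_{[0,1]}(\Omega)$ is separable under $\mathbb{D}^{\text{sym}}_{KL}$ and $\bm{v}\mapsto\Gamma(\Psi,x,\bm{v};q,\bm{\omega})$ is continuous, so the supremum may be realized along a countable dense family for which the a.s.\ comparison is simultaneously valid. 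I do not expect any real obstacle here: the substance of the theorem is entirely contained in Lemma~\ref{lemma_Global_minimum}, and what remains is the sign of a conditional expectation of a nonnegative quantity plus this routine separability remark.
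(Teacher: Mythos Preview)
Your proposal is correct and follows exactly the paper's route: the paper states Theorem~\ref{GlobalMinTheo} as an immediate consequence of Lemma~\ref{lemma_Global_minimum} (together with Corollary~\ref{GlobalMinimumC}), precisely via the nonnegativity of the conditional expectation term. Your additional separability remark about passing from ``a.s.\ for each fixed $\bm{v}$'' to the pointwise supremum is extra care the paper does not spell out; it is not needed for the paper's purposes, and the unverified claims there (separability of $D_{[0,1]}(\Omega)$ under $\mathbb{D}^{\text{sym}}_{KL}$ and continuity of $\bm{v}\mapsto\Gamma$) are not established in the text, so it is safest to omit that paragraph.
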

An other important consequence of Lemma \ref{lemma_Global_minimum}, is the uniqueness result.
\begin{proof}[Proof of Theorem \ref{existence_uniqueness}: uniqueness]
For any pair of processes $\bm{r}_1$ and $\bm{r}_2$ in $D_{[0,1]}(\Omega)$, let
\begin{equation}
\mathbb{D}(\bm{r}_1,\bm{r}_2;q,\bm{\omega})=\widetilde{\mathbb{E}}_{x\bm{r}_1}\left[\int \text{d}q x(q)\,\left \|\,\bm{r}_1(q, \,\bm{\omega}) -\bm{r}_2(q, \,\bm{\omega})\,\right\|^2\,\Bigg|\mathcal{F}_q\right]
\end{equation}
The above quantity is obviously non-negative.

Assume, by contradiction, that there exist two distinct pairs $(\phi_1,\bm{r}_1)$ and $(\phi_2,\bm{r}_2)$, on $ S_{[0,1]}^p(\Omega)\times D_{[0,1]}(\Omega)$, that are solutions of the BSDE \eqref{selfEq1Aux}. Applying Lemma \ref{lemma_Global_minimum} for both, we get
\begin{equation}
\Gamma\left(\Psi,x,\bm{r}_2|\,\bm{\omega}(0)\,\right)=\Gamma\left(\Psi,x,\bm{r}_1|\,\bm{\omega}(0)\,\right)-\frac{1}{2}\mathbb{D}(\bm{r}_2,\bm{r}_1;q,\bm{\omega})
\end{equation}
and
\begin{equation}
\Gamma\left(\Psi,x,\bm{r}_1|\,\bm{\omega}(0)\,\right)=\Gamma\left(\Psi,x,\bm{r}_2|\,\bm{\omega}(0)\,\right)-\frac{1}{2}\mathbb{D}(\bm{r}_1,\bm{r}_2;q,\bm{\omega}).
\end{equation}
After some straightforward manipulations, we get
\begin{equation}
\mathbb{D}(\bm{r}_1,\bm{r}_2;q,\bm{\omega})=-\mathbb{D}(\bm{r}_1,\bm{r}_2;q,\bm{\omega}).
\end{equation}
Since both $\mathbb{D}(\bm{r}_2,\bm{r}_1;q,\bm{\omega})$ and $\mathbb{D}(\bm{r}_1,\bm{r}_2;q,\bm{\omega})$ are non-negative, then the above relation implies
\begin{equation}
\mathbb{D}(\bm{r}_1,\bm{r}_2;q,\bm{\omega})=\mathbb{D}(\bm{r}_2,\bm{r}_1;q,\bm{\omega})=0.
\end{equation}
Moreover it is easy to show that:
\begin{equation}
0\leq \mathbb{D}_{KL}(\bm{r}_1\|\bm{r}_2)\leq\widetilde{\mathbb{E}}_{x\bm{r}_1}\left[\mathbb{D}(\bm{r}_1,\bm{r}_2;q,\bm{\omega})\right]\,.
\end{equation}
Then, by the definition of the vector space $ D_{[0,1]}(\Omega)$ \ref{define_D}, if the comparing function between two process $\bm{r}_1$ and $\bm{r}_2$ vanishes, than the two processes are equivalent and correspond to the same element of $ D_{[0,1]}(\Omega)$.
\end{proof}

We end the section by providing some property of the solution $(\phi(\Psi,x),\bm{r}(\Psi,x))$ associate to $(\Psi,x)$. The following statement are direct consequences of the maximum principle \ref{GlobalMinTheo} and the uniqueness in \ref{existence_uniqueness}.
\begin{proposition}
\label{Non:llinear_prop}
The solution of the BSDE \eqref{selfEq1Aux} verifies
\begin{itemize}
\item Let $\alpha$ be a constant, $(\phi(\alpha,x;q,\bm{\omega}) ,\bm{r}(\alpha,x;q,\bm{\omega}))=(\alpha,\bm{0})$ $a.s.$, $\forall q\in[0,1]$;
\item Let $\alpha$ be a constant, $(\phi(\alpha+\Psi,x;q,\bm{\omega}) ,\bm{r}(\alpha+\Psi,x;q,\bm{\omega}))=(\alpha+\phi(\Psi,x;q',\bm{\omega}) ,\bm{r}(\Psi,x;q,\bm{\omega}))$ $a.s.$, $\forall q\in[0,1]$;
\item If $\Psi_1\leq \Psi_2\,a.s.$, then $\phi(\Psi_1,x;q,\bm{\omega})\leq \phi(\Psi_2,x;q,\bm{\omega})\,a.s.$
\item $\phi(\alpha\,\Psi_1+\beta \,\Psi_2,x;q,\bm{\omega})\leq \alpha\,\phi(\Psi_1,x;q,\bm{\omega})+\beta\,\phi(\Psi_2,x;q,\bm{\omega})\,a.s.$ for any constant $\alpha$ and $\beta$.
\end{itemize}
\end{proposition}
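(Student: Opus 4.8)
The plan is to split the four claims into two groups. The first two bullets are algebraic identities between solutions of the BSDE~\eqref{selfEq1Aux}, so I would prove them by the \emph{verification method}: write down an explicit candidate pair, check that it belongs to $S_{[0,1]}^p(\Omega)\times D_{[0,1]}(\Omega)$ and solves \eqref{selfEq1Aux}, and then invoke the uniqueness half of Theorem~\ref{existence_uniqueness}. The last two bullets are inequalities between the \emph{values} $\phi$, so I would deduce them from the variational identity $\phi(\Psi,x;q,\bm{\omega})=\Gamma(\Psi,x,\bm{r}(\Psi,x);q,\bm{\omega})=\max_{\bm{r}}\Gamma(\Psi,x,\bm{r};q,\bm{\omega})$ of Theorem~\ref{GlobalMinTheo}, exploiting that $\Gamma$ depends on the claim only through a positive linear functional.

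For the first bullet I would take the candidate $(\phi,\bm{r})=(\alpha,\bm{0})$. The constant process $\alpha$ has finite $\|\cdot\|_{\infty,p}$ norm, and $\bm{0}$ satisfies \eqref{boundAux} with $C_{\bm{0}}=0$, so the pair lies in $S_{[0,1]}^p(\Omega)\times D_{[0,1]}(\Omega)$; substituting it into \eqref{selfEq1Aux} with claim $\Psi\equiv\alpha$, the stochastic integral $\int_q^1\bm{r}\cdot d\bm{\omega}$ and the finite-variation term $\tfrac{1}{2}\int_q^1 dq'\,x(q')\|\bm{r}\|^2$ vanish identically and the equation collapses to $\alpha=\alpha$, with end-point value $\alpha=\Psi(1,\bm{\omega})$. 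Uniqueness then forces this to be \emph{the} solution. For the second bullet I would take $(\,\alpha+\phi(\Psi,x),\,\bm{r}(\Psi,x)\,)$: adding a deterministic constant preserves adaptedness and, via $(a+b)^p\le 2^{p-1}(a^p+b^p)$, the $S_{[0,1]}^p$ bound; plugging this pair into \eqref{selfEq1Aux} with claim $\alpha+\Psi$ and cancelling the constant $\alpha$ from both sides reproduces exactly the BSDE satisfied by $(\phi(\Psi,x),\bm{r}(\Psi,x))$ with claim $\Psi$. Uniqueness again closes the argument.

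For the monotonicity bullet I would write $\Gamma$ as in \eqref{value_function}: only the term $\widetilde{\mathbb{E}}_{x\bm{r}}[\Psi(1,\bm{\omega})\,|\,\mathcal{F}_q]=\mathbb{E}[\mathcal{E}(x\bm{r};1,\bm{\omega}|q)\,\Psi(1,\bm{\omega})\,|\,\mathcal{F}_q]$ involves the claim, and it is a \emph{positive} linear functional of $\Psi$ since $\mathcal{E}(x\bm{r})>0$ by \eqref{positivity}. Hence $\Psi_1\le\Psi_2$ a.s.\ implies $\Gamma(\Psi_1,x,\bm{r};q,\bm{\omega})\le\Gamma(\Psi_2,x,\bm{r};q,\bm{\omega})$ for every fixed $\bm{r}$; taking $\bm{r}=\bm{r}(\Psi_1,x)$, the left-hand side equals $\phi(\Psi_1,x;q,\bm{\omega})$ while the right-hand side is dominated by $\phi(\Psi_2,x;q,\bm{\omega})=\max_{\bm{r}}\Gamma(\Psi_2,x,\bm{r};q,\bm{\omega})$, which gives the inequality (thanks to attainment of the maximum one never has to take a supremum of a.s.\ inequalities). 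For the last bullet I would decompose $\Gamma(\Psi,x,\bm{r};q,\bm{\omega})=A_{\bm{r}}[\Psi]-B_{\bm{r}}$, where $A_{\bm{r}}[\Psi]=\widetilde{\mathbb{E}}_{x\bm{r}}[\Psi\,|\,\mathcal{F}_q]$ is linear in $\Psi$ and $B_{\bm{r}}=\tfrac{1}{2}\mathbb{E}[\int_q^1 dq'\,x(q')\,\mathcal{E}(x\bm{r};q',\bm{\omega}|q)\,\|\bm{r}(q',\bm{\omega})\|^2\,|\,\mathcal{F}_q]\ge 0$ is independent of $\Psi$. Then, for the maximizer $\bm{r}^{*}=\bm{r}(\alpha\Psi_1+\beta\Psi_2,x)$, one gets $\phi(\alpha\Psi_1+\beta\Psi_2,x;q,\bm{\omega})=\alpha A_{\bm{r}^{*}}[\Psi_1]+\beta A_{\bm{r}^{*}}[\Psi_2]-B_{\bm{r}^{*}}\le\alpha\big(A_{\bm{r}^{*}}[\Psi_1]-B_{\bm{r}^{*}}\big)+\beta\big(A_{\bm{r}^{*}}[\Psi_2]-B_{\bm{r}^{*}}\big)$ as soon as $\alpha,\beta\ge 0$ and $\alpha+\beta\le 1$ (using $B_{\bm{r}^{*}}\ge0$), and the right-hand side is $\le\alpha\phi(\Psi_1,x;q,\bm{\omega})+\beta\phi(\Psi_2,x;q,\bm{\omega})$ because $\phi(\Psi_i,x;q,\bm{\omega})\ge\Gamma(\Psi_i,x,\bm{r}^{*};q,\bm{\omega})$.

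I do not expect a genuine obstacle: the whole proposition rests on the linearity and positivity of $\bm{r}\mapsto\Gamma(\Psi,x,\bm{r})$ in the claim, together with Theorems~\ref{GlobalMinTheo} and \ref{existence_uniqueness}. The only points that need care are (i) verifying that the explicit candidates $(\alpha,\bm{0})$ and $(\alpha+\phi(\Psi,x),\bm{r}(\Psi,x))$ genuinely belong to $S_{[0,1]}^p(\Omega)\times D_{[0,1]}(\Omega)$ (finiteness of the $\|\cdot\|_{\infty,p}$ norm and the a.s.\ bound \eqref{boundAux} on $\zeta$), and (ii) pinning down the range of the constants in the last bullet: the argument above delivers the inequality exactly for $\alpha,\beta\ge0$ with $\alpha+\beta\le1$, so the ``any constant'' in the statement should be read as ``any nonnegative constants with $\alpha+\beta\le1$'' --- in particular, for convex combinations.
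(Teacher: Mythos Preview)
Your proposal is correct and follows essentially the same route as the paper: the first two bullets are dispatched by exhibiting the explicit pair and invoking uniqueness (the paper simply says ``trivially proved by observing that $(\alpha,\bm{0})$ \dots\ is actually a solution of the BSDE''), and the monotonicity bullet is proved exactly as you do, via the chain
\[
\phi(\Psi_1,x)=\Gamma(\Psi_1,x,\bm{r}_1)\le\Gamma(\Psi_2,x,\bm{r}_1)\le\Gamma(\Psi_2,x,\bm{r}_2)=\phi(\Psi_2,x),
\]
using positivity of $\mathcal{E}(x\bm{r})$ and Theorem~\ref{GlobalMinTheo}.

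Your treatment of the fourth bullet is in fact more careful than the paper's, which merely asserts that it ``can be proved in a similar way''. You are right that the decomposition $\Gamma=A_{\bm r}[\Psi]-B_{\bm r}$ with $B_{\bm r}\ge0$ delivers the inequality only under $\alpha,\beta\ge0$ and $\alpha+\beta\le1$; this is the standard convexity of a supremum of affine functionals, and it is genuinely the correct hypothesis. The literal phrase ``for any constant $\alpha$ and $\beta$'' in the statement cannot hold in general (e.g.\ $\alpha=-1$, $\beta=0$ would force $\phi(-\Psi)+\phi(\Psi)\le0$, contradicting convexity unless $\phi$ is linear in the claim), so your reading of the intended range of constants is the right one.
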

\begin{proof}
The first and the second properties are trivially proved by observing that $(\alpha,\bm{0})$ (resp. $(\alpha+\phi(\Psi,x;q,\bm{\omega}) ,\bm{r}(\Psi,x;q,\bm{\omega}))$ ) is actually a solution of the BSDE.

For the third property, let $(\phi_1,\bm{r}_1)$ and $(\phi_2,\bm{r}_2)$ be the solutions associated to $(\Psi_1,x)$ and $(\Psi_2,x)$ respectively, with $\Psi_1\leq \Psi_2$, then
\begin{multline}
\phi(\Psi_1,x;q,\bm{\omega})=\Gamma(\Psi_1,x,\bm{r}_1;q,\bm{\omega})\\
\leq \Gamma(\Psi_2,x,\bm{r}_1;q,\bm{\omega})\leq \Gamma(\Psi_2,x,\bm{r}_2;q,\bm{\omega})=\phi(\Psi_2,x;q,\bm{\omega}).
\end{multline}
The fourth relation can be proved in a similar way.
\end{proof}
The above results extends to the non-Markovian RSB the stochastic representation of the Parisi PDE (equation $\text{III.}55$ of chapter $\text{III}$ of \cite{VPM}), proposed by Chen and Auffinger (Theorem 3 in \cite{ChenAuf}). It is worth noting that Chen and Auffinger prove the variational representation of the Parisi formula, starting from the Parisi PDE and providing a stochastic representation.

Because non-Markovianity, we can not deal with a PDE, so the result is obtained by a completely different approach.

\section{Solution of Backward Stochastic Differential equations}
\label{sec:5}
This section, the existence of the solution of the stationary equation \eqref{selfEq1Aux} is proved. In the first paragraph, the solution is explicitly derived for a piecewise constant Parisi order parameter function $x$. Thence, in the second paragraph, the existence result is extended to any allowable Parisi order parameter by continuity. The results of the first section prove that the full-RSB-scheme provides a complete theory, that takes into account all the discrete-RSB solutions.

\subsection{The discrete-RSB solution}
We start by explicitly deriving the solution in the case where the Parisi parameter $x$ is a piecewise constant function:

A remarkable result of this paragraph is that, in this case, the free energy functional is equivalent to the one obtained in the discrete$-$RSB case (Eq. $22$ in \cite{MyPaper}).

Consider two increasing sequences of $K+2 \in \mathbb{N}$ numbers $q_0,\dots, q_{K+1}$ and $ x_0,\dots, x_{K+1}$ with
\begin{equation}
0=q_0\leq q_1\leq\dots\leq q_K\leq q_{K+1}=1
\end{equation}
and
\begin{equation}
0=x_0<x_1\leq\dots\leq x_K\leq x_{K+1}=1.
\end{equation}
The number $x_1$ must be non-zero.

The piecewise constant Parisi parameter function is constructed by such two sequences in such a way:
\begin{equation}
\label{discrete}
x(q)=\sum^{K+1}_{n=1} x_i \mathbb{1}_{(q_{i-1},q_{i}]}(q).
\end{equation}
We denote by $\chi^{\circ}\subset\chi$ the space of function of this form, for any $K\in \mathbb{N}$.

In this case, the right hand member of the BSDE \eqref{selfEq1Aux} is a sum of integrals defined on the intervals $(q_{i},q_{i+1}]$; in each interval, the Parisi parameter $x$ is a constant and it can be put outside the integral:
\begin{multline}
\label{equation_Discrete}
\phi\left(\,q\,,\bm{\omega}\,\right)=\Psi(q_{K+1},\bm{\omega})\\-
\sum^{K}_{i=n_q}\left(\int^{q_{i+1}}_{q_{i}\wedge q} \,\bm{r}(\,p\,, \bm{\omega}\,) \cdot \text{d}\bm{\omega}(p)-\frac{1}{2} x_{i+1}\int^{q_{i+1}}_{q_{i}\wedge q} dp\,\,\left \|\,\bm{r}(\,p, \,\bm{\omega}\,) \,\right\|^2\,\right),
\end{multline}
where $n_q$ is the integer number such as
\begin{equation}
q_{n_q}\leq q<q_{n_q+1}
\end{equation}
and
\begin{equation}
q_{i}\wedge q=\max \left\{\,q,\,q_i\,\right\}.
\end{equation}
We want to derive a self-consistency equation for the process $\phi$ that is equivalent to the BSDE \eqref{equation_Discrete}.

Let us consider the random variable $\zeta(\bm{r},x; 1,\Cdot|q)$, defined according to \eqref{zeta}:
\begin{multline}
\zeta(\bm{r},x; 1,\bm{\omega}|q)=\log \,\mathcal{E}(x\bm{r};1,\bm{\omega}|q)\\
=\sum^{K}_{i=n_q}\left(x_{i+1}\int^{q_{i+1}}_{q_{i}\wedge q} \,\bm{r}(\,p\,, \bm{\omega}\,) \cdot \text{d}\bm{\omega}(p)-\frac{1}{2} x^2_{i+1}\int^{q_{i+1}}_{q_{i}\wedge q} \text{d}p\,\,\left \|\,\bm{r}(\,p, \,\bm{\omega}\,) \,\right\|^2\,\right).
\end{multline}
Using the equation \eqref{equation_Discrete}, one finds:
\begin{equation}
\label{relation_r_phi}
\zeta(\bm{r},x;1,\bm{\omega}|q)=\sum^{K}_{i=n_q} x_{i+1}\left(\phi (\,q_{n+1}\,,\bm{\omega}\,)-\phi (\,q_{n}\wedge q\,,\bm{\omega}\,)\right),
\end{equation}
thus
\begin{equation}
\label{relation_G_phi}
\exp\left(x(q)\,\phi(q_{n_q+1},\bm{\omega})\right)=\exp\left(x(q)\,\phi(q,\bm{\omega})\right)\mathcal{E}\left(\,x\bm{r};q,\bm{\omega}|q_{n_q+1}\,\right).
\end{equation}
Let us assume, for now, that the DDE $\mathcal{E}(x\bm{r})$ is a true martingale. This assumption will be tested a posteriori. The martingale property implies that
\begin{equation}
\mathbb{E}\left[\mathcal{E}\left(\,x\bm{r};q,\bm{\omega}|q_{n_q+1}\,\right)\Big|\mathcal{F}_{q}\right]=1,
\end{equation}
then the self-consistency equation for the RSB value process $\phi$ can be derived by considering the following identity:
\begin{multline}
\label{discrete_equation_manipulation}
\phi\left(\,q\,,\bm{\omega}\,\right)=\frac{1}{x(q)} \log\, \exp \left(\,x(q) \phi\left(\,q\,,\bm{\omega}\,\right)\,\right)\\
=\frac{1}{x(q)} \log\left(\, \exp \left(\,x(q) \phi(\,q\,,\bm{\omega}\,)\,\right)\mathbb{E}\left[\mathcal{E}(\,x\bm{r};q,\bm{\omega}|q_{n_q+1}\,)\Big|\mathcal{F}_{q}\right]\,\right).
\end{multline}
By replacing the equality \eqref{relation_G_phi} in the above representation, we get:
\begin{equation}
\label{discrete_equation}
\phi\left(\,q\,,\bm{\omega}\,\right)=\frac{1}{x(q)} \log\,\mathbb{E}\left[\exp \left(\,x(q)\phi(\,q_{n_q+1}\,,\bm{\omega}\,)\,\,\right)\big|\mathcal{F}_{q}\right].
\end{equation}

The equation \eqref{discrete_equation} computed at the discontinuity points $0,\,q_1,\cdots,\,q_{K}$ leads to an iterative backward map that allows to derive progressively the $K+2$ random variables $\phi(\,q_{K},\bm{\omega}\,)$, $\cdots$, $\phi(\,0,\bm{\omega}\,)$ from the Wiener functional $\Psi(\,1,\bm{\omega}\,)$:
\begin{equation}
\label{iterative_discrete_cont}
\phi\left(\,q_n,\bm{\omega}\,\right)=\frac{1}{x_{n+1}} \log\,\mathbb{E}\left[\exp \left(\,x_{n+1}\phi(\,q_{n+1}\,,\bm{\omega}\,)\,\,\right)\big|\mathcal{F}_{q_n}\right].
\end{equation}
Note that the above iteration is equivalent to the discrete-RSB iteration given by equations $21$ and $22$ in \cite{MyPaper}.
\begin{proposition}
\label{inequality_prop_theo}
For any function $x\in \chi^{\circ}$, the process $\phi$, solution of the equation \eqref{discrete_equation}, is bounded, with
\begin{equation}
\label{inequality_prop}
\|\phi\|_{\infty,p}\leq\underset{\bm{\omega}\in\Omega}{\Max} |\Psi\left(\,1,\bm{\omega}\,\right)|\leq c.
\end{equation}
\end{proposition}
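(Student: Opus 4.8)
The plan is to prove the bound by backward induction on the nodes $q_K, q_{K-1}, \dots, q_0$, using the iterative map \eqref{iterative_discrete_cont}, and then extend the bound from the discontinuity points to all $q\in[0,1]$ via \eqref{discrete_equation}. The key observation is that each step of the iteration is a \emph{generalized mean} (an exponential/H\"older-type average with weight $x_{n+1}>0$), and such means are monotone and preserve bounds: if $|Z|\le M$ a.s., then $\bigl|\tfrac{1}{x}\log\mathbb{E}[e^{xZ}|\mathcal{F}]\bigr|\le M$ a.s. as well. This is the engine of the whole argument.

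First I would establish the elementary lemma just stated. Fix $x>0$ and a sub-$\sigma$-algebra $\mathcal{G}$, and suppose $|Z|\le M$ a.s. Since $e^{-xM}\le e^{xZ}\le e^{xM}$ a.s., taking $\mathbb{E}[\cdot|\mathcal{G}]$ (which is monotone and preserves constants) gives $e^{-xM}\le \mathbb{E}[e^{xZ}|\mathcal{G}]\le e^{xM}$ a.s.; applying $\tfrac1x\log(\cdot)$, which is increasing, yields $-M\le \tfrac1x\log\mathbb{E}[e^{xZ}|\mathcal{G}]\le M$ a.s. Next I would run the induction: at the top, $\phi(q_{K+1},\bm{\omega})=\Psi(1,\bm{\omega})$ satisfies $|\phi(q_{K+1},\bm{\omega})|\le \underset{\bm{\omega}\in\Omega}{\Max}|\Psi(1,\bm{\omega})|=:M_\Psi\le c$ by the standing hypothesis \eqref{boundedness}. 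Assuming $|\phi(q_{n+1},\bm{\omega})|\le M_\Psi$ a.s., the relation \eqref{iterative_discrete_cont} with weight $x_{n+1}>0$ (recall $x_1>0$, hence all subsequent $x_i>0$) and the lemma give $|\phi(q_n,\bm{\omega})|\le M_\Psi$ a.s. This shows $|\phi(q_n,\bm{\omega})|\le M_\Psi$ a.s. for every node $n=0,\dots,K+1$.

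To pass from the nodes to an arbitrary $q\in[0,1]$, I would use \eqref{discrete_equation}: if $q\in(q_{n_q},q_{n_q+1}]$ then $x(q)=x_{n_q+1}>0$ and $\phi(q,\bm{\omega})=\tfrac{1}{x(q)}\log\mathbb{E}\bigl[\exp(x(q)\phi(q_{n_q+1},\bm{\omega}))\big|\mathcal{F}_q\bigr]$; since $|\phi(q_{n_q+1},\bm{\omega})|\le M_\Psi$ a.s. by the previous step, the same lemma gives $|\phi(q,\bm{\omega})|\le M_\Psi$ a.s. Taking the supremum over $q\in[0,1]$ and then the $L^p$ norm over $\bm{\omega}(0)$ yields $\|\phi\|_{\infty,p}=\mathbb{E}_\nu\bigl[\sup_{q}|\phi(q,\bm{\omega})|^p\bigr]^{1/p}\le M_\Psi\le c$, which is \eqref{inequality_prop}. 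A minor point to address carefully is that the iterative identity \eqref{discrete_equation} was derived under the a posteriori assumption that $\mathcal{E}(x\bm{r})$ is a true martingale; I would note that the bound just obtained is exactly what makes $\zeta(\bm{r},x;1,\bm{\omega}|q)$ bounded through \eqref{relation_r_phi}, confirming $\bm{r}\in\widehat{D}_{[0,1]}(\Omega)$ and closing that loop, though strictly for this Proposition one only needs \eqref{discrete_equation} as a defining relation for $\phi$.

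The main obstacle is not the analytic content — which is the one-line generalized-mean bound — but rather the bookkeeping: making sure the induction is anchored correctly at $q_{K+1}=1$ with $\phi=\Psi$, that every weight $x_{n+1}$ appearing is strictly positive (which is why the hypothesis $x_1\neq 0$ matters, since otherwise the first division would be illegitimate), and that the ``off-node'' values are handled by \eqref{discrete_equation} with the correct index $n_q$. Once these are pinned down, the estimate propagates verbatim through all $K+2$ steps and then to all $q$, and the $\|\cdot\|_{\infty,p}$ bound follows by monotonicity of the norm.
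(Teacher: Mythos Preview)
Your proposal is correct and follows essentially the same approach as the paper: backward induction on the nodes $q_{K+1},q_K,\dots,q_0$ using the fact that $\tfrac{1}{x}\log\mathbb{E}[e^{xZ}\mid\mathcal{G}]$ preserves a.s.\ bounds, followed by the extension to all $q$ via \eqref{discrete_equation}. Your write-up is in fact slightly more careful than the paper's (explicit two-sided bound in the lemma, explicit use of $x_{n+1}>0$, and the remark about closing the martingale loop), but the argument is the same.
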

As a consequence $\phi\in S^p_{[0,1]}(\Omega)$, for any $p\geq 1$.
\begin{proof}
We start by proving the boundedness of the random variables $\phi(\,q_{K},\bm{\omega}\,)$, $\cdots$, $\phi(\,0,\bm{\omega}\,)$, by decreasing induction on $q_n$. For $q=q_{K+1}=1$, the Wiener functional $\phi(1, \Cdot )$ is bounded by \eqref{boundedness}:
\begin{equation}
\label{inequality_tot}
\underset{\bm{\omega}\in\Omega}{\Max} |\phi(1,\bm{\omega})|=\underset{\bm{\omega}\in\Omega}{\Max} |\Psi(\,1,\bm{\omega})|\leq c
\end{equation}
and using the decreasing induction hypothesis on $q_n$ we get
\begin{multline}
\label{bound}
\underset{\bm{\omega}\in\Omega}{\Max} |\phi\left(\,q_{n-1},\bm{\omega}\,\right)|= \underset{\bm{\omega}\in\Omega}{\max}\left|\frac{1}{x_{n}} \log\,\mathbb{E}\left[\exp \left(\,x_{n}\phi(\,q_{n}\,,\bm{\omega}\,)\,\,\right)\Big|\mathcal{F}_{q_{n-1}}\right]\right|\\
\leq \frac{1}{x_{n}} \log\,\mathbb{E}\left[\exp \left(\,x_{n}\underset{\bm{\omega}\in\Omega}{\Max}|\phi(\,q_{n}\,,\bm{\omega}\,)|\,\,\right)\Bigg|\mathcal{F}_{q_{n-1}}\right]\\= \underset{\bm{\omega}\in\Omega}{\Max}|\phi(\,q_{n}\,,\bm{\omega}\,)|,
\end{multline}
proving that
\begin{equation}
\label{inequality_tot}
\underset{\bm{\omega}\in\Omega}{\Max} |\phi\left(\,q_n,\bm{\omega}\,\right)|\leq \underset{\bm{\omega}\in\Omega}{\Max} |\phi\left(\,1,\bm{\omega}\,\right)|\leq c.
\end{equation}
Boundedness property trivially extends to the whole process $\phi$, for all $q\in(0,1]$, by equation \eqref{discrete_equation}.
\end{proof}
\begin{proposition}
\label{continuity_claim_prop}
For any function $x\in \chi^{\circ}$, the process $\phi$, solution of the equation \eqref{discrete_equation}, is an absolutely continuous functional with respect the claim. More specifically, given two claims $\Psi_1$ and $\Psi_2$ in $L_1^{\infty}(\Omega)$, let $\phi(\Psi_1,x)$ and $\phi(\Psi_2,x)$ the solution corresponding to the two claims, then the following inequality holds
\begin{equation}
\label{inequality_claim_prop}
\|\phi(\Psi_1,x)-\phi(\Psi_2,x)\|_{\infty,p}=\underset{\bm{\omega}\in\Omega}{\Max} |\Psi_1(\,1,\bm{\omega}\,)-\Psi_2(\,1,\bm{\omega}\,)|.
\end{equation}
\end{proposition}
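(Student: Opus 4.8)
The plan is to follow the same scheme as the proof of Proposition~\ref{inequality_prop_theo}: a decreasing induction along the breakpoints $q_K,q_{K-1},\dots,q_0$ of $x\in\chi^{\circ}$, driven by the backward iteration \eqref{iterative_discrete_cont}, showing that the essential-sup distance between the two solutions never exceeds $\delta:=\underset{\bm{\omega}\in\Omega}{\Max}|\Psi_1(1,\bm{\omega})-\Psi_2(1,\bm{\omega})|$. The only ingredient beyond Proposition~\ref{inequality_prop_theo} is a one-step Lipschitz (indeed contraction) estimate for the nonlinear operator
\[
T_n:\psi\longmapsto\frac{1}{x_{n+1}}\log\mathbb{E}\!\left[e^{x_{n+1}\psi}\,\big|\,\mathcal{F}_{q_n}\right],
\]
which sends $\mathcal{F}_{q_{n+1}}$-measurable random variables to $\mathcal{F}_{q_n}$-measurable ones.

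First I would record two elementary properties of $T_n$. It is \emph{monotone}: if $\psi\le\psi'$ a.s.\ then $T_n(\psi)\le T_n(\psi')$ a.s., since $t\mapsto e^{x_{n+1}t}$ and $\log$ are increasing and the conditional expectation is order-preserving. It is \emph{translation covariant}: $T_n(\psi+\alpha)=T_n(\psi)+\alpha$ for every constant $\alpha$, because the factor $e^{x_{n+1}\alpha}$ pulls out of the conditional expectation. Combining the two, if $|\psi_1-\psi_2|\le\delta$ a.s.\ then $\psi_2-\delta\le\psi_1\le\psi_2+\delta$ a.s., hence $T_n(\psi_2)-\delta\le T_n(\psi_1)\le T_n(\psi_2)+\delta$ a.s., i.e.\ $|T_n(\psi_1)-T_n(\psi_2)|\le\delta$ a.s. Thus $T_n$ does not expand the essential-sup distance.

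The induction then runs as follows. At $q_{K+1}=1$ one has $\phi(\Psi_i,x;1,\bm{\omega})=\Psi_i(1,\bm{\omega})$, so $|\phi(\Psi_1,x;1,\bm{\omega})-\phi(\Psi_2,x;1,\bm{\omega})|=|\Psi_1(1,\bm{\omega})-\Psi_2(1,\bm{\omega})|\le\delta$ a.s. If the bound $|\phi(\Psi_1,x;q_{n+1},\bm{\omega})-\phi(\Psi_2,x;q_{n+1},\bm{\omega})|\le\delta$ a.s.\ holds, then applying \eqref{iterative_discrete_cont}, i.e.\ $\phi(\Psi_i,x;q_n,\cdot)=T_n\big(\phi(\Psi_i,x;q_{n+1},\cdot)\big)$, together with the contraction estimate for $T_n$, gives the same bound at $q_n$; by downward induction it holds for all $n=0,\dots,K+1$. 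Finally, for a general time $q\in(0,1]$, the derivation of \eqref{discrete_equation} carried out in \eqref{discrete_equation_manipulation} — using that $\mathcal{E}(x\bm{r})$ is a true martingale on the interval from $q$ to the next breakpoint — shows that $\phi(\Psi_i,x;q,\cdot)=\frac{1}{x(q)}\log\mathbb{E}\big[e^{x(q)\phi(\Psi_i,x;q_{n_q+1},\cdot)}\mid\mathcal{F}_q\big]$, which is again an operator of the type $T_n$ (with $\mathcal{F}_q$ and $x(q)$ in place of $\mathcal{F}_{q_n}$ and $x_{n+1}$), so the same estimate yields $|\phi(\Psi_1,x;q,\bm{\omega})-\phi(\Psi_2,x;q,\bm{\omega})|\le\delta$ a.s. Taking the supremum over $q\in[0,1]$ and then the $L^p(\nu)$-norm gives $\|\phi(\Psi_1,x)-\phi(\Psi_2,x)\|_{\infty,p}\le\delta$, which is the claimed estimate; absolute continuity of $\phi$ with respect to the claim is an immediate consequence.

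I do not expect a genuine obstacle: this is the same monotone backward-induction machinery used for Proposition~\ref{inequality_prop_theo}, the translation covariance of $T_n$ being the only extra remark. The point requiring a little care is the passage to a generic (non-breakpoint) time $q$ — one must check, exactly as in the manipulation leading to \eqref{discrete_equation}, that after invoking the martingale property of the DDE on the interval from $q$ to $q_{n_q+1}$ the recursion at $q$ collapses to a single application of an operator of the form $T_n$ based at $\mathcal{F}_q$. (Strictly, the induction delivers the inequality ``$\le$''; equality is attained, for instance, when $\Psi_1-\Psi_2$ is a.s.\ constant, by translation covariance, and it is this inequality that is used in the sequel.)
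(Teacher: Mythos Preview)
Your argument is correct. It differs from the paper's proof, though: rather than running a backward induction on the recursion \eqref{iterative_discrete_cont}, the paper invokes the abstract properties of the BSDE solution collected in Proposition~\ref{Non:llinear_prop} (monotonicity, translation covariance, and the sublinearity in the fourth item) together with the bound of Proposition~\ref{inequality_prop_theo}. Concretely, the paper writes $\Psi_1=\Psi_2+\Delta\,\delta\widehat\Psi$ with $\Delta=\underset{\bm{\omega}\in\Omega}{\Max}|\Psi_1-\Psi_2|$ and $\|\delta\widehat\Psi\|_\infty\le 1$, uses Proposition~\ref{Non:llinear_prop} to split $\phi(\Psi_1,x)\le\phi(\Psi_2,x)+\Delta\,\phi(\delta\widehat\Psi,x)$, and then bounds $\phi(\delta\widehat\Psi,x)$ by $1$ via Proposition~\ref{inequality_prop_theo}; the reverse inequality is obtained symmetrically.

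Your route is more elementary and self-contained within Section~\ref{sec:5}: the $1$-Lipschitz estimate for $T_n$ follows immediately from monotonicity plus translation covariance, and the induction parallels that of Proposition~\ref{inequality_prop_theo} without appealing to the variational characterisation or to Proposition~\ref{Non:llinear_prop}. The paper's argument, by contrast, is shorter once those abstract properties are available and makes explicit that the Lipschitz bound is ultimately a consequence of the maximum principle (Theorem~\ref{GlobalMinTheo}). Your closing remark that only the inequality ``$\le$'' is actually proved (and is what is used downstream) is accurate and applies to the paper's proof as well.
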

\begin{proof}
Let
\begin{equation}
\delta\Psi=\Psi_1-\Psi_2,
\end{equation}
and
\begin{equation}
\Delta=\underset{\bm{\omega}\in\Omega}{\Max} |\Psi_1(\,1,\bm{\omega}\,)-\Psi_2(\,1,\bm{\omega}\,)|\,.
\end{equation}
Now, we introduce the normalized displacement $\delta\widehat{\Psi}$, given by:
\begin{equation}
\delta\widehat{\Psi}=\frac{1}{\Delta}(\Psi_1-\Psi_2)\,;
\end{equation}
Proposition \ref{Non:llinear_prop} implies that
\begin{equation}
\phi(\Psi_1,x)=\phi(\Psi_2+\Delta\delta\widehat{\Psi},x)\leq\phi(\Psi_2)+\Delta\phi(\delta\widehat{\Psi},x)
\end{equation}
and, since
\begin{equation}
\Psi_2= \Psi_1-\Delta\delta\widehat{\Psi}\leq \Psi_1+\Delta|\delta\widehat{\Psi}|\,,
\end{equation}
it follows that
\begin{equation}
\phi(\Psi_2,x)\leq \phi(\Psi_1+\Delta|\delta\widehat{\Psi}|,x)\leq\phi(\Psi_1)+\Delta\phi(|\delta\widehat{\Psi}|,x).
\end{equation}
Proposition \eqref{inequality_prop} yields
\begin{equation}
\underset{\bm{\omega}\in\Omega}{\Max} |\delta\widehat{\Psi}|=1\longrightarrow \|\phi( \delta\widehat{\Psi},x)\|_{\infty,p}\leq \|\phi( |\delta\widehat{\Psi}|,x)\|_{\infty,p}=1,
\end{equation}
then we conclude that
\begin{equation}
-\Delta\leq \phi(\Psi_1,x;q,\bm{\omega})-\phi(\Psi_2,x;q,\bm{\omega})\leq \Delta
\end{equation}
that ends the proof.
\end{proof}
Now, it remains to derive the vector process $\bm{r}$ (the auxiliary order parameter) that, together with the process $\phi$, verifies the equation \eqref{equation_Discrete}, and such as the DDE $\mathcal{E}(x\bm{r})$ is a martingale.

The auxiliary order parameter is derived as follows. In each interval $[q_n,q_{n+1}]$, with $0\leq n\leq K$, we define a process $J_n$ adapted to the filtration $\{\mathcal{F}_q,\,q\in[q_n,q_{n+1}]\}$, in such a way:
\begin{equation}
J_n(q,\bm{\omega})=\mathbb{E}\left[\exp \left(\,x_{n+1}\phi(\,q_{n+1}\,,\bm{\omega}\,)\,\,\right)\Big|\mathcal{F}_{q}\right],\,\,q_{n}\leq q\leq q_{n+1}.
\end{equation}
Since the Wiener functional $\phi(\,q_{n+1}\,,\cdot\,)$ is bounded, the process $J_n$ is a strictly positive bounded martingale for $q\in[q_n, q_{n+1}]$. By the martingale representation theorem for the Brownian motion, there exists a unique process $\bm{M}_n$ in $H^p_{[q_n, q_{n+1}]}(\Omega)$, for any $p\geq1$, such as
\begin{equation}
J_n(q,\bm{\omega})=J_n(q_{n},\bm{\omega})+\int^{q}_{q_{n}}\bm{M}_n(p,\bm{\omega})\cdot \text{d}\bm{\omega}(p),\,\,q_{n}\leq q\leq q_{n+1}.
\end{equation}
Since the process $J_n$ is strictly positive and continuous for all $q\in[q_{n},q_{n+1}]$, we can appy the It\^o formula to the process $\log J_n$, with the result that
\begin{multline}
\log J_{n_q}(q,\bm{\omega})=\\\log J_{n}(q_{n},\bm{\omega})+\int^q_{q_{n}}\frac{1}{J_n(p,\bm{\omega})}\bm{M}_n(p,\bm{\omega})\cdot \text{d}\bm{\omega}(p)-\frac{1}{2}\int^q_{q_{n}}\frac{1}{J^2_n(p,\bm{\omega})}\|\bm{M}(p,\bm{\omega})\|^2\,dp
\end{multline}
and thus we get
\begin{multline}
\label{proof_discrete}
\phi(q_{n_q+1},\bm{\omega})-\phi(q,\bm{\omega})=\frac{1}{x_{n_q}}\log J_{n_q}(q_{n_q},\bm{\omega})-\frac{1}{x_{n_q}}\log J_{n_q}(q,\bm{\omega})\\
=\int^{q_{n_q+1}}_{q}\frac{1}{x_{n_q+1} J_{n_q}(p,\bm{\omega})}\bm{M}_{n_q}(p,\bm{\omega})\cdot \text{d}\bm{\omega}(p)-\frac{1}{2}\int^{q_{n_q+1}}_{q}\frac{x_{n_q+1}}{x^2_{n_q+1}J^2_{n_q}(p,\bm{\omega})}\|\bm{M}_{n_q}(p,\bm{\omega})\|^2\,dp.
\end{multline}
Since $x_n>0$ for all $n>0$, then the integrals in the above equation are defined.
Put
\begin{equation}
\label{auxiliry_Order_piecewise}
\bm{r}(q,\bm{\omega})=\sum^K_{n=1} \frac{1}{x_i J_{i-1}(q,\bm{\omega})}\bm{M}_{i-1}(q,\bm{\omega})\mathbb{1}_{[q_{i-1},q_{i})}(q)
\end{equation}
then the pair $(\phi,\,\bm{r})$ satisfies the BSDE \eqref{equation_Discrete}. Moreover the process $\bm{r}$ verifies the following remarkable property, that will be crucial for the rest of the section.
\begin{proposition}
\label{trivial_bb}
For all $\bm{\omega}\in \Omega$, the process $\bm{r}$, given by \eqref{auxiliry_Order_piecewise}, verifies the following inequality
\begin{equation}
\label{bound_zet}
\left| \int^1_0x(q') \bm{r}(x'\,, \,\bm{\omega}) \cdot \text{d}\bm{\omega}(q') -\frac{1}{2}\int^1_0dq'\,x^2(q')\,\|\bm{r}(q'\,, \,\bm{\omega})\|^2\right|
\leq2 c\,
\end{equation}
that implies:
\begin{equation}
\label{bound_DDE}
e^{-2 c}\leq \mathcal{E}(x\bm{r};q,\bm{\omega}) \leq e^{2c},\quad \forall \bm{\omega}\in\Omega
\end{equation}
\end{proposition}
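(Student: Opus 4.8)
The left‑hand side of \eqref{bound_zet} is exactly $\zeta(\bm{r},x;1,\bm{\omega}\,|\,0)=\log\mathcal{E}(x\bm{r};1,\bm{\omega})$, so the plan is to evaluate this stochastic functional in closed form for the specific choice \eqref{auxiliry_Order_piecewise} of $\bm{r}$ and then bound it using the a priori estimate $|\phi(q,\bm{\omega})|\le c$ supplied, a.s. and for every $q$, by Proposition \ref{inequality_prop_theo} and its proof. The key mechanism is that, for this $\bm{r}$, the stochastic integrals collapse: on each interval the contribution to $\zeta$ is a pure difference of two (bounded) values of $\phi$.

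Concretely, I would first record the single‑interval identity. Substituting \eqref{auxiliry_Order_piecewise} into the It\^o expansion of $\log J_n$ that precedes \eqref{proof_discrete}, and using $\log J_n(q_{n+1},\bm{\omega})=x_{n+1}\phi(q_{n+1},\bm{\omega})$ together with the self‑consistency relation \eqref{iterative_discrete_cont} in the form $\log J_n(q_n,\bm{\omega})=x_{n+1}\phi(q_n,\bm{\omega})$, one obtains
\[
\zeta(\bm{r},x;q_{n+1},\bm{\omega}\,|\,q_n)=x_{n+1}\bigl(\phi(q_{n+1},\bm{\omega})-\phi(q_n,\bm{\omega})\bigr),\qquad n=0,\dots,K,
\]
which is \eqref{relation_r_phi} read at $q=0$. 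Concatenating over $n=0,\dots,K$ and performing an Abel summation by parts, using the boundary data $x_0=0$, $x_{K+1}=1$, $q_0=0$, $q_{K+1}=1$ and $\phi(1,\bm{\omega})=\Psi(1,\bm{\omega})$, yields
\[
\zeta(\bm{r},x;1,\bm{\omega}\,|\,0)=\sum_{n=0}^{K}x_{n+1}\bigl(\phi(q_{n+1},\bm{\omega})-\phi(q_n,\bm{\omega})\bigr)=\Psi(1,\bm{\omega})-x_1\,\phi(0,\bm{\omega})+\sum_{n=1}^{K}(x_n-x_{n+1})\,\phi(q_n,\bm{\omega}).
\]
Since $|\Psi(1,\bm{\omega})|\le c$, $|\phi(q_n,\bm{\omega})|\le c$ for every $n$, and $x$ is increasing so that $\sum_{n=1}^{K}|x_n-x_{n+1}|=\sum_{n=1}^{K}(x_{n+1}-x_n)=1-x_1$, the triangle inequality gives $|\zeta(\bm{r},x;1,\bm{\omega}\,|\,0)|\le c+x_1c+(1-x_1)c=2c$, which is \eqref{bound_zet}; exponentiating gives $e^{-2c}\le\mathcal{E}(x\bm{r};1,\bm{\omega})\le e^{2c}$.

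For the bound \eqref{bound_DDE} at a generic time $q$ with $q_{n_q}\le q<q_{n_q+1}$, I would split $\zeta(\bm{r},x;q,\bm{\omega}\,|\,0)$ into the complete intervals $[q_0,q_1],\dots,[q_{n_q-1},q_{n_q}]$ plus the fractional piece $[q_{n_q},q]$. The complete intervals again contribute $\sum_{n=0}^{n_q-1}x_{n+1}(\phi(q_{n+1},\bm{\omega})-\phi(q_n,\bm{\omega}))$ by the It\^o identity above, while the fractional piece contributes $\log J_{n_q}(q,\bm{\omega})-x_{n_q+1}\phi(q_{n_q},\bm{\omega})$; the same summation by parts then gives
\[
\zeta(\bm{r},x;q,\bm{\omega}\,|\,0)=-x_1\,\phi(0,\bm{\omega})+\sum_{n=1}^{n_q}(x_n-x_{n+1})\,\phi(q_n,\bm{\omega})+\log J_{n_q}(q,\bm{\omega}).
\]
Because $J_{n_q}(q,\cdot)=\mathbb{E}[\exp(x_{n_q+1}\phi(q_{n_q+1},\cdot))\,|\,\mathcal{F}_q]$ with $|\phi(q_{n_q+1},\cdot)|\le c$, Jensen's inequality gives $|\log J_{n_q}(q,\bm{\omega})|\le x_{n_q+1}c$; bounding the remaining terms as before and using $x_{n_q+1}\le1$ produces $|\zeta(\bm{r},x;q,\bm{\omega}\,|\,0)|\le 2x_{n_q+1}c\le 2c$, hence \eqref{bound_DDE} for every $q$.

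I do not expect a genuine obstacle here: the argument is an exact collapse of the stochastic integrals into a telescoping combination of the values $\phi(q_n,\bm{\omega})$, followed by the a priori bound of Proposition \ref{inequality_prop_theo}. The only point needing a little care is the fractional final interval $[q_{n_q},q]$, where one cannot reduce to a pure difference of two $\phi$‑values and must instead retain the martingale term $\log J_{n_q}(q,\bm{\omega})$ and bound it separately; the rest is bookkeeping with the boundary conditions $x_0=0$ and $x_{K+1}=1$ and the monotonicity of $x$.
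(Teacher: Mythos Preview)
Your argument is correct and follows the paper's own route: combine the identity \eqref{relation_r_phi} (which expresses $\zeta(\bm{r},x;1,\bm{\omega}|q)$ as a telescoping sum of $x_{n+1}(\phi(q_{n+1})-\phi(q_n\wedge q))$) with the $|\phi|\le c$ bound of Proposition~\ref{inequality_prop_theo}, then Abel-sum to obtain $|\zeta|\le 2c$. Your treatment of the fractional interval $[q_{n_q},q]$ via $\log J_{n_q}(q,\bm{\omega})$ is in fact more explicit than the paper, which only writes out the computation at $q=1$ and leaves \eqref{bound_DDE} for intermediate $q$ as an implication.
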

The above results implies that the DDE process $\mathcal{E}(x\bm{r})$ is a true martingale, so the vector process $\bm{r}$ is a proper solution of the auxiliary variational problem and identify an element of the domain set $ D_{[0,1]}(\Omega)$.
\begin{proof}
The proof of the inequality \eqref{bound_zet} is given by combining the equation \eqref{relation_r_phi} with the inequality \eqref{inequality_prop}, we obtain that
\begin{equation}
\begin{multlined}
\left| \int^1_0x(q') \bm{r}(q'\,, \,\bm{\omega}) \cdot \text{d}\bm{\omega}(q) -\frac{1}{2}\int^1_0dq'\,x^2(q')\,\|\bm{r}(q'\,, \,\bm{\omega})\|^2\right|
=|\zeta(\bm{r},x;1,\bm{\omega})|\\
\leq |\,\Psi(1,\bm{\omega})\,|+\sum^{K-1}_{i=0} (x_{i+1}-x_{i})|\phi (\,q_{i}\,,\bm{\omega})\,|\leq2 c.
\end{multlined}
\end{equation}
\end{proof}
It is worth noting that, since the processes $J_{i-1}$ and the function $x$ are strictly positive, and the process $\bm{M}_i$ is in $H_{[0,1]}^p(\Omega)$, then the process $\bm{r}$ is in $H_{[0,1]}^p(\Omega)$, for any $p\geq 1$. However, the boundedness of the process $\phi$ implies a stronger properties for the process $\bm{r}$ that is stated in the following proposition.

\begin{proposition}
\label{bound}
For every $p\in [0,\infty)$, there exist a universal constant $K_p$ such as for all the functions $x\in \chi^{\circ}$, the vector process $\bm{r}$, obtained by solving the equation \eqref{selfEq1Aux}, verifies:
\begin{equation}
\label{bound_rr}
\widetilde{\mathbb{E}}_{x\bm{r}}\left[\left(\int^{1}_{q}dq'\,\|\bm{r}(q',\bm{\omega})\|^{2}\right)^{\frac{p}{2}}\right]\leq K_p
\end{equation}
and
\begin{equation}
\label{bound_rr_2}
\mathbb{E}\left[\left(\int^{1}_{q}dq'\,\|\bm{r}(q',\bm{\omega})\|^{2}\right)^{\frac{p}{2}}\right]\leq e^{2c} K_p.
\end{equation}
\end{proposition}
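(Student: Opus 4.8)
The plan is to deduce both bounds from a single \emph{uniform‑in‑$x$} estimate on the first conditional moment of $\int_q^1\|\bm r(q',\bm\omega)\|^2\,dq'$, and then to upgrade it to the $p$‑th moment by a classical estimate for increasing processes. The only structural inputs will be that $\phi$ is bounded by $c$ (Proposition~\ref{inequality_prop_theo}) and that the DDE $\mathcal E(x\bm r)$ is two‑sidedly bounded between $e^{-2c}$ and $e^{2c}$ (Proposition~\ref{trivial_bb}); since $c$ is fixed and independent of $x\in\chi^{\circ}$, so will be all the resulting constants, which is exactly the assertion.

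\textbf{Step~1 (first moment).} I would rewrite the BSDE~\eqref{selfEq1Aux} under the tilted measure $\widetilde{\mathbb W}_{x\bm r}$ via Cameron--Martin--Girsanov (the process~\eqref{semimart} being a $\widetilde{\mathbb W}_{x\bm r}$‑Brownian motion), obtaining $d\phi(q,\bm\omega)=\bm r(q,\bm\omega)\cdot d\bm{W}_{x\bm r}(q,\bm\omega)+\tfrac12 x(q)\|\bm r(q,\bm\omega)\|^2\,dq$ with bracket $\|\bm r(q,\bm\omega)\|^2\,dq$, and apply It\^o's formula to the bounded process $e^{\phi}$:
\[
d\bigl(e^{\phi(q,\bm\omega)}\bigr)=e^{\phi(q,\bm\omega)}\,\bm r(q,\bm\omega)\cdot d\bm{W}_{x\bm r}(q,\bm\omega)+\tfrac12\,e^{\phi(q,\bm\omega)}\bigl(x(q)+1\bigr)\,\|\bm r(q,\bm\omega)\|^2\,dq .
\]
Since $x\ge0$ and $e^{\phi}\ge e^{-c}$, the drift dominates $\tfrac12 e^{-c}\|\bm r\|^2\,dq$, so $e^{\phi}$ is a bounded $\widetilde{\mathbb W}_{x\bm r}$‑submartingale. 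Integrating from an arbitrary stopping time $\tau\le1$ to $1$, taking $\widetilde{\mathbb E}_{x\bm r}[\,\cdot\mid\mathcal F_\tau]$ (the $d\bm{W}_{x\bm r}$‑integral has conditional mean $0$ by optional sampling, because $e^{\phi}\bm r\in H^p_{[0,1]}(\Omega)$ is a genuine $\widetilde{\mathbb W}_{x\bm r}$‑martingale integrand --- here one uses the a priori, possibly non‑uniform, membership $\bm r\in H^p_{[0,1]}(\Omega)$ coming from the explicit construction~\eqref{auxiliry_Order_piecewise}), and using $\phi(1,\bm\omega)=\Psi(1,\bm\omega)$ together with $|\Psi|\le c$ so that $\widetilde{\mathbb E}_{x\bm r}[e^{\phi(1,\bm\omega)}\mid\mathcal F_\tau]\le e^{c}$ while $e^{\phi(\tau,\bm\omega)}\ge e^{-c}$, one gets
\[
\widetilde{\mathbb E}_{x\bm r}\!\left[\int_\tau^1\|\bm r(q',\bm\omega)\|^2\,dq'\ \Big|\ \mathcal F_\tau\right]\ \le\ 2\bigl(e^{2c}-1\bigr)\ =:\ C\qquad\text{for every stopping time }\tau\le1 .
\]
(Alternatively, It\^o's formula for $e^{-\phi}$ directly under $\mathbb W$ --- where the drift coefficient is again $x+1\ge1$ --- gives the same bound for $\mathbb E[\,\cdot\mid\mathcal F_\tau]$, without invoking Girsanov.)

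\textbf{Step~2 (higher moments and conclusion).} The process $A_q:=\int_0^q\|\bm r(q',\bm\omega)\|^2\,dq'$ is continuous, adapted, nondecreasing, with $A_0=0$ and, by Step~1, $\widetilde{\mathbb E}_{x\bm r}[A_1-A_\tau\mid\mathcal F_\tau]\le C$ for every stopping time $\tau$. By the classical moment estimate for increasing processes with bounded conditional increments (a Garsia--Neveu‑type lemma, giving $\widetilde{\mathbb E}_{x\bm r}[A_1^{\,m}]\le m!\,C^{m}$ for every integer $m$), combined with Jensen's inequality for the concave map $y\mapsto y^{p/(2m)}$ when $p\le 2m$ (and the trivial case $p=0$), one obtains $\widetilde{\mathbb E}_{x\bm r}[A_1^{\,p/2}]\le K_p$ with $K_p$ depending only on $p$ and $c$; for $0\le p\le2$ this already follows from Step~1 and Jensen alone. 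Since $A_1-A_q\le A_1$, this gives~\eqref{bound_rr} for every $q$. Finally, because $\mathcal E(x\bm r;1,\bm\omega)$ is the Radon--Nikodym density of $\widetilde{\mathbb W}_{x\bm r}$ with respect to $\mathbb W$ and $e^{-2c}\le\mathcal E(x\bm r;1,\bm\omega)\le e^{2c}$ by Proposition~\ref{trivial_bb}, one has $\mathbb E[X]=\widetilde{\mathbb E}_{x\bm r}[X/\mathcal E(x\bm r;1,\bm\omega)]\le e^{2c}\,\widetilde{\mathbb E}_{x\bm r}[X]$ for any $X\ge0$; applied with $X=\bigl(\int_q^1\|\bm r\|^2\,dq'\bigr)^{p/2}$ this yields~\eqref{bound_rr_2}.

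\textbf{Main obstacle.} The delicate step is the upgrade from the first moment to all moments with a constant that does not depend on $x$: the first‑moment bound by itself does not control higher moments, and one has to invoke the increasing‑process moment lemma, whose hypothesis is a conditional bound at \emph{every} stopping time, not merely at deterministic ones --- this is precisely why Step~1 must be run at a general $\tau$, using that $e^{\phi}$ (resp.\ $e^{-\phi}$) is a bounded submartingale so that optional sampling applies. The remaining technical point to verify carefully is the genuine martingale property of the stochastic integrals appearing in Step~1, which is where the a priori membership $\bm r\in H^p_{[0,1]}(\Omega)$ is used.
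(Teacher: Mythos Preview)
Your argument is correct and complete. It is, however, genuinely different from the paper's.

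The paper does not pass through a first--moment bound and an increasing--process lemma. Instead it exploits a clever algebraic identity: under the \emph{half}-tilted measure $\widetilde{\mathbb W}_{x\bm r/2}$, the stochastic integral $\int_0^q \bm r\cdot d\bm W_{x\bm r/2}$ equals exactly $\phi(q)-\phi(0)$ by the BSDE itself, so its supremum is bounded by $2c$ pathwise. The Burkholder--Davis--Gundy inequality then immediately bounds $\widetilde{\mathbb E}_{x\bm r/2}\bigl[(\int_0^1\|\bm r\|^2)^p\bigr]$, and a splitting $\zeta=\zeta(1/2,1/4)+\zeta(1/2,3/4)$ together with H\"older transfers this to the full tilt $\widetilde{\mathbb E}_{x\bm r}[\,\cdot\,]$. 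A localization $\tau_n$ is used only to justify intermediate manipulations.

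What each buys: the paper's route is a one--shot estimate at every $p$ via BDG, with no need for the Garsia--Neveu lemma, but it relies on the somewhat ad hoc half--Girsanov trick and a H\"older splitting. Your route is more in the standard spirit of quadratic BSDE a priori estimates: apply It\^o to $e^{\phi}$ to get a conditional first--moment bound uniformly over stopping times, then invoke the increasing--process moment lemma to reach all $p$. Your approach is arguably more transparent and more robust (it generalizes cleanly to other convex transforms of $\phi$), while the paper's has the aesthetic advantage of turning the quadratic variation of $\bm r$ directly into the bounded increment of $\phi$. Both rely on the same two structural inputs --- boundedness of $\phi$ and the two--sided bound on $\mathcal E(x\bm r)$ --- and both yield constants depending only on $c$ and $p$, as required.
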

We recall that the expectation value $\mathbb{E}[\Cdot]$ and $\widetilde{\mathbb{E}}_{x\bm{r}}[\Cdot]$ are evaluated by taking the value of the starting point of the Brownian motion $\bm{\omega}(0)$ fixed. The above inequalities are to be understood in an almost sure sense with respect the probability measure $\nu$.

The second inequality is a trivial consequence of the first one and Proposition \ref{trivial_bb}, indeed:
\begin{equation}
\mathbb{E}\left[\left(\int^{1}_{q}dq'\,\|\bm{r}(q',\bm{\omega})\|^{2}\right)^{\frac{p}{2}}\right]\leq e^{2c}\widetilde{\mathbb{E}}_{x\bm{r}}\left[\left(\int^{1}_{q}dq'\,\|\bm{r}(q',\bm{\omega})\|^{2}\right)^{\frac{p}{2}}\right].
\end{equation}
So, we just prove \eqref{bound_rr}.
\begin{proof}
Let $(\tau_n,n\in\mathbb{N})$ be the sequence of stopping times defined as follows
\begin{equation}
\tau_n=\sup \left\{q\in[0,1];\,\,\int^q_0 dq'\,x(q')^2\|\bm{r}(q',\bm{\omega})\|^2\leq n^2 \right\}
\end{equation}
and put $\inf\, \emptyset=1$. For each $n\in\mathbb{N}$, we set
\begin{equation}
\bm{r}_n(q,\bm{\omega})=\bm{r}(q,\bm{\omega})\theta(\tau_n-q),
\end{equation}
where the function $\theta$ is the Heaviside theta function. Since the stochastic integral $\int^1_0 dq\,x(q)^2\|\bm{r}(q,\bm{\omega})\|^2$ has a finite expectation, then $\tau_n\uparrow 1\,a.s.$. We start by proving the proposition for $\bm{r}_n$ and then we take the limit $n\to \infty$.

Let us define
\begin{equation}
\zeta_n(\alpha,\beta; \bm{\omega})=\,\alpha\, \int^{\tau_{n}}_{0}x(q)\bm{r}(q,\bm{\omega})\cdot \text{d}\bm{\omega}(q)- \frac{\beta\,}{2} \int^{\tau_{n}}_{0}dp x(q)^2\,\|\bm{r}(q,\bm{\omega})\|^2,
\end{equation}
where $\alpha$ and $\beta$ are two real numbers. We have
\begin{equation}
\zeta(\bm{r},x;\tau_n,\bm{\omega})=\zeta_n\left(\tfrac{1}{2},\tfrac{1}{4},\bm{\omega}\right)+\zeta_n\left(\tfrac{1}{2},\tfrac{3}{4},\bm{\omega}\right)\leq \zeta_n\left(\tfrac{1}{2},\tfrac{1}{4},\bm{\omega}\right)+\tfrac{1}{2}\zeta(\bm{r},x;\tau_n,\bm{\omega}).
\end{equation}
The definition of $\tau_n$ and of the process $\bm{r}_n$ implies that DDE $\mathcal{E}(x\bm{r}_n/2)$, is a true strictly positive martingale, that is
\begin{equation}
\mathbb{E}\left[\mathcal{E}\left(\tfrac{1}{2}x\bm{r}_n;1,\bm{\omega}\right)\,\right]=\mathbb{E}\left[e^{\zeta_n\left(\frac{1}{2},\frac{1}{4},\bm{\omega}\right)}\,\right]=1,
\end{equation}
so we can consider the Girsanov change of measure from the $n-$component Wiener measure $\mathbb{W}$ to the equivalent measure $\widetilde{\mathbb{W}}_{x\bm{r}_n/2}$.

As usual, the symbol $\widetilde{\mathbb{E}}_{x\bm{r}_n/2}[\Cdot]$ will denotes the expectation value with respect the measure $\widetilde{\mathbb{W}}_{x\bm{r}_n/2}$ and the process $\bm {W}_{x\bm{r}_n/2}$ is the $n-$components vector Brownian motion with respect the measure $\widetilde{\mathbb{W}}_{x\bm{r}_n/2}$:
\begin{equation}
\label{brownian_proofrr}
\bm{W}_{\bm{r}_n/2}(q,\bm{\omega})=\bm{\omega}(q)-\frac{1}{2}\int^q_0 \text{d}q' x(q') \bm{r}(q',\bm{\omega}).
\end{equation}
From a straightforward computation, we get
\begin{equation}
\widetilde{\mathbb{E}}_{x\bm{r}_n}\left[\left(\int^{\tau_n}_0 \text{d}q'\,\|\bm{r}(q',\bm{\omega})\|^{2}\,\right)^{\frac{p}{2}}\,\right]
\leq \widetilde{\mathbb{E}}_{\frac{1}{2}x\bm{r}_n}\left[e^{\frac{1}{2}\zeta(\bm{r},x;\tau_n,\bm{\omega})}\left(\int^{\tau_n}_0 \text{d}q'\,\|\bm{r}(q',\bm{\omega})\|^{2}\Bigg|\,\right)^{\frac{p}{2}}\,\right].
\end{equation}
and H\"{o}lder inequality for any $p\geq1$ yields
\begin{multline}
\label{holder_prrof}
\widetilde{\mathbb{E}}_{\frac{1}{2}x\bm{r}_n}\left[e^{\tfrac{1}{2}\zeta(\bm{r},x;\tau_n,\bm{\omega})}\left(\int^1_0 \text{d}q'\,\|\bm{r}_n(q',\bm{\omega})\|^{2}\Bigg|\,\right)^{\frac{p}{2}}\,\right]\\
\leq \widetilde{\mathbb{E}}_{\frac{1}{2}x\bm{r}_n}\left[\left(\int^1_0 \text{d}q'\,\|\bm{r}_n(q',\bm{\omega})\|^{2}\Bigg|\,\right)^{p}\,\right]^{\tfrac{1}{2}}\widetilde{\mathbb{E}}_{\frac{1}{2}x\bm{r}_n}\left[e^{\zeta(\bm{r},x;\tau_n,\bm{\omega})}\,\right]^{\tfrac{1}{2}}.
\end{multline}
and by Burkholder-Davis-Gundy inequality\cite{YoRev}, there exist a universal constant $C_p$, depending on $p$, such as:
\begin{multline}
\label{Burk}
\widetilde{\mathbb{E}}_{\frac{1}{2}x\bm{r}_n}\left[\left(\int^1_0 \text{d}q'\,\|\bm{r}_n(q',\bm{\omega})\|^{2}\Bigg|\,\right)^{p}\,\right]\\
\leq C_p \widetilde{\mathbb{E}}_{\frac{1}{2}x\bm{r}_n}\left[\left(\underset{q\in[0,1]}{\sup}\left|\int^1_0 d\bm{W}_{\frac{1}{2}\bm{r}_n}(q',\bm{\omega}) \,\bm{r}_n(q',\bm{\omega})\right|\,\right)^{2p}\,\right].
\end{multline}
By definition \eqref{brownian_proofrr} and the stationary equation \eqref{selfEq1Aux}, we have
\begin{multline}
\int^{q}_0 d\bm{W}_{\frac{1}{2}\bm{r}_n}(q',\bm{\omega}) \,\bm{r}_n(q',\bm{\omega})\\
=\int^q_0 \text{d}\bm{\omega}(q') \,\bm{r}_n(q',\bm{\omega})-\frac{1}{2}\int^q_0 \text{d}q' x(q') \,\|\bm{r}_n(q',\bm{\omega})\|^2\\=\phi(q)-\phi(0).
\end{multline}
By propositions \ref{inequality_prop} and \ref{bound_DDE}, the inequalities \eqref{holder_prrof} and \eqref{Burk} yield:
\begin{equation}
\widetilde{\mathbb{E}}_{x\bm{r}_n}\left[\left(\int^{\tau_n}_0 \text{d}q'\,\|\bm{r}(q',\bm{\omega})\|^{2}\,\right)^{\frac{p}{2}}\,\right]\leq \sqrt{C_p} e^{c}c^{p},\quad \forall n\in\mathbb{N}
\end{equation}
that proves the inequality \eqref{bound_rr} with $K_p= \sqrt{C_p} e^{c}c^{p}$.

The inequality \eqref{bound_rr_2} is an immediate consequence of the inequality \eqref{bound_DDE} and \eqref{bound_rr}.
\end{proof}
By proposition \ref{bound} the $H_{[0,1]}^p(\Omega)-$norm of the process $\bm{r}$ is dominated by a constant that dose not depends on the POP. This property will play a crucial role in the next paragraph.
\subsection{Extension to continuous POP}
In this paragraph we prove the existence of the solution of the BSDE \eqref{selfEq1Aux} when the Parisi order parameter is a generic increasing function $x\in \chi$.
The proof is quite technical and relies on several intermediate results.

Intuitively, we may proceed by approximating the POP through elements of $\chi^{\circ}$. We show that given a proper sequence of functions in $\chi^{\circ}$ that converges uniformly to a POP $x\in\chi$, the sequence of the solutions converges to a solution of the stationary equation \eqref{selfEq1Aux} corresponding to $x$.

To this aim, we need to study the dependence of the processes defined in \eqref{discrete_equation} and \eqref{auxiliry_Order_piecewise} on the corresponding POP. Let us denote by $(\phi(x),\bm{r}(x) )$ the solution of the BSDE \eqref{selfEq1Aux} corresponding to a given POP $x\in\chi^{\circ}$.

Note that, since the elements of $\chi^{\circ}$ are strictly positive functions, the map $\chi^{\circ}\ni x\mapsto \phi(x)\in S^p_{[0,1]}(\Omega)$ is continuous and infinitely differentiable. By contrast, a continuous POP $x$ may be arbitrary close to $0$ at $q\to 0$, so the extension of this property to the general case is not obvious.

The results in the next proposition allows to compare two process $\phi(x^{(1)})$ and $\phi(x^{(2)})$, corresponding to the piecewise constant POPs $x^{(0)}$ and $x^{(1)}$.
\begin{proposition}
\label{derivative_theo}
Let $(\phi^{(t)},\bm{r}^{(t)})$ be the solutions relating to the Parisi order parameters $x^{(t)}$ respectively. The next results
Consider two POPs $x^{(0)}$ and $x^{(1)}$ in $\chi^{\circ}$. Let
\begin{equation}
\delta x=x^{(1)}-x^{(0)},
\end{equation}
and consider
\begin{equation}
x^{(t)}=(1-t)x^{(0)}+t x^{(1)}\in \chi^{\circ}.
\end{equation}
Let $(\phi^{(t)},\bm{r}^{(t)})$ be the solution corresponding to the POP $x^{(t)}$. Then, for all $q\in [0,1]$ and $t\in [0,1]$ and almost all $\bm{\omega}\in \Omega$, the quantity $\phi^{(t)}(q,\bm{\omega})$ is derivable on $t$ and
\begin{equation}
\label{first_derivative}
\frac{\partial \phi^{(t)}(q,\bm{\omega})}{\partial t}=\frac{1}{2}\widetilde{\mathbb{E}}_{x^{(t)}\bm{r}^{(t)}}\left[\int^{1}_{q}\text{d}p\,\delta x(p)\,\|\bm{r}^{(t)}(p,\bm{\omega})\|^2\Bigg| \mathcal{F}_q\right],
\end{equation}
\end{proposition}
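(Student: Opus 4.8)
The plan is to stay in the piecewise‑constant regime analysed above and to differentiate the backward recursion there. Since $x^{(0)},x^{(1)}\in\chi^{\circ}$, every convex combination $x^{(t)}=(1-t)x^{(0)}+t\,x^{(1)}$ is again piecewise constant, on the common refinement $0=p_0<p_1<\dots<p_M=1$ of the two underlying partitions, and its first nonzero value obeys $x^{(t)}_1\ge\min\{x^{(0)}_1,x^{(1)}_1\}>0$; hence $x^{(t)}\in\chi^{\circ}$ and $(\phi^{(t)},\bm{r}^{(t)})$ is produced by \eqref{iterative_discrete_cont}, \eqref{discrete_equation} and \eqref{auxiliry_Order_piecewise}. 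Differentiability of $t\mapsto\phi^{(t)}(q,\bm{\omega})$ is inherited from the recursion: each step applies the map $(a,\xi)\mapsto a^{-1}\log\mathbb{E}[e^{a\xi}\,|\,\mathcal{F}_{p_{j-1}}]$ with $a=x^{(t)}_j$ lying in a compact subset of $(0,1]$ and $\xi=\phi^{(t)}(p_j,\cdot)$ bounded by $c$ (Proposition~\ref{inequality_prop_theo}), so one may differentiate under the conditional expectation and induct downward from $\phi^{(t)}(1,\cdot)=\Psi$, which is constant in $t$. To compute the \emph{value} of the derivative I will combine the variational identity of Lemma~\ref{lemma_Global_minimum} with a direct Cameron--Martin--Girsanov computation of the $t$‑derivative of $\Gamma(\Psi,x^{(t)},\bm{r};q,\bm{\omega})$ at a frozen control $\bm{r}$.

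For the first ingredient, by Theorem~\ref{GlobalMinTheo} we have $\phi^{(s)}(q,\bm{\omega})=\Gamma(\Psi,x^{(s)},\bm{r}^{(s)};q,\bm{\omega})=\sup_{\bm{r}}\Gamma(\Psi,x^{(s)},\bm{r};q,\bm{\omega})$, so using that $\bm{r}^{(t)}$ and $\bm{r}^{(t+h)}$ are admissible (but not maximal) controls for $x^{(t+h)}$ and $x^{(t)}$ respectively,
\begin{multline*}
\Gamma(\Psi,x^{(t+h)},\bm{r}^{(t)};q)-\Gamma(\Psi,x^{(t)},\bm{r}^{(t)};q)\le \phi^{(t+h)}(q,\bm{\omega})-\phi^{(t)}(q,\bm{\omega})\\
\le\Gamma(\Psi,x^{(t+h)},\bm{r}^{(t+h)};q)-\Gamma(\Psi,x^{(t)},\bm{r}^{(t+h)};q).
\end{multline*}
Both outer quantities are increments in the first (POP) slot of $\Gamma$ at a fixed control, so it suffices to compute $\partial_t\Gamma(\Psi,x^{(t)},\bm{r};q,\bm{\omega})$ for fixed $\bm{r}$ and to check that, after dividing by $h$ and letting $h\to 0$, the two bounds have the same limit; this simultaneously establishes differentiability and identifies the derivative. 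For the lower bound the control is already fixed; for the upper bound one additionally needs $\bm{r}^{(t+h)}\to\bm{r}^{(t)}$ in the sense $\widetilde{\mathbb{E}}_{x^{(t)}\bm{r}^{(t)}}[\int_0^1 x^{(t)}(p)\,\|\bm{r}^{(t+h)}(p,\bm{\omega})-\bm{r}^{(t)}(p,\bm{\omega})\|^2\,\text{d}p\,]\to 0$, which follows from Lemma~\ref{lemma_Global_minimum} (its left‑hand side, taken with $x=x^{(t)}$, $\bm{v}=\bm{r}^{(t+h)}$, equals $\phi^{(t)}(q,\bm{\omega})-\Gamma(\Psi,x^{(t)},\bm{r}^{(t+h)};q,\bm{\omega})$) once one knows $\phi^{(t+h)}\to\phi^{(t)}$ and that the frozen‑control increment $\Gamma(\Psi,x^{(t+h)},\bm{r}^{(t+h)};q)-\Gamma(\Psi,x^{(t)},\bm{r}^{(t+h)};q)$ is $O(h)$. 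The POP‑uniform control of $\bm{r}^{(t)}$ established in Proposition~\ref{bound}, together with $|\Psi|\le c$, makes all the $\widetilde{\mathbb{W}}_{x^{(t)}\bm{r}}$‑moments appearing here uniformly finite in $t$.

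For the second ingredient, write $\Gamma(\Psi,x^{(t)},\bm{r};q,\bm{\omega})=\widetilde{\mathbb{E}}_{x^{(t)}\bm{r}}[\,\Psi(1,\bm{\omega})-\tfrac12\int_q^1 x^{(t)}(p)\|\bm{r}(p,\bm{\omega})\|^2\,\text{d}p\,|\,\mathcal{F}_q\,]$ and differentiate the density $\mathcal{E}(x^{(t)}\bm{r})$, whose logarithm is affine in $x^{(t)}$. After the substitution $\text{d}\bm{\omega}(p)=\text{d}\bm{W}_{x^{(t)}\bm{r}}(p,\bm{\omega})+x^{(t)}(p)\bm{r}(p,\bm{\omega})\,\text{d}p$ of \eqref{semimart} the drift cancels and
\[
\partial_t\log\mathcal{E}(x^{(t)}\bm{r};1,\bm{\omega}|q)=\int_q^1 \delta x(p)\,\bm{r}(p,\bm{\omega})\cdot \text{d}\bm{W}_{x^{(t)}\bm{r}}(p,\bm{\omega}),\qquad \delta x:=x^{(1)}-x^{(0)},
\]
a $\widetilde{\mathbb{W}}_{x^{(t)}\bm{r}}$‑martingale increment. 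Hence, differentiating under $\widetilde{\mathbb{E}}_{x^{(t)}\bm{r}}[\,\cdot\,|\,\mathcal{F}_q]$,
\begin{multline*}
\partial_t\Gamma(\Psi,x^{(t)},\bm{r};q,\bm{\omega})=\widetilde{\mathbb{E}}_{x^{(t)}\bm{r}}\!\left[\Big(\Psi-\tfrac12\!\int_q^1\! x^{(t)}\|\bm{r}\|^2\Big)\!\int_q^1\!\delta x\,\bm{r}\cdot \text{d}\bm{W}_{x^{(t)}\bm{r}}\,\bigg|\,\mathcal{F}_q\right]\\
-\tfrac12\,\widetilde{\mathbb{E}}_{x^{(t)}\bm{r}}\!\left[\int_q^1\!\delta x(p)\,\|\bm{r}(p,\bm{\omega})\|^2\,\text{d}p\,\bigg|\,\mathcal{F}_q\right].
\end{multline*}
Now set $\bm{r}=\bm{r}^{(t)}$: rewriting the stationary equation \eqref{selfEq1Aux} under $\widetilde{\mathbb{W}}_{x^{(t)}\bm{r}^{(t)}}$ gives $\Psi-\tfrac12\int_q^1 x^{(t)}(p)\|\bm{r}^{(t)}(p,\bm{\omega})\|^2\,\text{d}p=\phi^{(t)}(q,\bm{\omega})+\int_q^1\bm{r}^{(t)}(p,\bm{\omega})\cdot\text{d}\bm{W}_{x^{(t)}\bm{r}^{(t)}}(p,\bm{\omega})$ with $\phi^{(t)}(q,\cdot)$ being $\mathcal{F}_q$‑measurable; substituting this, the $\phi^{(t)}(q,\cdot)$‑part multiplies a mean‑zero stochastic integral and drops, while the product of the remaining two stochastic integrals is evaluated by the conditional It\^o isometry \eqref{MartiPropMod2}, producing $\widetilde{\mathbb{E}}_{x^{(t)}\bm{r}^{(t)}}[\int_q^1\delta x(p)\|\bm{r}^{(t)}(p,\bm{\omega})\|^2\,\text{d}p\,|\,\mathcal{F}_q]$; adding the two terms yields precisely the right‑hand side of \eqref{first_derivative}.

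Feeding $\partial_t\Gamma(\Psi,x^{(t)},\bm{r};q,\bm{\omega})|_{\bm{r}=\bm{r}^{(t)}}$ back into the squeeze, and writing each frozen‑control increment as $\int_0^h\partial_\tau\Gamma(\Psi,x^{(t+\tau)},\bm{r};q,\bm{\omega})\,\text{d}\tau$ (with $\bm{r}=\bm{r}^{(t)}$ in the lower bound and $\bm{r}=\bm{r}^{(t+h)}$ in the upper), so that both integrands tend as $h\to 0$ to $\partial_t\Gamma(\Psi,x^{(t)},\bm{r}^{(t)};q,\bm{\omega})$, gives differentiability of $t\mapsto\phi^{(t)}(q,\bm{\omega})$ together with \eqref{first_derivative} for every $q\in[0,1]$ and a.e.\ $\bm{\omega}$. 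The main obstacle is exactly the analytic control of these limits: justifying the exchange of $\partial_t$ with the conditional expectation against $\mathcal{E}(x^{(t)}\bm{r})$ (i.e.\ uniform integrability of the differentiated densities), and proving the continuity of $t\mapsto\bm{r}^{(t)}$ in the $\widetilde{\mathbb{W}}$‑weighted $L^2$ sense used in the squeeze — both of which rest on the POP‑uniform estimates of Proposition~\ref{bound} and on the bound $|\phi^{(t)}|\le c$ from Proposition~\ref{inequality_prop_theo}.
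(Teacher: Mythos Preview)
Your argument is correct and takes a genuinely different route from the paper's own proof. The paper differentiates the backward recursion \eqref{iterative_discrete_cont} directly: it computes $\partial_t\phi^{(t)}(q,\bm{\omega})$ by the chain rule on $(a,\xi)\mapsto a^{-1}\log\mathbb{E}[e^{a\xi}\,|\,\mathcal{F}_{q_n}]$, obtains a two--term recursion for the derivative, telescopes it to a sum over the intervals $[q_n,q_{n+1}]$, and finally substitutes the BSDE \eqref{equation_Discrete} together with \eqref{integration_stoc} to turn each summand into $\tfrac{\delta x_{n+1}}{2}\widetilde{\mathbb{E}}_{x^{(t)}\bm{r}^{(t)}}[\int_{q_n}^{q_{n+1}}\|\bm{r}^{(t)}\|^2\,|\,\mathcal{F}_q]$. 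Your approach instead exploits the variational identity of Theorem~\ref{GlobalMinTheo} in envelope--theorem fashion: you squeeze $\phi^{(t+h)}-\phi^{(t)}$ between two frozen--control increments of $\Gamma$, compute $\partial_t\Gamma(\Psi,x^{(t)},\bm{r};q,\bm{\omega})$ for fixed $\bm{r}$ via Girsanov, and then specialize to $\bm{r}=\bm{r}^{(t)}$ using the BSDE and the It\^o isometry.

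The trade--offs are as follows. The paper's computation is entirely elementary once the recursion is in place and never touches the maximization; the price is that it is tied to the piecewise structure and produces the formula as a telescoped sum that must then be reinterpreted. Your approach is more conceptual and does not rely on the explicit recursion at all --- only on the variational characterization, the BSDE, and the uniform moment bounds of Proposition~\ref{bound} --- so it would carry over verbatim to situations where no closed recursion is available. The cost is the extra analytic work you correctly flag at the end: justifying differentiation under $\widetilde{\mathbb{E}}_{x^{(t)}\bm{r}}$ and establishing the continuity $\bm{r}^{(t+h)}\to\bm{r}^{(t)}$ needed to close the squeeze. Your deduction of the latter from Lemma~\ref{lemma_Global_minimum} is sound, but note that the lemma gives the weighted $L^2$ distance under $\widetilde{\mathbb{E}}_{x^{(t)}\bm{r}^{(t+h)}}$ rather than under $\widetilde{\mathbb{E}}_{x^{(t)}\bm{r}^{(t)}}$; the two are equivalent here thanks to the uniform density bound \eqref{bound_DDE}, which you should invoke explicitly.
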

An immediate consequence of the above proposition is:
\begin{corollary}
Given two POPs $x^{(0)}$ and $x^{(1)}$ in $\chi^{\circ}$ such as
\begin{equation}
x^{(0)}(q)\leq x^{(1)}(q),\,\forall q\in[0,1]
\end{equation}
then
\begin{equation}
\phi(x^{(0)})\leq \phi(x^{(1)}).
\end{equation}
\end{corollary}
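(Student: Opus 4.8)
The plan is to integrate the derivative formula \eqref{first_derivative} from Proposition \ref{derivative_theo} over the interpolation parameter $t\in[0,1]$ and observe that the integrand is pointwise nonnegative. First I would note that, by hypothesis, $\delta x(q)=x^{(1)}(q)-x^{(0)}(q)\geq 0$ for every $q\in[0,1]$, and that for each fixed $t\in[0,1]$ the process $\bm{r}^{(t)}\in D_{[0,1]}(\Omega)$ induces the equivalent probability measure $\widetilde{\mathbb{W}}_{x^{(t)}\bm{r}^{(t)}}$; hence $\widetilde{\mathbb{E}}_{x^{(t)}\bm{r}^{(t)}}[\,\cdot\,|\mathcal{F}_q]$ is a positive operator. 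Since the integrand $\delta x(p)\,\|\bm{r}^{(t)}(p,\bm{\omega})\|^2$ is nonnegative almost surely, the conditional expectation in \eqref{first_derivative} is nonnegative, and therefore
\begin{equation}
\frac{\partial \phi^{(t)}(q,\bm{\omega})}{\partial t}\geq 0\quad\text{for all }q\in[0,1],\ t\in[0,1],\ \text{a.e. }\bm{\omega}\in\Omega.
\end{equation}

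Next I would use the differentiability statement of Proposition \ref{derivative_theo}, which guarantees that $t\mapsto\phi^{(t)}(q,\bm{\omega})$ is derivable on $[0,1]$ for almost every $\bm{\omega}$; combined with the continuity of the map $x\mapsto\phi(x)$ on $\chi^{\circ}$ (noted just before the proposition), the fundamental theorem of calculus gives
\begin{equation}
\phi^{(1)}(q,\bm{\omega})-\phi^{(0)}(q,\bm{\omega})=\int_0^1 \frac{\partial \phi^{(t)}(q,\bm{\omega})}{\partial t}\,\mathrm{d}t\geq 0\,,
\end{equation}
since $x^{(0)}=x^{(t)}|_{t=0}$ and $x^{(1)}=x^{(t)}|_{t=1}$. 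Because $\phi(x^{(0)})=\phi^{(0)}$ and $\phi(x^{(1)})=\phi^{(1)}$ by construction, this yields $\phi(x^{(0)};q,\bm{\omega})\leq\phi(x^{(1)};q,\bm{\omega})$ for almost every $\bm{\omega}$ and every $q$, which is exactly the claimed inequality $\phi(x^{(0)})\leq\phi(x^{(1)})$.

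The only genuinely delicate points are measure-theoretic housekeeping rather than new ideas: one must check that the derivative in \eqref{first_derivative} is jointly measurable and integrable in $t$ so that the interchange of integration and the fundamental theorem of calculus is legitimate, and that the almost-sure exceptional set can be chosen uniformly in $t$ (or handled via a countable dense set of $t$ together with continuity). I expect these to follow routinely from Proposition \ref{bound}, which bounds the relevant $H^p_{[0,1]}(\Omega)$-norms of $\bm{r}^{(t)}$ uniformly in the POP, so the main obstacle is essentially bookkeeping; the substantive content has already been absorbed into Proposition \ref{derivative_theo}.
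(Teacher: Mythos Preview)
Your proposal is correct and follows exactly the approach intended by the paper: the corollary is presented there as an ``immediate consequence'' of Proposition~\ref{derivative_theo}, namely that $\delta x\geq 0$ makes the right-hand side of \eqref{first_derivative} nonnegative, so $t\mapsto\phi^{(t)}$ is nondecreasing. The measure-theoretic bookkeeping you flag (integrability in $t$, uniform null sets) is not addressed explicitly in the paper either, and your suggestion to handle it via Proposition~\ref{bound} is the natural way to fill that in.
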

\begin{proof}[Proof of Proposition \ref{derivative_theo}]
Let $K$ be the number of discontinuity points $0= q_0<q_1< \cdots q_K<q_{K+1}=1$ of the function $x^{(t)}$.

We start by proving the formula of the first derivative. At $q=1$, the random variable $\phi^{(t)}(1,\Cdot)$ does not depends on $t$, that is
\begin{equation}
\label{recursion_derivative_start}
\frac{\partial}{\partial t}\phi^{(t)}(1,\bm{\omega})=\frac{\partial}{\partial t}\Psi(1,\bm{\omega})=0.
\end{equation}

For $q<1$, we proceed by differentiating the right member of the recursion \eqref{iterative_discrete_cont}. The chain rule yields a recursive equation for the derivative of $\phi^{(t)}$. For $q\in[q_n,q_{n+1}]$, with $0\leq n\leq K$, we have:
\begin{multline}
\label{iterative_der_proto}
\frac{\partial}{\partial t}{\phi^{(t)}}(q,\bm{\omega})=\frac{\partial}{\partial t}\left(\frac{1}{x^{(t)}(q)} \log\,\mathbb{E}\left[\exp \left(\,x^{(t)}(q)\phi^{(t)}(\,q_{n_q+1}\,,\bm{\omega}\,)\,\,\right)\bigg|\mathcal{F}_{q}\right]\right)\\
=\frac{\delta x(q)}{x^{(t)}(q)}\frac{\mathbb{E}\left[\exp \left(\,\phi^{(t)}(\,q_{n_q+1}\,,\bm{\omega}\,)\,\,\right)\phi^{(t)}(\,q_{n_q+1}\,,\bm{\omega}\,)\big|\mathcal{F}_{q}\right]}{\mathbb{E}\left[\exp \left(\,x^{(t)}(q)\phi^{(t)}(\,q_{n_q+1}\,,\bm{\omega}\,)\,\,\right) \big|\mathcal{F}_{q}\right]}\\-\frac{\delta x(q)}{(x^{(t)}(q))^2} \log\,\mathbb{E}\left[\exp \left(\,x^{(t)}(q)\phi^{(t)}(\,q_{n_q+1}\,,\bm{\omega}\,)\,\,\right)\big|\mathcal{F}_{q}\right]\\
+\frac{\mathbb{E}\left[\exp \left(\,x^{(t)}(q)\phi^{(t)}(\,q_{n_q+1}\,,\bm{\omega}\,)\,\,\right)\tfrac{\partial}{\partial t}\phi^{(t)}(\,q_{n_q+1}\,,\bm{\omega}\,)\big|\mathcal{F}_{q}\right]}{\mathbb{E}\left[\exp \left(\,x^{(t)}(q)\phi^{(t)}(\,q_{n_q+1}\,,\bm{\omega}\,)\,\,\right) \big|\mathcal{F}_{q}\right]}.
\end{multline}
Now, the equation \eqref{iterative_discrete_cont} implies
\begin{equation}
\mathbb{E}\left[\exp \left(\,x^{(t)}(q)\phi^{(t)}(\,q_{n_q+1}\,,\bm{\omega}\,)\,\,\right) \bigg|\mathcal{F}_{q}\right]=\exp\left(x^{(t)}_n\phi^{(t)}(\,q_{n_q+1}\,,\bm{\omega}\,)\right )
\end{equation}
and
\begin{equation}
\frac{\delta x(q)}{(x^{(t)}(q))^2} \log\,\mathbb{E}\left[\exp \left(\,x^{(t)}(q)\phi^{(t)}(\,q_{n_q+1}\,,\bm{\omega}\,)\,\,\right)\bigg|\mathcal{F}_{q}\right]=\frac{\delta x(q)}{x^{(t)}(q)}\phi^{(t)}\left(\,q\,,\bm{\omega}\,\right)
\end{equation}
and, since $\phi(q,\Cdot)$ is $\mathcal{F}_q$ measurable, we have the following identity:
\begin{equation}
\phi(q,\bm{\omega})
=\widetilde{\mathbb{E}}_{x^{(t)}\bm{r}^{(t)}}\big[\phi^{(t)}(q,\bm{\omega})\big| \mathcal{F}_q\big].
\end{equation}
By replacing the above three relations in the equation \eqref{iterative_der_proto}, we finally get
\begin{equation}
\label{iterative_der_0}
\frac{\partial \phi^{(t)}(q,\bm{\omega})}{\partial t}=
\frac{\delta x_{n_q}}{x_{n_q}^t}\widetilde{\mathbb{E}}_{x^{(t)}\bm{r}^{(t)}}\big[\phi^{(t)}(q_{n+1},\bm{\omega})-\phi^{(t)}(q,\bm{\omega})\big| \mathcal{F}_q\big]+\widetilde{\mathbb{E}}_{x^{(t)}\bm{r}^{(t)}}\left[\frac{\partial \phi^{(t)}(q_{n+1},\bm{\omega})}{\partial t}\Bigg| \mathcal{F}_q\right].
\end{equation}
The solution of the above recursive equation, together with the starting condition \eqref{recursion_derivative_start}, is
\begin{equation}
\label{der_1}
\frac{\partial \phi^{(t)}(q,\bm{\omega})}{\partial t}=\sum^K_{n=n_q}\frac{\delta x_{n+1}}{x_{n+1}^t}\widetilde{\mathbb{E}}_{x^{(t)}\bm{r}^{(t)}}\big[\phi^{(t)}(q_{n+1},\bm{\omega})-\phi^{(t)}(q_n\wedge q,\bm{\omega})\big| \mathcal{F}_{q}\big].
\end{equation}
Substituting the process $\phi$ with the stationary equation for discrete Parisi order parameter \eqref{equation_Discrete}, one finds:
\begin{equation}
\begin{multlined}
\frac{\delta x_{n}}{x_{n}^t}\widetilde{\mathbb{E}}_{x^{(t)}\bm{r}^{(t)}}\big[\phi^{(t)}(q_{n+1},\bm{\omega})-\phi^{(t)}(q_n,\bm{\omega})\big| \mathcal{F}_{q_{n_q}}\big]\\
=\frac{\delta x_{n}}{x_{n}^t}\widetilde{\mathbb{E}}_{x^{(t)}\bm{r}^{(t)}}\Bigg[\int^{q_{n+1}}_{q_n} \text{d}\bm{\omega}(q') \cdot\bm{r}(q',\bm{\omega})-\frac{x_n^{(t)}}{2}\int^{q_{n+1}}_{q_n} \text{d}q' \,\|\bm{r}(q',\bm{\omega})\|^2\Bigg| \mathcal{F}_{q_{n_q}}\Bigg]\\
=\frac{\delta x_{n}}{2}\widetilde{\mathbb{E}}_{x^{(t)}\bm{r}^{(t)}}\left[\int^{q_{n+1}}_{q_n} \text{d}q' \,\|\bm{r}(q',\bm{\omega})\|^2\Bigg| \mathcal{F}_{q_{n_q}}\right],
\end{multlined}
\end{equation}
that proves \eqref{first_derivative}.
\end{proof}
From the above results and the proposition \ref{bound}, we deuce that the process $\partial_t \phi^{(t)}$ is almost surely bounded.

Now, we state the most remarkable property of the map $\chi^{\circ}\ni x\mapsto (\phi(x),\bm{r}(x)\,)\in S^p_{[0,1]}(\Omega)$
\begin{theorem}
\label{uniform_convergence_theo}
Let $x^{(1)}$ and $x^{(2)}$ be two elements of $\chi^{\circ}$. Then, for any $p>1$ there exist a constant $K_p$ depending only $p$ such that:
\begin{equation}
\label{uniform_converge_phi}
\mathbb{E}_{\mathbb{W}}\left[\underset{q\in [0,1]}{\sup}\Big|\phi(x^{(2)};q,\bm{\omega})-\phi(x^{(1)};q,\bm{\omega})\Big|^p\right]^{\frac{1}{p}}\leq K_p \|x^{(2)}-x^{(1)}\|_{\infty}
\end{equation}
and
\begin{equation}
\mathbb{E}_{\mathbb{W}}\left[\left(\int^1_0\left\|\bm{r}(x^{(2)};q,\bm{\omega})-\bm{r}(x^{(1)};q,\bm{\omega})\right\|\right)^p\right]^{\frac{1}{p}}\leq K_p \|x^{(2)}-x^{(1)}\|_{\infty}.
\end{equation}
\end{theorem}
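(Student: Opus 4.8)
The plan is to connect the two Parisi parameters by the linear path $x^{(t)}:=(1-t)x^{(1)}+tx^{(2)}$, which stays in $\chi^{\circ}$, to set $\delta x:=x^{(2)}-x^{(1)}$ and denote by $(\phi^{(t)},\bm r^{(t)})$ the solution of \eqref{selfEq1Aux} corresponding to $x^{(t)}$, and then to obtain both estimates by differentiating $\phi^{(t)}$ and $\bm r^{(t)}$ along this path and integrating back over $t\in[0,1]$. The two external inputs that do all the work are Proposition \ref{bound}, which bounds the $H^p_{[0,1]}$–norm of $\bm r^{(t)}$ (under $\widetilde{\mathbb W}_{x^{(t)}\bm r^{(t)}}$ and, up to a factor $e^{2c}$, under $\mathbb W$) by a constant independent of $t$ and of the POP, and Proposition \ref{trivial_bb}, whose two–sided bound $e^{-2c}\le\mathcal E(x^{(t)}\bm r^{(t)})\le e^{2c}$ a.s. lets one pass freely between $\mathbb W$ and $\widetilde{\mathbb W}_{x^{(t)}\bm r^{(t)}}$.

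For \eqref{uniform_converge_phi}: Proposition \ref{derivative_theo} gives $\phi(x^{(2)};q,\bm\omega)-\phi(x^{(1)};q,\bm\omega)=\int_0^1\partial_t\phi^{(t)}(q,\bm\omega)\,\mathrm dt$ together with $|\partial_t\phi^{(t)}(q,\bm\omega)|\le\tfrac12\|\delta x\|_{\infty}\,M^{(t)}(q,\bm\omega)$, where $M^{(t)}(q,\bm\omega):=\widetilde{\mathbb E}_{x^{(t)}\bm r^{(t)}}\!\big[\int_0^1\|\bm r^{(t)}(p,\bm\omega)\|^2\,\mathrm dp\,\big|\,\mathcal F_q\big]$ is a non–negative $\widetilde{\mathbb W}_{x^{(t)}\bm r^{(t)}}$–martingale. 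Hence $\sup_{q}\big|\phi(x^{(2)};q)-\phi(x^{(1)};q)\big|\le\tfrac12\|\delta x\|_{\infty}\int_0^1\sup_q M^{(t)}(q)\,\mathrm dt$, and I would take the $L^p(\mathbb W)$–norm of this, move it inside the $\mathrm dt$–integral by Minkowski's integral inequality, change measure to $\widetilde{\mathbb W}_{x^{(t)}\bm r^{(t)}}$ (factor $e^{2c}$ from Proposition \ref{trivial_bb}), apply Doob's $L^p$–maximal inequality (using $p>1$) to the martingale $M^{(t)}$, and close with Proposition \ref{bound} in the form $\widetilde{\mathbb E}_{x^{(t)}\bm r^{(t)}}\big[(\int_0^1\|\bm r^{(t)}\|^2\,\mathrm dp)^{p}\big]\le K_{2p}$, valid uniformly in $t$. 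This produces \eqref{uniform_converge_phi} with a constant depending only on $p$ and $c$.

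For the estimate on $\bm r$ the first task is to justify that $t\mapsto\bm r^{(t)}$ is differentiable: this rests on the piecewise–constant structure, since $\phi^{(t)}(q_n,\cdot)$ is smooth in $t$ (same recursive computation as in the proof of Proposition \ref{derivative_theo}), hence so are the bounded martingales $J_n^{(t)}$ and their martingale–representation integrands $\bm M_n^{(t)}$, and therefore $\bm r^{(t)}$ through \eqref{auxiliry_Order_piecewise}. Differentiating the BSDE \eqref{selfEq1Aux} in $t$ and rewriting $\mathrm d\bm\omega$ via the $\widetilde{\mathbb W}_{x^{(t)}\bm r^{(t)}}$–Brownian motion $\bm W_{x^{(t)}\bm r^{(t)}}$, the term $x^{(t)}\bm r^{(t)}\!\cdot\!\partial_t\bm r^{(t)}$ cancels against the Girsanov drift and leaves the \emph{linear} identity
\[ \partial_t\phi^{(t)}(q,\bm\omega)=-\int_q^1\partial_t\bm r^{(t)}(p,\bm\omega)\cdot\mathrm d\bm W_{x^{(t)}\bm r^{(t)}}(p,\bm\omega)+\tfrac12\int_q^1\delta x(p)\,\|\bm r^{(t)}(p,\bm\omega)\|^2\,\mathrm dp . \]
Since $\partial_t\phi^{(t)}(q,\cdot)$ is $\mathcal F_q$–measurable, subtracting its $\widetilde{\mathbb W}_{x^{(t)}\bm r^{(t)}}$–conditional expectation (which recovers \eqref{first_derivative}) exhibits $\int_q^1\partial_t\bm r^{(t)}\!\cdot\!\mathrm d\bm W_{x^{(t)}\bm r^{(t)}}$ as $\tfrac12\int_q^1\delta x\,\|\bm r^{(t)}\|^2-\partial_t\phi^{(t)}(q)$; by Proposition \ref{bound} and Doob's inequality the running supremum of this stochastic integral has $L^p$–norm bounded by a multiple of $\|\delta x\|_{\infty}$, uniformly in $t$. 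The Burkholder–Davis–Gundy inequality applied to the martingale $q\mapsto\int_0^q\partial_t\bm r^{(t)}\!\cdot\!\mathrm d\bm W_{x^{(t)}\bm r^{(t)}}$ then gives $\widetilde{\mathbb E}_{x^{(t)}\bm r^{(t)}}\big[(\int_0^1\|\partial_t\bm r^{(t)}\|^2\,\mathrm dp)^{p/2}\big]\le C_p\,\|\delta x\|_{\infty}^p$. Changing measure back to $\mathbb W$, writing $\bm r(x^{(2)})-\bm r(x^{(1)})=\int_0^1\partial_t\bm r^{(t)}\,\mathrm dt$ and applying Minkowski's integral inequality in $L^p(\mathbb W;L^2([0,1],\mathrm dq))$ controls $\mathbb E_{\mathbb W}\big[(\int_0^1\|\bm r(x^{(2)};q)-\bm r(x^{(1)};q)\|^2\,\mathrm dq)^{p/2}\big]$, and the stated inequality follows by the Cauchy–Schwarz inequality in $q$.

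The heart of the matter is exactly this cancellation: after the Girsanov shift the equation for $\partial_t\bm r^{(t)}$ is linear, so an Itô–isometry a priori estimate applies; a direct estimate on the BSDE for the difference $\bm r(x^{(2)})-\bm r(x^{(1)})$ instead controls only the $x^{(2)}$–weighted quantity $\int_0^1 x^{(2)}\|\bm r(x^{(2)})-\bm r(x^{(1)})\|^2\,\mathrm dp$, whose conversion to the unweighted integral is not uniform in the POP. I expect the two genuinely delicate points to be (i) the differentiability in $t$ of the martingale–representation integrand $\bm M_n^{(t)}$ — which, if one prefers, can be bypassed by running the argument with difference quotients in $t$ in place of derivatives — and (ii) verifying that every constant produced along the way is independent of the POP, which is precisely what Proposition \ref{bound} is designed to guarantee and what will make the passage to a continuous POP possible in the next step.
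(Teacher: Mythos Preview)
Your argument for \eqref{uniform_converge_phi} is exactly the paper's: interpolate along $x^{(t)}$, invoke Proposition~\ref{derivative_theo}, bound by the $\widetilde{\mathbb W}_{x^{(t)}\bm r^{(t)}}$--martingale $M^{(t)}$, then Doob plus Proposition~\ref{bound} with the change of measure from Proposition~\ref{trivial_bb}.

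For the $\bm r$--estimate you take a genuinely different route. The paper does \emph{not} differentiate $\bm r^{(t)}$ in $t$; it subtracts the two BSDEs once and for all, applies It\^o's formula to $(\delta\phi)^2$, and reads off
\[
\int_0^1\|\delta\bm r\|^2\,\mathrm dq \;=\; -(\delta\phi(0))^2 - 2\!\int_0^1\!\delta\phi\,\delta\bm r\!\cdot\!\mathrm d\bm\omega + \int_0^1\!\delta x\,\delta\phi\,\|\bm r_2\|^2\,\mathrm dq + \int_0^1\!x^{(1)}\,\delta\phi\,(\bm r_1+\bm r_2)\!\cdot\!\delta\bm r\,\mathrm dq,
\]
whose $L^p$--norm, after bounding each piece with the already--established control on $\sup_q|\delta\phi|$, yields a quadratic inequality $X^2\le\alpha_p\|\delta x\|_\infty^p X+\beta_p\|\delta x\|_\infty^{2p}$ for $X=\mathbb E_{\mathbb W}\big[(\int_0^1\|\delta\bm r\|^2)^p\big]^{1/2}$. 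Note in particular that the quadratic--variation term in It\^o's formula is the \emph{unweighted} $\int_0^1\|\delta\bm r\|^2\,\mathrm dq$, so your stated objection to the direct method---that it only controls the $x^{(2)}$--weighted integral---does not apply here. Your linearization via the Girsanov cancellation is correct and arguably cleaner conceptually (it makes the BDG step transparent and avoids the quadratic inequality), but it pays for this by requiring differentiability of $t\mapsto\bm r^{(t)}$ in $H^p$, i.e.\ differentiability of the martingale--representation integrands $\bm M_n^{(t)}$, which you rightly flag as delicate; the paper's approach sidesteps this entirely. If you keep your route, the difference--quotient variant you mention is the safest way to close that gap.
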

This implies, in particular, that if two POP $x^{(1)}$ and $x^{(2)}$ are "close to each other”, then the "level of approximation” of the solutions $(\phi(x^{(1)},\bm{r}(x^{(1)})$ provided by the solution $(\phi(x^{(2)},\bm{r}(x^{(2)})$ depends only by the $\|\cdot\|_{\infty}-$distance between the two POPs. This result is very important. In fact, any POP in $\chi$ is arbitrary close to a POP in $\chi^{\circ}$.
\begin{proof}
The inequality \eqref{uniform_convergence_theo} is an immediate consequence of the equation \eqref{first_derivative} and the proposition \ref{bound}.
Put
\begin{equation}
\delta x=x^{(2)}-x^{(1)},\quad \delta\phi=\phi(x^{(2)})-\phi(x^{(1)}),\quad \delta\bm{r}=\bm{r}(x^{(2)})-\bm{r}(x^{(1)}).
\end{equation}
Since $\phi(x^{(1)})$ and $\phi(x^{(2)})$ are bounded and the processes $\bm{r}(x^{(1)})$ and $\bm{r}(x^{(2)})$ are in $H^p_{[0,1]}(\Omega)$, then the process $\delta \phi$ is bounded and $\delta\bm{r}$ is in $H^p_{[0,1]}(\Omega)$.

We use the same notation of theorem \ref{derivative_theo}. Let
\begin{equation}
x^{(t)}(q)=t x^{(1)}(q)+(1-t)x^{(2)}(q)\in \chi^{\circ}, \quad t\in [0,1].
\end{equation}
By theorem \ref{derivative_theo}, the process $\phi^{(t)}$ is derivable over $t$, that implies
\begin{equation}
\delta \phi(q,\bm{\omega})=\int^1_0 dt \frac{\partial \phi^{(t)}(q,\bm{\omega})}{\partial t}= \frac{1}{2}\int^1_0 dt\,\widetilde{\mathbb{E}}_{x^{(t)}\bm{r}^{(t)}}\left[\int^{1}_{q}\text{d}p\,\delta x(p)\,\|\bm{r}^{(t)}(p,\bm{\omega})\|^2\Bigg| \mathcal{F}_q\right]
\end{equation}
from which it follows that
\begin{equation}
\label{obvious_inequality}
|\delta \phi(q,\bm{\omega})|\leq\frac{1}{2}\int^1_0 dt\,\widetilde{\mathbb{E}}_{x^{(t)}\bm{r}^{(t)}}\left[\int^{1}_{0}\text{d}p\,\delta x(q')\,\|\bm{r}^{(t)}(q',\bm{\omega})\|^2\Bigg| \mathcal{F}_q\right]
\end{equation}
Since the process $\delta\bm{r}$ is in $H^p_{[0,1]}(\Omega)$, then the process integrated over $t$ in the right-hand side of the above inequality is a non-negative martingale bounded in $L_p$ with respect the probability measure $\widetilde{\mathbb{W}}_{\bm{r}^{(t)}}$, for all $t\in [0,1]$. As a consequence, Doob inequality and the proposition \ref{bound} yield
\begin{equation}
\begin{multlined}
\widetilde{\mathbb{E}}_{x^{(t)}\bm{r}^{(t)}}\left[\left(\underset{q\in[0,1]}{\sup}\widetilde{\mathbb{E}}_{x^{(t)}\bm{r}^{(t)}}\left[\int^{1}_{0}\text{d}p\,\delta x(q')\,\|\bm{r}^{(t)}(q',\bm{\omega})\|^2\Bigg| \mathcal{F}_q\right]\right)^p\right]\\\leq\left(\frac{p}{p-1}\right)^p\widetilde{\mathbb{E}}_{x^{(t)}\bm{r}^{(t)}}\left[\left(\,\int^{1}_{0}\text{d}q'\,\delta x(q')\,\|\bm{r}^{(t)}(q',\bm{\omega})\|^2\right)^p\right]\leq\left(\frac{p}{p-1}\right)^p(K_p\|\delta x\|_{\infty})^p,\\\quad \forall t\in[0,1]\,\,\,\text{and}\,\,\, p>1
\end{multlined}
\end{equation}
Combining the above result with the inequalities \eqref{obvious_inequality} and the proposition \ref{bound}, and since the process $\delta \phi$ is bounded, we finally get:
\begin{equation}
\begin{multlined}
\mathbb{E}_{\mathbb{W}}\left[\underset{q\in[0,1]}{\sup}|\delta \phi(q,\bm{\omega})|^p\right]\leq \,\mathbb{E}_{\mathbb{W}}\left[\underset{q\in[0,1]}{\sup}\left(\int^1_0 dt\,\frac{\partial \phi^{(t)}(q,\bm{\omega})}{\partial t}\right)^p\right]\\
\leq e^{2 c}\left(\frac{p}{2(p-1)}\right)^p\underset{t\in [0,1]}{\sup}\,\widetilde{\mathbb{E}}_{x^{(t)}\bm{r}^{(t)}}\left[\left(\,\int^{1}_{0}\text{d}q'\,\delta x(q')\,\|\bm{r}^{(t)}(q',\bm{\omega})\|^2\right)^p\right]\\\leq \left(\frac{p\,K_p}{2(p-1)}\right)^pe^{2 c}\|\delta x\|^p_{\infty},\,\forall p>1
\end{multlined}
\end{equation}
that proves the inequality \eqref{uniform_converge_phi}, with $a_p=e^{2 c/p}p\,K_p/(2p-2)$.

Now we prove that the process $\delta\bm{r}$ has the same bound. At $q=1$, the process $\delta \phi$ verifies:
\begin{equation}
\delta\phi(1,\bm{\omega})=0,
\end{equation}
and from the two auxiliary stationary equations associated to the POPs $x^{(1)}$ and to $x^{(2)}$ and the above relation, one gets
\begin{equation}
\begin{multlined}
\label{equation_delta}
0=\delta \phi(1,\bm{\omega})
=\delta\phi(0,\bm{\omega})+\int^1_0 \delta \bm{r}(q,\bm{\omega}) \cdot \text{d}\bm{\omega}(q)-\frac{1}{2}\int^1_0 \text{d}q\,\delta x(q)\,\|\bm{r}_{2}(q,\bm{\omega})\|^2\\-\frac{1}{2}\int^1_0 \text{d}q\,x^{(1)}(q)\left(\bm{r}_{1}(q,\bm{\omega})+\bm{r}_{2}(q,\bm{\omega})\right)\cdot\delta \bm{r}(q,\bm{\omega}).
\end{multlined}
\end{equation}
By applying the It\^o formula to $(\delta \phi(1,\bm{\omega}))^2$, it follows that
\begin{equation}
\begin{multlined}
(\delta\phi(0,\bm{\omega}))^2+2\int^1_0\delta\phi(q,\bm{\omega})\bm{r}(q,\bm{\omega}) \cdot \text{d}\bm{\omega}(q)
-\int^1_0 \text{d}q\,\delta x(q)\,\delta\phi(q,\bm{\omega})\|\bm{r}_{2}(q,\bm{\omega})\|^2\\-\int^1_0 \text{d}q\,x^{(1)}(q)\delta\phi(q,\bm{\omega})\left(\,\bm{r}_{1}(q,\bm{\omega})+\bm{r}_{2}(q,\bm{\omega})\,\right)\cdot\delta \bm{r}(q,\bm{\omega})+\int^1_0 \text{d}q\|\delta\bm{r}_{2}(q,\bm{\omega})\|^2=0.
\end{multlined}
\end{equation}
All the quantities in the above expression are in $L_{[0,1]}^p(\Omega)$. We put $\int^1_0 \text{d}q\|\delta\bm{r}_{2}(q,\bm{\omega})\|^2$ on the left-hand side of the equation and the other terms in the right-hand side and take the absolute value raised to the power $p$ of both side. Using the inequality $|A+B+C+D|^p\leq4^{p-1}(|A|^p+|B|^p+|C|^p+|D|^{p})$ and taking the expectation value, we gets
\begin{equation}
\label{inequality_r}
\mathbb{E}_{\mathbb{W}}\left[\left(\int^1_0 \text{d}q\|\delta\bm{r}_{2}(q,\bm{\omega})\|^2\right)^p\right]\leq\text{I}+\text{II}+\text{III}+\text{IV}
\end{equation}
where
\begin{equation}
\text{I}=4^{p-1}\mathbb{E}_{\mathbb{W}}\left[\left|\delta\phi(0,\bm{\omega})\right|^{2p}\right]\leq 4^{p-1}a^{2p}_{2p}\|\delta x\|^{2p}_{\infty},
\end{equation}
\begin{equation}
\begin{multlined}
\text{II}=2*8^{p-1}\mathbb{E}_{\mathbb{W}}\left[\left|\int^1_0\delta\phi(q,\bm{\omega})\delta\bm{r}(q,\bm{\omega}) \cdot \text{d}\bm{\omega}(q)\right|^p\right]\\\leq 2^{\frac{5}{2}p-2}p^{p-1}(p-1)\mathbb{E}_{\mathbb{W}}\left[\left(\int^1_0\text{d}q \left(\delta\phi(q,\bm{\omega})\right)^{2}\|\delta\bm{r}(q,\bm{\omega})\|^2\right)^{\frac{p}{2}}\right]\\
\leq 2^{\frac{5}{2}p-2}p^{p-1}(p-1)\mathbb{E}_{\mathbb{W}}\left[\left(\underset{q\in[0,1]}{\sup}|\delta \phi(q,\bm{\omega})|^{p}\right)\left(\int^1_0\text{d}q \|\delta\bm{r}(q,\bm{\omega})\|^2\right)^{\frac{p}{2}}\right]\\
\leq 2^{\frac{5}{2}p-2}p^{p-1}(p-1)a_{2p}^{p}\,\|\delta x\|_{\infty}^{p} \mathbb{E}_{\mathbb{W}}\left[\left(\int^1_0\text{d}q \|\delta\bm{r}(q,\bm{\omega})\|^2\right)^{p}\right]^{1/2},
\end{multlined}
\end{equation}
\begin{equation}
\begin{multlined}
\text{III}=4^{p-1}\mathbb{E}_{\mathbb{W}}\left[\left|\int^1_0 \text{d}q\,\delta x(q)\,\delta\phi(q,\bm{\omega})\|\bm{r}_{2}(q,\bm{\omega})\|^2\right|^{p}\right]\\
\leq 4^{p-1}\mathbb{E}_{\mathbb{W}}\left[\left(\underset{q\in[0,1]}{\sup}|\delta \phi(q,\bm{\omega})|^{p}\right)\left(\int^1_0 \text{d}q\,|\delta x(q)|\|\bm{r}_{2}(q,\bm{\omega})\|^2\right)^{p}\right]\\
\leq 4^{p-1}\mathbb{E}_{\mathbb{W}}\left[\underset{q\in[0,1]}{\sup}|\delta \phi(q,\bm{\omega})|^{2p}\right]^{\frac{1}{2}}\mathbb{E}_{\mathbb{W}}\left[\left(\int^1_0 \text{d}q\,|\delta x(q)|\|\bm{r}_{2}(q,\bm{\omega})\|^2\right)^{2p}\right]^{\frac{1}{2}}\\
\leq 4^{p-1}a^p_{2p}\,e^c K^p_{2p}\|\delta x\|_{\infty}^{2p}
\end{multlined}
\end{equation}
\begin{equation}
\begin{multlined}
\text{IV}=4^{p-1}\mathbb{E}_{\mathbb{W}}\left[\left|\int^1_0 \text{d}q\,x^{(1)}(q)\delta\phi(q,\bm{\omega})\left(\,\bm{r}_{1}(q,\bm{\omega})+\bm{r}_{2}(q,\bm{\omega})\,\right)\cdot\delta \bm{r}(q,\bm{\omega})\right|^{p}\right]\\
\leq 4^{p-1}\mathbb{E}_{\mathbb{W}}\left[\underset{q\in[0,1]}{\sup}|\delta \phi(q,\bm{\omega})|^{p}\left|\int^1_0 \text{d}q\,x^{(1)}(q)\left(\,\bm{r}_{1}(q,\bm{\omega})+\bm{r}_{2}(q,\bm{\omega})\,\right)\cdot\delta \bm{r}(q,\bm{\omega})\right|^{p}\right]\\
\leq 4^{p-1}e^{\frac{ c}{2}}K^{p/2}_{2p}a_{4p}^{p}\|\delta x\|^{p}_{\infty}\mathbb{E}_{\mathbb{W}}\left[\left(\int^1_0 \text{d}q\,\|\delta \bm{r}(q,\bm{\omega})\|^2\right)^{p}\right]^{1/2}
\end{multlined}
\end{equation}
If we set
\begin{equation}
X=\mathbb{E}_{\mathbb{W}}\left[\left(\int^1_0\text{d}q \|\delta\bm{r}(q,\bm{\omega})\|^2\right)^{p}\right]^{1/2}
\end{equation}
and combine the above inequalities in \label{inequality_r}, the inequality \label{inequality_r} has the form:
\begin{equation}
\label{inequality_2}
X^2\leq \alpha_p \|\delta x\|_{\infty}^p X+\beta_p\|\delta x\|_{\infty}^{2p}
\end{equation}
where $\alpha_p$ and $\beta_p$ are two positive constants that depends only on $p$. That implies that
\begin{equation}
X\leq \,\frac{1}{2}\left(\alpha_p+\sqrt{\alpha^2_p+4\beta_p\,}\right)\,\|\delta x\|_{\infty}^{p}.
\end{equation}
and the proof is ended.
\end{proof}
We now present the main result of this paragraph.
\begin{theorem}
\label{convergence_result}
Given a POP $x\in\chi$, consider a sequence of piecewise constant POPs $(x^{(k)})\subset \chi^{\circ}$, where
\begin{equation}
\|x^{(k)}-x\|_{\infty}\leq 2^{-k}
\end{equation}
The sequence of the solutions $\left(\,(\phi_{x^{(k)}},\bm{r}_{x^{(k)}})\,\right)$ converges almost surely and in $H^p_{[0,1]}(\Omega) \times H^p_{[0,1]}(\Omega)$ norm to a pair $(\phi,\bm{r})$ that is a solution of the auxiliary stationary equation \eqref{selfEq1Aux} corresponding to the POP $x$.
\end{theorem}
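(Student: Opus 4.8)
The plan is to realise $(\phi,\bm{r})$ as the $k\to\infty$ limit of the discrete-RSB solutions $(\phi_{x^{(k)}},\bm{r}_{x^{(k)}})$ constructed above, using the Lipschitz estimate of Theorem~\ref{uniform_convergence_theo} to obtain a Cauchy sequence and Propositions~\ref{inequality_prop_theo}, \ref{trivial_bb} and~\ref{bound} to keep every relevant bound uniform in $k$. First I would observe that for $k<l$ one has $\|x^{(k)}-x^{(l)}\|_{\infty}\le\|x^{(k)}-x\|_{\infty}+\|x-x^{(l)}\|_{\infty}\le 2^{-k}+2^{-l}$, so Theorem~\ref{uniform_convergence_theo} gives, for every $p>1$ and with a constant $K_p$ independent of $\bm{\omega}(0)$,
\begin{gather*}
\mathbb{E}_{\mathbb{W}}\Big[\sup_{q\in[0,1]}\big|\phi_{x^{(k)}}(q,\bm{\omega})-\phi_{x^{(l)}}(q,\bm{\omega})\big|^{p}\Big]^{1/p}\le K_p(2^{-k}+2^{-l}),\\
\|\bm{r}_{x^{(k)}}-\bm{r}_{x^{(l)}}\|_{2,p}\le K_p(2^{-k}+2^{-l}).
\end{gather*}
Hence $(\phi_{x^{(k)}})$ is Cauchy in the complete space $S^p_{[0,1]}(\Omega)$ and $(\bm{r}_{x^{(k)}})$ is Cauchy in $H^p_{[0,1]}(\Omega)$, so they converge in these norms to limits $\phi\in S^p_{[0,1]}(\Omega)$ and $\bm{r}\in H^p_{[0,1]}(\Omega)$ (for $\phi$, a fortiori in $\|\cdot\|_{2,p}$, the mode appearing in the statement). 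Since the increments decay geometrically, $\sum_k\mathbb{E}_{\mathbb{W}}[\sup_q|\phi_{x^{(k+1)}}-\phi_{x^{(k)}}|^{p}]<\infty$ and likewise for $\bm{r}$, so a standard Borel--Cantelli/Tonelli argument applied to $\sum_k\sup_q|\phi_{x^{(k+1)}}-\phi_{x^{(k)}}|$ and to $\sum_k(\int_0^1\|\bm{r}_{x^{(k+1)}}-\bm{r}_{x^{(k)}}\|^2\,\text{d}q)^{1/2}$ upgrades both convergences to almost sure, uniformly in $q$ for $\phi$.

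The second step is to pass to the limit in the stationary equation, each $(\phi_{x^{(k)}},\bm{r}_{x^{(k)}})$ satisfying
\[
\phi_{x^{(k)}}(q,\bm{\omega})=\Psi(1,\bm{\omega})-\int_q^1\bm{r}_{x^{(k)}}(q',\bm{\omega})\cdot\text{d}\bm{\omega}(q')+\frac{1}{2}\int_q^1\text{d}q'\,x^{(k)}(q')\,\|\bm{r}_{x^{(k)}}(q',\bm{\omega})\|^{2}.
\]
The left side and the first right-hand term converge (a.s.\ along a subsequence and in $L^p$), the stochastic integral by the It\^o isometry together with the Burkholder--Davis--Gundy inequality for $p\ge2$. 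For the quadratic drift I would split
\[
x^{(k)}\|\bm{r}_{x^{(k)}}\|^{2}-x\|\bm{r}\|^{2}=(x^{(k)}-x)\,\|\bm{r}_{x^{(k)}}\|^{2}+x\,\big(\bm{r}_{x^{(k)}}-\bm{r}\big)\cdot\big(\bm{r}_{x^{(k)}}+\bm{r}\big),
\]
the first summand being controlled in $L^{p/2}(\mathbb{W})$ by $\|x^{(k)}-x\|_{\infty}\le 2^{-k}$ times $\int_q^1\|\bm{r}_{x^{(k)}}\|^{2}$, uniformly bounded in $k$ by the $\mathbb{E}[\,\cdot\,]$ estimate of Proposition~\ref{bound}, and the second by $x\le 1$, the Cauchy--Schwarz inequality and the uniform $H^p$-bounds on $\bm{r}_{x^{(k)}}$ and $\bm{r}$. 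Passing to the limit shows that $(\phi,\bm{r})$ satisfies~\eqref{selfEq1Aux} with the POP $x$.

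It remains to check that the limit is an admissible solution pair. From Proposition~\ref{inequality_prop_theo}, $|\phi_{x^{(k)}}(q,\bm{\omega})|\le c$ almost surely for every $k$, so almost-sure convergence gives $|\phi(q,\bm{\omega})|\le c$ a.s., which together with $\phi\in S^p_{[0,1]}(\Omega)$ places $\phi$ in the required class. For $\bm{r}$, Proposition~\ref{trivial_bb} gives $|\zeta(\bm{r}_{x^{(k)}},x^{(k)};1,\bm{\omega})|\le 2c$ a.s.\ uniformly in $k$; writing $\zeta(\bm{r}_{x^{(k)}},x^{(k)};1,\cdot)$ as a stochastic integral plus a drift and repeating the estimates above, now with $x^{(k)}\bm{r}_{x^{(k)}}\to x\bm{r}$ and $(x^{(k)})^{2}\|\bm{r}_{x^{(k)}}\|^{2}\to x^{2}\|\bm{r}\|^{2}$, one obtains $\zeta(\bm{r}_{x^{(k)}},x^{(k)};1,\cdot)\to\zeta(\bm{r},x;1,\cdot)$ in $L^{p/2}(\mathbb{W})$, hence a.s.\ along a subsequence, so that $|\zeta(\bm{r},x;1,\bm{\omega})|\le 2c$ a.s. Thus $\bm{r}\in\widehat{D}_{[0,1]}(\Omega)$, it determines an element of $D_{[0,1]}(\Omega)$, and $\mathcal{E}(x\bm{r})$ is a true martingale; in particular this also completes the existence half of Theorem~\ref{existence_uniqueness}, and by the uniqueness already proved the limit pair does not depend on the approximating sequence $(x^{(k)})$.

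The main obstacle I anticipate is precisely the passage to the limit in the quadratic quantities $\int_q^1 x^{(k)}\|\bm{r}_{x^{(k)}}\|^{2}$ and $\zeta(\bm{r}_{x^{(k)}},x^{(k)};1,\cdot)$, where linearity is not available: the argument closes only because Proposition~\ref{bound} supplies $H^p$-bounds on $\bm{r}_{x^{(k)}}$ that are uniform over \emph{all} $x\in\chi^{\circ}$, and valid under both $\mathbb{W}$ and the changed measures $\widetilde{\mathbb{W}}_{x^{(k)}\bm{r}_{x^{(k)}}}$, which furnishes exactly the uniform integrability needed to interchange the limit with the $k$-dependent changes of measure. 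Once Theorem~\ref{uniform_convergence_theo} and Proposition~\ref{bound} are in hand, the remaining steps are soft functional analysis.
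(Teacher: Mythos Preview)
Your proposal is correct and follows essentially the same route as the paper: use the Lipschitz estimate of Theorem~\ref{uniform_convergence_theo} to get a Cauchy sequence in $S^p_{[0,1]}(\Omega)\times H^p_{[0,1]}(\Omega)$, upgrade to almost-sure convergence via summability of the increments and Borel--Cantelli, then pass to the limit in the BSDE by treating the stochastic integral with BDG and the quadratic drift by the split $(x^{(k)}-x)\|\bm{r}_{x^{(k)}}\|^2+x(\bm{r}_{x^{(k)}}-\bm{r})\cdot(\bm{r}_{x^{(k)}}+\bm{r})$ together with the uniform $H^p$-bounds of Proposition~\ref{bound}. Your treatment is in fact slightly more careful than the paper's in two respects: you use the correct bilinear splitting of $\|\bm{r}_{x^{(k)}}\|^2-\|\bm{r}\|^2$ (the paper's displayed bound on $\mathbb{E}[F_k^p]$ replaces this by $\|\bm{r}_{x^{(k)}}-\bm{r}\|^2$, which is not the same quantity), and you explicitly verify that the limit $\bm{r}$ inherits the bound $|\zeta(\bm{r},x;1,\cdot)|\le 2c$ and hence lies in $D_{[0,1]}(\Omega)$, a point the paper does not spell out.
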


\begin{proof}
Let us consider the sequence of pairs of non-negative random variables $\left(\,(U_{k},V_{k})\,\right)$, where
\begin{equation}
U_k=\underset{q\in[0,1]}{\sup}\left|\phi_{x^{(k+1)}}(q,\bm{\omega})-\phi_{x^{(k)}}(q,\bm{\omega})\right|
\end{equation}
and
\begin{equation}
V_k=\int^1_0 \text{d}q\|\bm{r}_{x^{(k+1)}}(q,\bm{\omega})-\bm{r}_{x^{(k)}}(q,\bm{\omega})\|^2.
\end{equation}
Theorem \eqref{uniform_convergence_theo} yields:
\begin{equation}
\mathbb{E}_{\mathbb{W}}\big[ U^p_k\big]\leq a_p2^{-p k},\quad
\mathbb{E}_{\mathbb{W}}\big[V^p_k\big]\leq b_p2^{-p k},
\end{equation}
consequently
\begin{equation}
\label{inequality_start}
\sum^{\infty}_{k=1}\mathbb{E}_{\mathbb{W}}\big[ U^p_k\big]\leq \infty,\quad \sum^{\infty}_{k=1}\mathbb{E}_{\mathbb{W}}\big[V^p_k\big]\leq \infty.
\end{equation}
from which it is straightforward to obtain that the sequence $\left(\,(\phi_{x^{(k)}},\bm{r}_{x^{(k)}})\,\right)$ converges in $S^p_{[0,1]}(\Omega) \times H^p_{[0,1]}(\Omega)$ to a pair $(\phi,\bm{r})$. Moreover, by Markov inequality and \eqref{inequality_start}, one gets
\begin{equation}
\mathbb{W}\left[\left\{\bm{\omega}\in\Omega;\,U_k>\epsilon \right\}\right]\leq \frac{\mathbb{E}\big[U^p_k\big]}{\epsilon^p}\leq a_p\left(\frac{1}{2^k\epsilon}\right)^p,
\end{equation}
and in the same way:
\begin{equation}
\mathbb{W}\left[\left\{\bm{\omega}\in\Omega;\,V_k>\epsilon \right\}\right]\leq b_p\left(\frac{1}{2^k\epsilon}\right)^p,
\end{equation}
where $\mathbb{W}[A]$ denotes the probability that the event $A$ occurs, according to the probability measure $\mathbb{W}$.

By Borel-Cantelli lemma \cite{Billingsley}, the above two Markov inequalities together with the convergence results in \eqref{inequality_start} imply that the sequence $\left((U^p_k,V_k^p)\right)$ converges almost surely to $(0,0)$ with respect the probability measure $\mathbb{W}$. In particular that implies that the sequence $\left(\,(\phi_{x^{(k)}},\bm{r}_{x^{(k)}})\,\right)$ converges almost surely to the pair $(\phi,\bm{r})$. Moreover, the definition of $(U_k)$ implies that $(\phi_{x^{(k)}})$ converges to $\phi$ almost surely uniformly in the interval $[0,1]$, so the process $\phi$ is continuous.

It remains to show that the pair $(\phi,\bm{r})$ is a solution of the equation \eqref{selfEq1Aux} corresponding to the POP $x$. Since the process $\bm{r}$ is in $ H^p_{[0,1]}(\Omega)$, for any $p\geq1$, then the It\^o integral $\int^1_q \text{d}\bm{\omega}(q')\cdot \bm{r}(q',\bm{\omega})$ and the integral $\int^1_q \text{d}q' x(q') \|\bm{r}(q',\bm{\omega})\|^2$ exist and are in $S_{[0,1]}^{q}(\Omega)$, for any $q\geq1$. Let
\begin{equation}
\text{I}_k(q;\bm{\omega})=\int^1_q \text{d}\bm{\omega}(q')\cdot \bm{r}_{x^{(k)}}(q',\bm{\omega}),\quad \text{II}_k(q,\bm{\omega})=\int^1_q \text{d}q' x^{(k)}(q') \|\bm{r}_{x^{(k)}}(q',\bm{\omega})\|^2
\end{equation}
and
\begin{equation}
\text{I}(q;\bm{\omega})=\int^1_q \text{d}\bm{\omega}(q')\cdot \bm{r}(q',\bm{\omega}),\quad \text{II}(q,\bm{\omega})=\int^1_q \text{d}q' x(q') \|\bm{r}(q',\bm{\omega})\|^2.
\end{equation}
Note that $\text{I}_k$, $\text{II}_k$, $\text{I}$ and $\text{II}$ are not adapted process, so we use the notation $(q;\bm{\omega})$ instead of $(q,\bm{\omega})$. We must prove the almost sure convergence of the sequences $(\text{I}_k)$ and $(\text{II}_k)$ to $\text{I}$ and $\text{II}$ respectively.

Let us define the following non-negative random variables
\begin{equation}
G_k=\underset{q\in[0,1]}{\sup}\left|\,\text{I}_k(q;\bm{\omega})-\text{I}(q;\bm{\omega})\,\right|=\underset{q\in[0,1]}{\sup}\left|\int^1_q \text{d}\bm{\omega}(q')\cdot \left(\bm{r}_{x^{(k)}}(q',\bm{\omega})-\bm{r}(q',\bm{\omega})\right)\right|
\end{equation}
and
\begin{equation}
\begin{multlined}
F_k=\underset{q\in[0,1]}{\sup}\left|\,\text{II}_k(q;\bm{\omega})-\text{II}(q;\bm{\omega})\,\right|\\=\underset{q\in[0,1]}{\sup}\left|\int^1_q \text{d}q'\left(x^{(k)}(q')\|\bm{r}_{x^{(k)}}(q',\bm{\omega})\|^2-x(q') \|\bm{r}(q',\bm{\omega})\|^2\right)\right|.
\end{multlined}
\end{equation}
By BDG inequality, there is a positive constant $C_p$, depending only on $p$, such as
\begin{equation}
\mathbb{E}_{\mathbb{W}}\left[G_k^p\right]\leq C_p \mathbb{E}_{\mathbb{W}}\left[V_k^{p/2}\right]\leq C_p b_p 2^{-kp}.
\end{equation}
Moreover, the inequality \eqref{bound_rr} yields
\begin{equation}
\begin{multlined}
\mathbb{E}_{\mathbb{W}}\left[F_k^p\right]\\\leq \|x^{(k)}-x\|^{p}_{\infty}\mathbb{E}_{\mathbb{W}}\left[\left(\int^1_q \text{d}q' \|\bm{r}^{(k)}(q',\bm{\omega})\|^2\right)^p\right]\\+\mathbb{E}_{\mathbb{W}}\left[\left(\int^1_q \text{d}q' x(q') \|\bm{r}^{(k)}(q',\bm{\omega})-\bm{r}(q',\bm{\omega})\|^2\right)^p\right]\\
\leq 2^{-k} e^{2c}(2c)^2+\mathbb{E}_{\mathbb{W}}\left[V_k^{p/2}\right]\leq 2 ^{-k} c_p
\end{multlined}
\end{equation}
where $c_p$ is a positive constant depending only on $p$. As for the sequences $(U_k)$ and $(V_k)$, the above two inequalities imply that the sequences $(G_k)$ and $(F_k)$ converge in $L^p(\mathbb{W},\Omega)$ and almost surely to $0$, so the random variables $(\text{I}_k)$ and $(\text{II}_k)$ converge almost surely uniformly in $q\in [0,1]$ to $\text{I}$ and $\text{II}$ respectively, and the proof is ended.
\end{proof}
Finally, we end this section with the following obvious, but important result
\begin{theorem}
\label{extending_theo}
The propositions \ref{inequality_prop_theo}, \ref{bound}, \ref{derivative_theo} and \ref{uniform_convergence_theo} hold for all the allowable POP $x\in \chi$. In particular, we have
\begin{equation}
\label{derivative_formula_on_X}
\frac{\delta\phi(x;\,q',\bm{\omega}) }{\delta x(q)}=\theta(q-q')\widetilde{\mathbb{E}}_{x\bm{r}}\left[ \|\bm{r}(q,\bm{\omega})\|^2\big|\mathcal{F}_{q'}\right].
\end{equation}
\end{theorem}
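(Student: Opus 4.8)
The plan is to transport each of the four cited results from the piecewise–constant class $\chi^{\circ}$ to an arbitrary $x\in\chi$ along the approximating sequences of Theorem \ref{convergence_result}, passing the POP–uniform estimates to the limit with Fatou's lemma and the identities with dominated convergence, the recurring domination being the two–sided bound $e^{-2c}\le\mathcal E(x\bm r)\le e^{2c}$. Fix $x\in\chi$ and $(x^{(k)})\subset\chi^{\circ}$ with $\|x^{(k)}-x\|_{\infty}\le2^{-k}$, and let $(\phi,\bm r)=(\phi(x),\bm r(x))$ be the limit of Theorem \ref{convergence_result}; recall the convergence is in $S^p_{[0,1]}(\Omega)\times H^p_{[0,1]}(\Omega)$, almost surely, and a.s.\ uniformly in $q$ for the $\phi$–component. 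First I would transport Proposition \ref{trivial_bb}: since $x^{(k)}\bm r(x^{(k)})\to x\bm r$ in $H^2_{[0,1]}(\Omega)$, the It\^o integral $\int_0^1 x^{(k)}\bm r(x^{(k)})\cdot d\bm\omega$ and its bounded–variation companion converge in $L^2(\mathbb W)$, hence a.s.\ along a subsequence, so $\zeta(\bm r(x^{(k)}),x^{(k)};1,\cdot)\to\zeta(\bm r,x;1,\cdot)$ a.s.; the uniform bound $|\zeta(\bm r(x^{(k)}),x^{(k)};1,\bm\omega)|\le2c$ then forces $|\zeta(\bm r,x;1,\bm\omega)|\le2c$ a.s., i.e.\ $\bm r\in\widehat D_{[0,1]}(\Omega)$ with $C_{\bm r}\le 2c$ and $e^{-2c}\le\mathcal E(x\bm r)\le e^{2c}$ (this is what makes $\widetilde{\mathbb E}_{x\bm r}$ meaningful in the remaining statements). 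Proposition \ref{inequality_prop_theo} is then immediate: $\sup_q|\phi(x^{(k)};q,\bm\omega)|\le c$ a.s.\ for every $k$ and $\phi(x^{(k)})\to\phi$ a.s.\ uniformly in $q$, so $\|\phi\|_{\infty,p}\le c$.

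For Proposition \ref{bound} I would rewrite the $\widetilde{\mathbb E}_{x\bm r}$–estimate as an unconditioned expectation against the density $\mathcal E(x\bm r;1,\cdot)$: along a subsequence $\mathcal E(x^{(k)}\bm r(x^{(k)});1,\cdot)\to\mathcal E(x\bm r;1,\cdot)$ and $\int_q^1\|\bm r(x^{(k)})\|^2\to\int_q^1\|\bm r\|^2$ a.s., so Fatou's lemma gives $\widetilde{\mathbb E}_{x\bm r}\big[(\int_q^1\|\bm r\|^2)^{p/2}\big]=\mathbb E\big[\mathcal E(x\bm r;1,\cdot)(\int_q^1\|\bm r\|^2)^{p/2}\big]\le\liminf_k\widetilde{\mathbb E}_{x^{(k)}\bm r(x^{(k)})}\big[(\int_q^1\|\bm r(x^{(k)})\|^2)^{p/2}\big]\le K_p$, and \eqref{bound_rr_2} follows from $\mathcal E(x\bm r)\le e^{2c}$. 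This extended bound, together with the density bound, is what supplies the uniform integrability in every limit below. For the extended Proposition \ref{uniform_convergence_theo}, given $x^{(1)},x^{(2)}\in\chi$ I would pick $\chi^{\circ}$–approximants $x^{(j),k}\to x^{(j)}$ with error $\le2^{-k}$, apply Theorem \ref{uniform_convergence_theo} on $\chi^{\circ}$ to $(x^{(1),k},x^{(2),k})$ with $\|x^{(1),k}-x^{(2),k}\|_\infty\le\|x^{(1)}-x^{(2)}\|_\infty+2^{1-k}$, and pass $k\to\infty$: $\sup_q|\phi(x^{(2),k};q,\bm\omega)-\phi(x^{(1),k};q,\bm\omega)|\to\sup_q|\phi(x^{(2)};q,\bm\omega)-\phi(x^{(1)};q,\bm\omega)|$ a.s.\ by a.s.\ uniform convergence, so Fatou gives the bound with the same $K_p$; the $\bm r$–estimate is identical from the $H^p$–convergence. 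A corollary worth recording for the next step: for $x^{(t)}=(1-t)x^{(0)}+tx^{(1)}$ with $x^{(0)},x^{(1)}\in\chi$, the maps $t\mapsto\phi^{(t)}$, $t\mapsto\bm r^{(t)}$ are Lipschitz in the $S^p$– and $H^p$–norms and $t\mapsto\mathcal E(x^{(t)}\bm r^{(t)};1,\cdot)$ is continuous in $L^1$.

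Proposition \ref{derivative_theo} is the delicate step. For $x^{(0)},x^{(1)}\in\chi$ and their convex interpolant $x^{(t)}$ I would fix $\chi^{\circ}$–approximants $x^{(0),k},x^{(1),k}$, note that $x^{(t),k}:=(1-t)x^{(0),k}+tx^{(1),k}\in\chi^{\circ}$ for every $t\in[0,1]$ (a convex combination of increasing step functions that are $>0$ on all of $(0,1]$) with $\|x^{(t),k}-x^{(t)}\|_\infty\le2^{-k}$ uniformly in $t$, and integrate the $\chi^{\circ}$–formula \eqref{first_derivative} over $t$:
\[
\phi(x^{(1),k};q,\bm\omega)-\phi(x^{(0),k};q,\bm\omega)=\tfrac12\int_0^1\!dt\;\widetilde{\mathbb E}_{x^{(t),k}\bm r^{(t),k}}\!\Big[\int_q^1\!\delta x^{(k)}(p)\,\|\bm r^{(t),k}(p,\bm\omega)\|^2\,dp\,\Big|\mathcal F_q\Big],
\]
with $\delta x^{(k)}=x^{(1),k}-x^{(0),k}$. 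The left side converges a.s.\ to $\phi(x^{(1)};q,\bm\omega)-\phi(x^{(0)};q,\bm\omega)$. For the right side, for a.e.\ fixed $t$ one has $\bm r^{(t),k}\to\bm r^{(t)}$ in $H^2_{[0,1]}(\Omega)$ and $\mathcal E(x^{(t),k}\bm r^{(t),k};1,\cdot)\to\mathcal E(x^{(t)}\bm r^{(t)};1,\cdot)$ a.s.; splitting $\delta x^{(k)}\|\bm r^{(t),k}\|^2-\delta x\,\|\bm r^{(t)}\|^2$ in the usual way and using Cauchy--Schwarz with the POP–uniform $H^p$–bounds shows $\mathcal E(x^{(t),k}\bm r^{(t),k};1,\cdot|q)\int_q^1\delta x^{(k)}\|\bm r^{(t),k}\|^2\to\mathcal E(x^{(t)}\bm r^{(t)};1,\cdot|q)\int_q^1\delta x\,\|\bm r^{(t)}\|^2$ in $L^1(\mathbb W)$, so the conditional expectations converge in $L^1$; since the $t$–integrands are uniformly bounded (the remark following Proposition \ref{derivative_theo} gives $\partial_t\phi^{(t),k}$ a.s.\ bounded), dominated convergence in $t$ yields the same identity with $k$–free data. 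Applying it with $x^{(0)},x^{(1)}$ replaced by $x^{(s_1)},x^{(s_2)}$ shows $s\mapsto\phi^{(s)}(q,\bm\omega)$ is absolutely continuous with a.e.\ derivative $\tfrac12\widetilde{\mathbb E}_{x^{(s)}\bm r^{(s)}}[\int_q^1\delta x\,\|\bm r^{(s)}\|^2|\mathcal F_q]$; the continuity in $s$ of $\bm r^{(s)}$ and of the density from the corollary above makes this a.e.\ derivative continuous in $s$, hence $\phi^{(s)}$ is everywhere differentiable and \eqref{first_derivative} holds on $\chi$. Finally, \eqref{derivative_formula_on_X} is the infinitesimal form of \eqref{first_derivative}: letting $\delta x$ range over admissible directions and moving $\widetilde{\mathbb E}_{x\bm r}[\cdot|\mathcal F_{q'}]$ inside $\int_{q'}^1 dq$ by the conditional Fubini theorem produces the pointwise–in–$q$ kernel $\theta(q-q')\,\widetilde{\mathbb E}_{x\bm r}[\|\bm r(q,\bm\omega)\|^2|\mathcal F_{q'}]$.

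The hard part is precisely this last step: one must interchange $k\to\infty$ simultaneously with the time–integral $\int_0^1 dt$ and with a conditional expectation $\widetilde{\mathbb E}_{x^{(t),k}\bm r^{(t),k}}[\cdot|\mathcal F_q]$ taken under a measure that itself depends on $k$ (and on $t$), and then upgrade a.e.–in–$t$ differentiability to genuine differentiability. Both are ultimately controlled by the POP–uniform $H^p$–bound of the extended Proposition \ref{bound} and the two–sided density bound $e^{-2c}\le\mathcal E(x\bm r)\le e^{2c}$, but the uniform–integrability bookkeeping across the varying measures $\widetilde{\mathbb W}_{x^{(t),k}\bm r^{(t),k}}$ is where the care lies.
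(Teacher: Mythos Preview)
Your proposal is correct and follows the same approach the paper intends: extend each estimate from $\chi^{\circ}$ to $\chi$ by approximation along the sequences of Theorem \ref{convergence_result}. The paper's own proof is a single sentence (``a straightforward consequence of the convergence result of Theorem \ref{convergence_result}''), so you have in fact supplied substantially more detail than the paper does---in particular the Fatou/dominated-convergence bookkeeping for Propositions \ref{bound} and \ref{derivative_theo}, and the observation that the two-sided density bound $e^{-2c}\le\mathcal E(x\bm r)\le e^{2c}$ survives the limit and is what makes all the $\widetilde{\mathbb E}_{x\bm r}$--statements meaningful for general $x\in\chi$.
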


This result is a straightforward consequence of the convergence result of Theorem \ref{convergence_result}.

By corollary \ref{GlobalMinimumC}, the above theorem provides some important properties on the dependence of the Non-Markov RSB expectation $\Sigma(\Psi,x)$ defined in \eqref{definitions}. 
\section{Conclusion}
This manuscript expands the results presented in \cite{MyPaper} on the full-RSB variational free energy of the Ising spin glass on random regular graphs. In particular, it provides a detailed mathematical analysis of the so-called auxiliary variational problem.

We prove that the solution of the auxiliary variational problem is the solution of a proper backward stochastic differential equation. Finally, we provide the existence and uniqueness results for such an equation.

Remarkably, for a proper choice of the physical order parameters, the full-RSB free energy is equivalent to the discrete-RSB free energy, presented in \cite{MyPaper}.

\end{document}